\keywords{functional languages, logical relations, denotational semantics, bar recursion}
\newcommand{\datatype}{\mathtt{datatype}}
\newcommand{\constructor}{\mathtt{cons}}
\newcommand{\func}{\mathtt{func}}
\newcommand{\ar}{\mathrm{ar}}
\newcommand{\csum}[3]{\{{#1},{#2},{#3}\}}
\newcommand{\cplus}[1]{{#1}_+}
\newcommand{\nat}{\mathtt{Nat}}
\newcommand{\zero}{\mathtt{0}}
\newcommand{\suc}{\mathtt{s}}
\newcommand{\rec}{\mathtt{rec}}
\newcommand{\fold}{\mathtt{fold}}
\newcommand{\spec}{\mathtt{spec}}
\newcommand{\sbar}{\mathtt{bar}}
\newcommand{\emp}{\varepsilon}
\newcommand{\cons}{\; \mathtt{::} \;}
\newcommand{\len}{\mathtt{len}}
\newcommand{\ext}{\mathtt{ext}}
\newcommand{\num}[1]{\mathbf{#1}}
\newcommand{\sm}[1]{\bt{#1}}
\newcommand{\dss}[1]{\overline{#1}}
\newcommand{\ds}[1]{\llbracket {#1} \rrbracket}
\newcommand{\bt}[1]{\langle {#1} \rangle}
\newcommand{\mt}[1]{|{#1}|}
\newcommand{\initSeg}[2]{[{#1}]({#2})}
\newcommand{\btl}{\blacktriangleleft}
\newcommand{\wtl}{\lhd}
\newcommand{\bnd}{\btl}
\def\NN{\mathbb{N}}
\newcommand{\modset}{\mathcal{S}^\omega}
\newcommand{\modcont}{\mathcal{C}^\omega}
\newcommand{\inc}{\mathrm{inc}}
\newcommand{\comp}{\mathrm{com}}
\newcommand{\cl}[1]{\mathtt{Cl}_{#1}}
\newcommand{\val}[1]{\mathtt{Val}_{#1}}
\newcommand{\red}{R}
\newcommand{\vred}{R^{\scriptsize{\mathrm{val}}}}
\newcommand{\bP}{\mathcal{P}}
\newcommand{\mlambda}{%
  \mathop{%
    \rlap{$\lambda$}%
    \mkern2mu
    \raisebox{.275ex}{$\lambda$}%
  }%
}
\newcommand{\dplus}{%
  \mathbin{{+}\mspace{-8mu}{+}}%
}
\newcommand{\maj}[1]{ \; \mathrm{maj}_{#1} \; }
\theoremstyle{plain} 
\begin{document}

\title{A unifying framework for continuity and complexity in higher types}

\author[T.~Powell]{Thomas Powell}	
\address{Department of Computer Science, University of Bath, Bath, BA2 7AY, United Kingdom}	
\email{trjp20@bath.ac.uk}  







\begin{abstract}
We set up a parametrised monadic translation for a class of call-by-value functional languages, and prove a corresponding soundness theorem. We then present a series of concrete instantiations of our translation, demonstrating that a number of fundamental notions concerning higher-order computation, including termination, continuity and complexity, can all be subsumed into our framework. Our main goal is to provide a unifying scheme which brings together several concepts which are often treated separately in the literature. However, as a by-product, we also obtain (i) a method for extracting moduli of continuity for closed functionals of type $(\nat\to\nat)\to\nat$ in (extensions of) System T, and (ii) a characterisation of the time complexity of bar recursion. 
\end{abstract}

\maketitle

\section{Introduction}\label{sec-intro}

Monads are a fundamental tool for analysing functional programs \cite{Moggi(1991.0),Wadler(1992.0)}. They allow us to capture information about a program's execution, such as its \emph{computational complexity}, and similarly enable us to reason about intensional aspects of higher-order functionals, such as \emph{continuity} properties enjoyed by terms of System T.

This paper comprises a general study of monads and their application to higher-order functional languages, with an emphasis on languages which pertain to proof theory and program extraction. We focus on a simple yet powerful technique frequently encountered in the literature, which can be roughly described as follows:
\begin{enumerate}[(I)]

\item set up a mapping $\mt{\cdot}$ which transforms terms $e$ of type $\rho$ in some pure functional language into terms $\mt{e}$ of type $\mt{\rho}$ in some monadic metalanguage,

\item establish a logical relation $\bnd$ between $\rho$ and $\mt{\rho}$,

\item appeal to induction over the structure of terms to prove that $e\bnd\mt{e}$ for all terms $e$ of type $\rho$.

\end{enumerate}
We explore an abstract translation of this kind, which maps a higher-order call-by-value target language into a metalanguage based on the \emph{writer} monad $C\times -$. Our mapping will actually be the composition of a syntactic monadic translation $\mt{\cdot}$ together with a standard denotational semantics $\ds{\cdot}$, and we define a general logical relation $\bnd$ on $\rho\times\ds{\mt{\rho}}$ which comprises a monadic part $\Downarrow$ together with a semantic bounding relation $\wtl$. 

Our first main result is to establish a general soundness theorem for $\bnd$, which identifies a set of abstract conditions that guarantee that $e\bnd \ds{\mt{e}}$ for all terms of our target language. We then demonstrate that this combination of a monadic and bounding component is rich enough to express a variety of fundamental concepts from the theory of higher-order computation, including \emph{majorizability} (in the sense of Howard \cite{Howard(1973.0)}), \emph{continuity} (on the Baire space) and several notions of \emph{time complexity}.

In the case of continuity, we focus on terms of type level 2 that induce functionals $\NN^\NN\to\NN$. It is well known that functionals of this sort that are definable as closed terms of G\"{o}del's System T are continuous in the sense that their output is determined by a finite part of their input. Moreover, explicit \emph{moduli} of continuity can be defined within System T itself. In recent years there has been a renewed interest in syntactic approaches to continuity which provide explicit moduli of continuity, and several new proofs of the continuity of System T definable functionals have been given, including \cite{CoqJab(2012.0),Escardo(2013.0),Hernest(2007.1)} and most recently \cite{Kawai(2019.0),BicRah(2018.0),Xu(2020.0)}. We provide another proof of this fact via our abstract framework, which applies not just to System T but to arbitrary languages that satisfy the relevant conditions.

We then show that by adjusting the parameters of our translation, we are able to characterise the time complexity, or \emph{cost}, of terms in our programming language. Here, we crucially consider the cost of a higher-order term to be a higher-order object itself. Though the analysis of higher-order complexity via higher-order cost expressions dates back to the 1980s and has been widely researched since (e.g.~\cite{Benzinger(2004.0),Sands(1990.0),Shultis(1985.0),VanStone(2003.0)}), our general bounding relation $\wtl$ allows us to also approach complexity along the lines of Danner et al.~\cite{DanLicRam(2015.0),DanPayRoy(2013.0)}, where datatypes are assigned abstract sizes and the translation seeks to provide upper bounds on the cost of programs.

The main significance of our work lies in our development an abstract semantics for reasoning about intensional properties of higher-order functionals, which encompasses a number of somewhat disparate concepts, ranging from majorizability, which is of fundamental importance to the proof mining program (Section \ref{sec-simple-maj}), to the static cost analysis of functional programs (Section \ref{sec-cost}). This effort towards unification is not just of theoretical interest, but could potentially inform those working on the \emph{formalization} of monadic translations\footnote{For example, in the case of continuity, both \cite{Escardo(2013.0)} and \cite{Xu(2020.0)} have been formalised in Agda, while bounded complexity as set out in \cite{DanPayRoy(2013.0)} comes with a Coq implementation}.

A second novelty is that we focus not only on variants of G\"{o}del's System T, but also more complex forms of recursion such as Spector's \emph{bar recursion} \cite{Spector(1962.0)}. Recursion of this kind is rarely treated in the context of static analysis, but has great importance in proof theory and particularly the area of program extraction, where it is used to give a computational interpretation to the axiom of countable choice. We carry out what is, to the best of our knowledge, the first static cost analysis of bar recursion as functional programs, and in doing so we hope more generally to provide another illustration of how monads enable us to better understand structures from proof theory.

\section{A higher-order functional language}
\label{sec-prog}

We start by outlining our target language, to which our monadic translation will be applied. This will be a standard call-by-value typed functional language, where for now we do not specify what our datatypes or function symbols will be. Rather we take these as parameters, and consider a number of concrete instantiations of the language later on.

Types and terms of the language are outlined in Figure \ref{fig-typeterms}. Types are built from a base set of datatypes $\datatype$, and a single rule which allows for the construction of function types. Terms are then built in the usual way via lambda abstraction and function application from a countable set of variables $x:\rho$ for each type together with a set of constructor terms $\constructor$ and a set of function symbols $\func$. Each constructor term $c$ is assigned some type $\delta_1\to\ldots\to\delta_k\to\delta$, where the $\delta_i$ are datatypes and $k=\ar(c)$ is the arity of $c$. Similarly, each function symbol $f$ is assigned a type $\rho_1\to\ldots\to\rho_k\to\rho$, where now the $\rho_i,\rho$ are arbitrary and $k=\ar(f)$ is a specified arity of $f$. Formal typing rules $\Gamma\vdash t:\rho$ for terms are included, though we often just write $t:\rho$ or $t^\rho$ where this is unambiguous and the context is not necessary. A closed term $e$ is a term without any free variables, or alternatively one typeable as $\emptyset\vdash e:\rho$.

\begin{figure}[t]
\begin{gather*}\mbox{Types:} \ \ \ \rho,\tau::=\delta\; | \; \rho\to\tau \ \ \ \mbox{for} \ \ \ \delta\in\datatype\\
\mbox{Terms:} \ \ \ s,t::=x\; | \; c \; | \; f \; | \; \lambda x.t \; | \; ts \ \ \ \mbox{for} \ \ \ c\in\constructor,f\in\func \\[3mm] \Gamma,x:\rho\vdash x:\rho \ \ \ \ \ \ \frac{\Gamma,x:\rho\vdash t:\tau}{\Gamma\vdash \lambda x.t:\rho\to\tau} \ \ \ \ \ \  \frac{\Gamma\vdash t:\rho\to\tau \ \ \ \ \Gamma\vdash s:\rho}{\Gamma\vdash ts:\tau}\\[2mm]
\Gamma\vdash c:\delta_1\to\ldots\to\delta_m\to \delta \ \ \ \ \ \ \Gamma\vdash f:\rho_1\to\ldots\to\rho_n\to \rho
\end{gather*}
\caption{Types and terms of target language}
\label{fig-typeterms}
\hrulefill
\end{figure}

In order to specify the operational semantics of our language, we introduce standard notions of \emph{patterns} and \emph{values}, which are defined as in Figure \ref{fig-patval}. Note that values are always closed terms - in the third clause this is ensured by the typing restriction that $r$ has no free variables other than $x:\rho$. We also need the notion of a \emph{substitution}, and later that of a \emph{value environment}. For $\Gamma,x:\rho\vdash t:\tau$ and $\Gamma\vdash s:\rho$ we write $\Gamma\vdash t[s/x]$ to denote the term obtained by substituting all free occurrences of $x$ in $t$ by $s$ (with the usual restrictions on free variables in $s$). This is formally definable by induction on the structure of $t$. For a term $\Gamma\vdash t:\rho$ with $\Gamma:\equiv x_1:\rho_1,\ldots,x_n:\rho_n$, a value environment $\sigma$ is a mapping which assigns each variable $x_i:\rho_i$ a value $\sigma(x_i)$ of type $\rho_i$. We denote by $t\sigma:\rho$ the closed term $t[\sigma(x_1)/x_1,\ldots,\sigma(x_k)/x_k]$.

\begin{figure}[t]
\begin{equation*}
\begin{aligned}&\mbox{Patterns:} \ \ \ p,q::=x^\rho\; | \; cp_1\ldots p_k \ \ \ \mbox{for} \ \ \ k=\ar(c)\\
&\mbox{Values:} \ \ \ u,v::= c v_1\ldots v_i \; | \; fv_1\ldots v_j \; | \; \lambda x.r \ \ \ \mbox{for} \ \ \ i\leq \ar(c), \ j< \ar(f), \ x:\rho\vdash r \end{aligned}\end{equation*}
\caption{Patterns and values}
\label{fig-patval}
\hrulefill
\end{figure}

A big step operational semantics for our language is given in Figure \ref{fig-bigstep}.
\begin{figure}[t]
\begin{gather*}
v\downarrow_\rho v\mbox{ \ \ for values $v$} \\[2mm] 
\frac{r[u/x]\downarrow_\tau v}{(\lambda x.r)u\downarrow_\tau v}\mbox{ \ \ for $x:\rho\vdash r$ and $u:\rho$ a value} \\[2mm]
\frac{r\sigma\downarrow_\rho v}{fv_1\ldots v_n\downarrow_\rho v}\mbox{ \ \ for $v_1,\ldots,v_n=p_1\sigma,\ldots,p_n\sigma$ and $fp_1\ldots p_n\leadsto r$}\\[2mm]
 \frac{e\downarrow_{\rho\to\tau} u \ \ \ \ \ e'\downarrow_\rho v \ \ \ \ \ \ uv\downarrow_\tau w}{ee'\downarrow_\tau w}\mbox{ \ \ \ if one of $e$, $e'$ is not a value}
\end{gather*}
\caption{Operational semantics of target language}
\label{fig-bigstep}
\hrulefill
\end{figure}
Note that reductions are of the form $e\downarrow_\rho v$ where $e:\rho$ is a closed term and $v:\rho$ a value of the same type. We often omit the typing on $\downarrow_\rho$ when there is no risk of ambiguity. Reductions follow the usual rules for the call-by-value lambda calculus, together with a set of defining rules for each function symbol of the form
\begin{equation*}
fp_1\ldots p_k\leadsto r
\end{equation*} 
where $p_1,\ldots,p_k$ are patterns, $k=\ar(f)$ and $r$ is a term whose free variables are contained in those of $p_1,\ldots, p_k$. We assume that the rules defining each function symbol are complete and orthogonal, by which we mean that for each $f$ of arity $k$ and any values $v_1,\ldots,v_k$ of the appropriate type, there is exactly one rule $fp_1\ldots p_k\leadsto r$ such that $v_1,\ldots,v_n=p_1\sigma,\ldots,p_k\sigma$ for a suitable environment $\sigma$. Note that for any closed term $e$, at most one rule in Figure \ref{fig-bigstep} is applicable, and thus a simple induction over derivations proves that if $e\downarrow v$ and $e\downarrow v'$ then $v=v'$.

\begin{rem}
\label{rem-alt-target}
A more traditional presentation of our target language would have been to instead take the formation rule
\begin{equation*}
\frac{\Gamma\vdash t_1:\rho_1 \ \ \ \cdots \ \ \ \Gamma\vdash t_n:\rho_n}{\Gamma\vdash ft_1\ldots t_n:\rho} 
\end{equation*}
as primitive, together with an analogous rule for the constructors. Then values would simply be terms of the form $cv_1\ldots v_m$ (for $m=\ar(c)$) or $\lambda x.r$, and the semantics would be altered accordingly. However, we have instead chosen to take the function \emph{symbols} as primitive, as it is slightly more convenient for the purposes of setting up our monadic translation.
\end{rem}

We now give some concrete instantiations of our parametrised language, all of which will play a role later.

\subsection{G\"{o}del's System T (simple variant)}
\label{sec-prog-syst}

A call-by-value variant of System T is obtained in our setting by defining $\datatype:=\{\nat\}$, $\constructor:=\{\zero^\nat,\suc^{\nat\to\nat}\}$ and 
\begin{equation*}
\func:=\{\rec^{\rho\to(\nat\to\rho\to\rho)\to\nat\to\rho}_\rho\; : \; \mbox{$\rho$ a type}\}
\end{equation*}
where the recursor $\rec_\rho$ has defining equations
\begin{equation*}
\rec_\rho\; x\; y\; \zero\leadsto x \ \ \ \ \ \ \rec\; x \; y \; (\suc z)\leadsto y\; z\; (\rec_\rho \; x \; y \; z).
\end{equation*}
We write $\num{n}:=\suc^{(n)}(\zero)$ for the numeral representation of $n\in\NN$. In this language, one can show by induction on the size of $v$ that the only values of type $\nat$ are indeed those of the form $\num{n}$, a fact which we will freely use throughout. The operational semantics of $\rec_\rho$ can be concisely expressed via the following derived rules:
\begin{equation*}
\rec_\rho\; v_1\; v_2\; \zero\downarrow_\rho v_1 \ \ \ \ \ \ \frac{v_2\; \num{n}\downarrow_{\rho\to\rho}u \ \ \ \ \ \ \rec_\rho\; v_1\; v_2\; \num{n}\downarrow_\rho v \ \ \ \ \ \  uv\downarrow_\rho w}{\rec_\rho\; v_1\; v_2\; \suc(\num{n})\downarrow_\rho w}
\end{equation*}

\subsection{G\"{o}del's System T (list based variant)}
\label{sec-prog-lyst}
A slightly richer (but computationally equivalent) version of System T is obtained by defining $\datatype:=\{\nat,\nat^\ast\}$, $\constructor:=\{\zero^\nat,\suc^{\nat\to\nat},\emp^{\nat^\ast},\cons^{\nat^\ast\to\nat\to\nat^\ast}\}$ and
\begin{equation*}
\func:=\{\fold_\rho^{\rho\to (\nat\to\rho\to\rho)\to\nat^\ast\to \rho}\; : \; \mbox{$\rho$ a type}\}\cup\{+,\times,<,\len\}
\end{equation*}
where $+,\times,<$ have type $\nat\to\nat\to\nat$ and $\len$ has type $\nat^\ast\to\nat$. We adopt the usual convention of writing $s::t$ instead of $::\; s\; t$, and similarly for the operator symbols, and in addition we write $[s_1,\ldots,s_{k}]$ for $(\ldots (\varepsilon::s_1)::\dots )::s_{k}$. For $a=(a_1,\ldots,a_{k})\in\NN^\ast$ we write $\num{a}:=[\num{a}_1,\ldots,\num{a}_{k}]$. Again, a simple induction over the size of values establishes that any $v:\nat^\ast$ is of the form $v=\num{a}$ for some $a\in\NN^\ast$. Along with the usual defining equation for the fold function:
\begin{equation*}
\fold_\rho\; x\; y\; \emp\leadsto x \ \ \ \ \ \ \fold_\rho \; x\; y\; (zs\cons z)\leadsto y\; z\; (\fold_\rho\; x\; y\; zs) 
\end{equation*}
which gives rise to the derived rules
\begin{equation*}
\fold_\rho\; v_1\; v_2\; \varepsilon\downarrow_\rho v_1 \ \ \ \ \ \ \frac{v_2\; \num{n}\downarrow_{\rho\to\rho} u \ \ \ \ \ \ \fold_\rho\; v_1\; v_2\; \num{a}\downarrow_\rho v \ \ \ \ \ \ uv\downarrow_\rho w}{\fold_\rho\; v_1\; v_2\; (\num{a}::\num{n})\downarrow_\rho w}
\end{equation*}
we can incorporate basic operators into our language by defining each of them explicitly via a countable set of rules e.g.~$\num{m}+\num{n}\leadsto\num{k}$ for $k=m+n$ and so on, which then satisfy operational rules $\num{m}+\num{n}\downarrow_\nat\num{k}$. We define $<$ and $\len$ in an analogous fashion so that
\begin{equation*}
\begin{aligned}
&\num{m}<\num{n}\downarrow_\nat\zero\mbox{ \ \ if $m<n$, else \ \ }\num{m}<\num{n}\downarrow_\nat \num{1}.\\
&\len\; \num{a}\downarrow_\nat \num{k}\mbox{ \ \ for $k=|a|$}
\end{aligned}
\end{equation*}
where $|a|$ denotes the length of $a\in\NN^\ast$.

\subsection{Spector's bar recursion}
\label{sec-prog-spec}

The final language we consider here is an extension of the list language in Section \ref{sec-prog-lyst} with Spector's bar recursor of lowest type, originally introduced in \cite{Spector(1962.0)} (but see e.g.~\cite{Oliva(2006.2)} for a more modern introduction). This is a form of backward recursion over wellfounded trees, where for this particular variant, wellfoundedness is typically ensured by appealing to some form of continuity on the parameters of the recursion. In a simple equational calculus it would be given by the defining equation
\begin{equation*}
B(\omega,\phi,\psi,a)=_{\NN}\begin{cases}\phi(a) & \mbox{if $\omega(\hat{a})<|a|$}\\ \psi(a,\lambda x\; . \; B(\omega,\phi,\psi,a\ast x)) & \mbox{otherwise}\end{cases}
\end{equation*}
where the output type is $\NN$ as indicated, and the input parameters have type $\omega:\NN^\NN\to\NN$, $\phi:\NN^\ast\to\NN$ and $\psi:\NN^\ast\times(\NN\to\NN)\to\NN$ and $a\in \NN^\ast$ respectively. In addition, we have $a\ast x:=(a_1,\ldots,a_k,x)$ and $\hat{a}:\NN\to\NN$ is defined by $\hat{a}_n:=a_i$ for $i<n$ and otherwise $0$. Bar recursion will primarily play a role in Sections \ref{sec-simple-ds} and \ref{sec-cost-bound}, where we focus on termination and complexity, respectively. For now, we simply give the main definitions we will need later. We first need to add three additional function symbols
\begin{equation*}
\begin{aligned}
\ext &: \nat^\ast\to\nat\to\nat\\
\sbar &: \rho_1\to\rho_2\to\rho_3\to\nat^\ast\to\nat\\
\sbar_1 &: \rho_1\to\rho_2\to\rho_3\to \nat^\ast\to\nat\to\nat
\end{aligned}
\end{equation*}
where
\begin{equation*}
\begin{aligned}
\rho_1&:=(\nat\to\nat)\to\nat\\
\rho_2&:=\nat^\ast\to\nat\\
\rho_3&:=\nat^\ast\to (\nat\to\nat)\to\nat
\end{aligned}
\end{equation*}
which have defining equations
\begin{equation*}
\begin{aligned}
\ext\; [x_1,\ldots,x_k]\; \num{n}&\leadsto \begin{cases}x_n & \mbox{if $n<k$}\\ \zero & \mbox{otherwise}\end{cases} \\
\sbar\; f\; g\; h\; xs&\leadsto \sbar_1\; f\; g\; h\; xs\; (f(\ext\; xs)<\len\; xs) \\
\sbar_1\; f\; g\; h\; xs\; \zero&\leadsto g\; xs\\
\sbar_1\; f\; g\; h\; xs\; \suc z&\leadsto h\; xs\; (\lambda x\; . \; \sbar\; f\; g\; h\; (xs::x))
\end{aligned}
\end{equation*}
It is not difficult to show that the operational semantics in this case give rise to the following derived rules for $\sbar$ (here we eliminate the intermediate steps involving $\sbar_1$)
\begin{gather*}
\frac{v_1(\ext\; \num{a})\downarrow_\nat \num{k} \ \ \ \ \ \ k<|a| \ \ \ \ \ \ v_2\; \num{a}\downarrow_\nat \num{n}}{\sbar\; v_1\; v_2\; v_3\; \num{a}\downarrow \num{n}}\\[2mm]
\frac{v_1(\ext\; \num{a})\downarrow_\nat \num{k} \ \ \ \ \ \ k\geq |a| \ \ \ \ \ \ v_3\; \num{a}\; (\lambda x\; . \; \sbar\; v_1\; v_2\; v_3\; (\num{a}::x))\downarrow_\nat \num{n}}{\sbar\; v_1\; v_2\; v_3\; \num{a}\downarrow_\nat \num{n}}
\end{gather*}
\begin{rem}
\label{rem-bar}
The above formulation of bar recursion as a rewrite system is inspired by Berger \cite{Berger(2006.0)}, which in turn uses a trick due to Vogel \cite{Vogel(1976.0)}. There bar recursion is considered in its general form, where the type of $\num{a}$ and the output can be arbitrary. Though such a generalisation can easily be incorporated here by encoding lists of type $\rho$ as objects of type $\nat\to\rho$, for us bar recursion plays a predominantly illustrative role, and so for simplicity we restrict ourselves to the recursor of base type.
\end{rem}

\section{The main soundness theorem}
\label{sec-sound}

In this section we present our main framework, first setting up our monadic translation and the associated logical relation, before proving that the translation is sound with respect to the relation. This soundness result, given as Theorem \ref{thm-sound}, is just an instance of the General Theorem of Logical Relations for our particular relation. As mentioned right at the beginning, our translation is actually the composition of a syntactic translation into a monadic metalanguage and a standard denotational semantics. We outline each of these in turn.

\subsection{The monadic metalanguage}
\label{sec-sound-meta}

The metalanguage is defined similarly to our target language. However, here we only specify types and terms, which will then be assigned a denotational semantics in the next section. The monadic language is summarised in Figure \ref{fig-ptypeterms}. 
\begin{figure}[t]
\begin{gather*}
\mbox{Types:} \ \ \ \rho,\tau::=\gamma \; | \; \mt{\delta} \; | \; \rho\times \tau \; | \; \rho\to \tau \ \ \ \mbox{for} \ \ \ \delta\in\datatype\\
\mbox{Terms:} \ \ \ r,s,t::=\iota \; | \; \cplus{t} \; | \; \csum{r}{s}{t} \; | \; x\; | \; \bt{c} \; | \; \bt{f} \; | \; \lambda x.t \; | \; ts \; | \; (s,t) \; | \; t_l \; | \; t_r \ \ \mbox{for} \ \ \ c\in\constructor,f\in\func \\[3mm] \Gamma\vdash \iota:\gamma \ \ \ \ \ \ \frac{\Gamma\vdash t:\gamma}{\Gamma\vdash \cplus{t}:\gamma} \ \ \ \ \ \  \frac{\Gamma\vdash r:\gamma \ \ \ \Gamma\vdash s:\gamma \ \ \ \Gamma\vdash t:\gamma}{\Gamma\vdash \csum{r}{s}{t}:\gamma} \\[3mm] \frac{\Gamma\vdash s:\rho \ \ \ \Gamma\vdash t:\tau}{\Gamma\vdash (s,t):\rho\times \tau} \ \ \ \ \ \ \frac{\Gamma\vdash t:\rho\times \tau}{\Gamma\vdash t_l:\rho} \ \ \ \ \ \ \frac{\Gamma\vdash t:\rho\times \tau}{\Gamma\vdash t_r:\tau} \\[3mm]
\Gamma\vdash \bt{c}:\mt{\delta_1}\to\ldots\to\mt{\delta_m}\to\mt{\delta} \ \ \ \ \ \ \Gamma\vdash \bt{f}:\mt{\rho_1}\to\ldots\to\mt{\rho_n}\to \gamma\times\mt{\rho} \end{gather*}
\caption{Types and terms of metalanguage}
\label{fig-ptypeterms}
\hrulefill
\end{figure}
Types are constructed from a special type $\gamma$ together with base types $\mt{\delta}$ for all datatypes of the target language, and now allow both function types and cartesian product types. We extend the mapping $\mt{\cdot}$ on datatypes of the target language to \emph{arbitrary} types by defining
\begin{equation*}
\mt{\rho\to\tau}:=\mt{\rho}\to\gamma\times\mt{\tau}.
\end{equation*} 
Terms now include a symbol $\bt{c}$ resp. $\bt{f}$ for each constructor resp. function symbol in the target language, whose types are indicated in Figure \ref{fig-ptypeterms}. Together with the usual term forming rules of the lambda calculus (now with pairing and projection to account for the product), we have three nonstandard rules for forming terms of type $\gamma$: a constant $\iota$, a unary operation $\cplus{t}$ and a ternary operation $\csum{r}{s}{t}$. The meaning of these will become clearer later on. 

For the first step of our translation, we assign to each $\Gamma\vdash t:\rho$ in the target language a term $\mt{\Gamma}\vdash \mt{t}:\gamma\times\mt{\rho}$ in the metalanguage as indicated in Figure \ref{fig-target-meta}, where for $\Gamma=x_1:\rho_1,\ldots,x_k:\rho_k$, we define $\mt{\Gamma}:=x_1:\mt{\rho_1},\ldots,x_k:\mt{\rho_k}$. Here we also use the following shorthand: for $r:\gamma\times \mt{\tau}$ we define 
\begin{equation*}
\lambda^\ast x^\rho.r:=(\iota,\lambda x^\rho.r):\gamma\times\mt{\rho\to\tau}
\end{equation*}
and similarly for the iterated version $\lambda^\ast x_1^{\rho_1},\ldots,x_k^{\rho_k} . r:\gamma\times\mt{\rho_1\to\ldots\to\rho_k\to\tau}$.
\begin{figure}[t]
\begin{equation*}
\begin{aligned}
\Gamma\vdash t:\rho &\mapsto \mt{\Gamma}\vdash\mt{t}:\gamma\times\mt{\rho}\\[3mm]
\mt{x}&:=(\iota,x)\\
\mt{\lambda x.t}&:=\lambda^\ast x\; . \; (\cplus{\mt{t}_l},\mt{t}_r)\\
\mt{ts}&:=(\csum{\mt{t}_l}{\mt{s}_l}{(\mt{t}_r\mt{s}_r)_l},(\mt{t}_r\mt{s}_r)_r)\\
\mt{c}&:=\lambda^\ast x_1,\ldots,x_m\; . \; (\iota,\bt{c}x_1\ldots x_m)\\
\mt{f}&:=\lambda^\ast x_1,\ldots,x_n\; . \; \bt{f}x_1\ldots x_n
\end{aligned}
\end{equation*}
\caption{Translation from target to metalanguage}
\label{fig-target-meta}
\hrulefill
\end{figure}

\subsection{A denotational semantics for the metalanguage}
\label{sec-sound-ds}

We assign our metalanguage a denotational semantics as specified in Figure \ref{fig-denotational}. We assign sets $C$ and $D_\delta$ to the base types $\gamma$ and $\mt{\delta}$ respectively, for each $\delta\in\datatype$. We leave open the precise meaning of the function space $X\Rightarrow Y$ for now. In what follows we will typically work in \emph{total} models, where function spaces will either be the full set-theoretic function space, or the space of continuous functions from $X$ to $Y$ in the sense of Kleene \cite{Kleene(1959.0)} or Kreisel \cite{Kreisel(1959.0)}

\begin{figure}[t]
\begin{equation*}
\ds{\gamma}:=C \ \ \ \ \ \  \ds{\mt{\delta}}:=D_\delta \ \ \ \ \ \ \ds{\rho\times\tau}:=\ds{\rho}\times\ds{\tau} \ \ \ \ \ \ \ds{\rho\to\tau}:=\ds{\rho}\Rightarrow\ds{\tau}
\end{equation*}
\begin{equation*}
\begin{aligned}
\ds{\iota}\xi&:=\epsilon\\
\ds{\cplus{t}}\xi&:=\inc(\ds{t}\xi)\\
\ds{\csum{r}{s}{t}}\xi&:=\comp(\ds{r}\xi,\ds{s}\xi,\ds{t}\xi)\\
\ds{x}\xi&:=\xi(x)\\
\ds{\sm{c}}\xi&:=\dss{c}\\
\ds{\sm{f}}\xi&:=\dss{f}\\
\ds{\lambda x.t}\xi&:=\mlambda a.\ds{t}\xi\{x\mapsto a\}\\
\ds{ts}\xi&:=\ds{t}\xi(\ds{s}\xi)\\
\ds{(s,t)}\xi&:=(\ds{s}\xi,\ds{t}\xi)\\
\ds{t_l}\xi&:=\pi_0(\ds{t}\xi)\\
\ds{t_r}\xi&:=\pi_1(\ds{t}\xi)
\end{aligned}
\end{equation*}
\caption{Denotational semantics of metalanguage}
\label{fig-denotational}
\hrulefill
\end{figure}

Terms $\Gamma\vdash t:\rho$ of the metalanguage are assigned an interpretation $\ds{t}\xi\in\ds{\rho}$ in the usual way, where $\xi$ is an environment mapping each $x_i:\rho_i\in\Gamma$ to some $\xi(x_i)\in \ds{\rho_i}$. We assume that we have chosen suitable interpretations $\dss{c}$, $\dss{f}$ for each $\bt{c}$ and $\bt{f}$ respectively. Terms of type $\gamma$ are interpreted via a triple $(\epsilon^C,\inc^{C\to C},\comp^{C\times C\times C\to C})$.

\subsection{The logical relation}
\label{sec-sound-log}

Let $\rho$ be a type of our target language. We denote the set of all closed terms of type $\rho$ by $\cl{\rho}$, and the set of all values by $\val{\rho}$. We now suppose that we are given a pair of relations 
\begin{equation*}
\mbox{$\Downarrow_\rho$ on $\cl{\rho}\times C\times \val{\rho}$ and $\wtl_\delta$ on $\val{\delta}\times \ds{\mt{\delta}}$,}
\end{equation*}
where $\rho$ ranges over all types in our target language and $\delta$ over all datatypes. We will generalise $\wtl_\rho$ to arbitrary types via the inductive clause
\begin{equation*}
u\wtl_{\rho\to\tau} f:\Leftrightarrow (\forall v\in\val{\rho},a\in\ds{\mt{\rho}})(v\wtl_\rho a\Rightarrow uv\bnd_\tau f(a))
\end{equation*}
where the relation $\bnd_\rho$ on $\cl{\rho}\times (C\times\ds{\mt{\rho}})$ is defined by
\begin{equation*}
e\bnd_\rho a:\Leftrightarrow (\exists v\in\val{\rho})(e\Downarrow^{\pi_0a}_\rho v\wedge v\wtl_\rho \pi_1a),
\end{equation*}
where in what follows we often abbreviate the inner conjunction as $e\Downarrow^{\pi_0a} v\wtl \pi_1 a$. In the remainder of this section we seek to establish conditions under which the translation $\ds{\mt{\cdot}}$ is sound with respect to $\bnd$, by which we mean the following:
\begin{defi}
\label{def-sound}
Given a value environment $\sigma$ for $\Gamma:=x_1:\rho_1,\ldots,x_k:\rho_k$ in our target language and some denotational environment $\xi$ for $\mt{\Gamma}=x_1:\mt{\rho_1},\ldots,x_k:\mt{\rho_k}$, we write $\sigma\wtl_\Gamma \xi$ if $\sigma(x_i)\wtl_{\rho_i} \xi(x_i)$ for all $i=1,\ldots,k$. We say that $\ds{\mt{\cdot}}$ is sound w.r.t. $\bnd$ if for all $\Gamma\vdash t:\rho$ in our target language, we have
\begin{equation*}
\sigma\wtl_\Gamma\xi \Rightarrow t\sigma\bnd_\rho \ds{\mt{t}}\xi
\end{equation*}
for all $\sigma,\xi$.
\end{defi}
We start by focusing on the monadic part of our relation.
\begin{defi}
\label{def-compmon}
We say that $(\epsilon,\inc,\comp)$ is compatible with $\Downarrow$ if the following rules are satisfied:
\begin{gather*}
v\Downarrow^\epsilon_\rho v\mbox{ \ \ for values $v$} \\[2mm] 
\frac{r[u/x]\Downarrow^c_\tau v}{(\lambda x.r)u\Downarrow^{\inc(c)}_\tau v}\mbox{ \ \ for $x:\rho\vdash r$ and $u:\rho$ a value} \\[2mm]
 \frac{e\Downarrow^{c_0}_{\rho\to\tau} u \ \ \ \ \ e'\Downarrow^{c_1}_\rho v \ \ \ \ \ \ uv\Downarrow^{c_2}_\tau w}{ee'\Downarrow^{\comp(c_0,c_1,c_2)}_\tau w}
\end{gather*}
\end{defi}

\begin{lem}
\label{lem-compapp}
Suppose that $(\epsilon,\inc,\comp)$ is compatible with $\Downarrow$. For $f\in C\times (\ds{\mt{\rho}}\to C \times\ds{\mt{\tau}})$ and $a\in C\times \ds{\mt{\rho}}$, define
\begin{equation*}
f\circ a:=(\comp(\pi_0f,\pi_0a,\pi_0((\pi_1f)(\pi_1a))),\pi_1((\pi_1f)(\pi_1a)))\in C\times\ds{\mt{\tau}}.
\end{equation*}
Then whenever $e\bnd_{\rho\to\tau} f$ and $e'\bnd_\rho a$ then $ee'\bnd_{\tau} f\circ a$.
\end{lem}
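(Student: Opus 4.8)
The plan is to proceed by straightforward unfolding of the definitions of $\bnd$ and $\wtl$, and then to invoke the third compatibility rule of Definition \ref{def-compmon} to assemble the evaluation derivation for $ee'$ out of those for its subterms. First I would unpack the hypothesis $e\bnd_{\rho\to\tau}f$: by definition of $\bnd_{\rho\to\tau}$ this yields a value $u\in\val{\rho\to\tau}$ with $e\Downarrow^{\pi_0f}_{\rho\to\tau}u$ and $u\wtl_{\rho\to\tau}\pi_1f$. Likewise $e'\bnd_\rho a$ gives a value $v\in\val{\rho}$ with $e'\Downarrow^{\pi_0a}_\rho v$ and $v\wtl_\rho\pi_1a$.

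Next I would exploit the inductive clause defining $\wtl$ at a function type. Since $v\wtl_\rho\pi_1a$, instantiating the universally quantified clause $u\wtl_{\rho\to\tau}\pi_1f$ at the pair $(v,\pi_1a)$ gives $uv\bnd_\tau(\pi_1f)(\pi_1a)$. Unfolding $\bnd_\tau$ once more produces a value $w\in\val{\tau}$ with $uv\Downarrow^{\pi_0((\pi_1f)(\pi_1a))}_\tau w$ and $w\wtl_\tau\pi_1((\pi_1f)(\pi_1a))$. Since by definition $\pi_1(f\circ a)=\pi_1((\pi_1f)(\pi_1a))$, the semantic bounding component of the goal $ee'\bnd_\tau f\circ a$ is already in hand, witnessed by this $w$.

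It then remains only to check the monadic component, namely that $ee'\Downarrow^{\pi_0(f\circ a)}_\tau w$. Here I would apply the third rule of Definition \ref{def-compmon} to the three derivations collected above, $e\Downarrow^{\pi_0f}_{\rho\to\tau}u$, $e'\Downarrow^{\pi_0a}_\rho v$ and $uv\Downarrow^{\pi_0((\pi_1f)(\pi_1a))}_\tau w$; this rule carries no side condition (unlike the corresponding rule of Figure \ref{fig-bigstep}), so it applies regardless of whether $e$ or $e'$ is already a value, and it yields $ee'\Downarrow^{\comp(\pi_0f,\,\pi_0a,\,\pi_0((\pi_1f)(\pi_1a)))}_\tau w$. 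Since $\pi_0(f\circ a)$ is by definition exactly $\comp(\pi_0f,\pi_0a,\pi_0((\pi_1f)(\pi_1a)))$, this is the required annotation, and together with $w\wtl_\tau\pi_1(f\circ a)$ we conclude $ee'\bnd_\tau f\circ a$.

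I do not anticipate a real obstacle here: the argument is essentially bookkeeping, matching the two projection components of $f\circ a$ against the data handed over by the hypotheses. The only points that warrant a moment's care are that the compatibility rule for application has no premiss forcing $e$ or $e'$ to be a non-value, and that the nested projection $\pi_0((\pi_1f)(\pi_1a))$ is read off correctly as the cost of evaluating $uv$.
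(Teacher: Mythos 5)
Your proposal is correct and follows exactly the same route as the paper's own proof: unpack both hypotheses to obtain $u$ and $v$, use the function-type clause of $\wtl$ to get $uv\bnd_\tau(\pi_1f)(\pi_1a)$ and hence a witness $w$, then apply the third rule of Definition \ref{def-compmon} and match the projections of $f\circ a$. Your side remark that the compatibility rule carries no non-value premiss is a correct and worthwhile observation, though the paper does not dwell on it.
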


\begin{proof}
We have $e\Downarrow^{\pi_0f}u\wtl_{\rho\to\tau}\pi_1 f$ and $e'\Downarrow_\rho^{\pi_0a} v\wtl \pi_1a$ for some $u,v$, and thus $uv\btl_\tau (\pi_1 f)(\pi_1 a)$, or in other words $uv\Downarrow_{\tau}^{\pi_0((\pi_1 f)(\pi_1 a))} w\wtl_\tau \pi_1((\pi_1 f)(\pi_1 a))$ for some $w$. Thus by Definition \ref{def-compmon} we have $ee'\Downarrow^{\pi_0(f\circ a)}_\tau w\wtl_\tau \pi_1(f\circ a)$, which is just $ee'\bnd_\tau f\circ a$.
\end{proof}
\begin{defi}
We say that our interpretations $\dss{c}$ for the constructor terms are compatible with $\wtl$ if for all constructor symbols $c:\delta_1\to\ldots\to\delta_m\to\delta$ we have $cv_1\ldots v_m\wtl_{\delta} \dss{c}a_1\ldots a_m$ whenever $v_i\wtl_{\delta_i}a_i$ for all $i=1,\ldots,m$ and values $v_i$.
\end{defi}
\begin{thm}
\label{thm-sound}
Suppose that $(\epsilon,\inc,\comp)$ is compatible with $\Downarrow$ and the $\dss{c}$ are compatible with $\wtl$. Suppose in addition that for all function symbols $f:\rho_1\to\ldots\to\rho_n\to\rho$ we have $fv_1\ldots v_n\bnd_\rho \dss{f}a_1\ldots a_n$ whenever $v_i\wtl_{\rho_i}a_i$ for all $i=1,\ldots,n$ and values $v_i$. Then $\ds{\mt{\cdot}}$ is sound w.r.t. $\bnd$.
\end{thm}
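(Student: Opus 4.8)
The plan is to derive the statement as the instance of the General Theorem of Logical Relations for $\bnd$, i.e.\ by structural induction on the term $\Gamma\vdash t:\rho$ (equivalently, on its typing derivation), with the property "$\sigma\wtl_\Gamma\xi\Rightarrow t\sigma\bnd_\rho\ds{\mt{t}}\xi$" quantified over all $\sigma,\xi$, so that the induction hypothesis can be reapplied at extended environments. There are five cases, according to whether $t$ is a variable, a constructor symbol, a function symbol, an abstraction, or an application.

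The variable and application cases are short. For $t=x$ we have $x\sigma=\sigma(x)\in\val\rho$ and $\ds{\mt{x}}\xi=(\epsilon,\xi(x))$; since $\sigma(x)$ is a value it satisfies $\sigma(x)\Downarrow^\epsilon_\rho\sigma(x)$ by compatibility of $(\epsilon,\inc,\comp)$ with $\Downarrow$, and $\sigma(x)\wtl_\rho\xi(x)$ by $\sigma\wtl_\Gamma\xi$, giving $x\sigma\bnd_\rho\ds{\mt{x}}\xi$. For $t=t_1t_2$, unwinding the definition of $\mt{ts}$ together with the denotational clauses for pairing, projection and application shows that $\ds{\mt{t_1t_2}}\xi$ is exactly $(\ds{\mt{t_1}}\xi)\circ(\ds{\mt{t_2}}\xi)$ with $\circ$ as in Lemma \ref{lem-compapp}; so the induction hypotheses $t_1\sigma\bnd_{\rho\to\tau}\ds{\mt{t_1}}\xi$ and $t_2\sigma\bnd_\rho\ds{\mt{t_2}}\xi$ combined with Lemma \ref{lem-compapp} give $(t_1\sigma)(t_2\sigma)\bnd_\tau\ds{\mt{t_1t_2}}\xi$. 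For $t=\lambda x.t'$ (choosing $x$ fresh so $t\sigma=\lambda x.(t'\sigma)$) one computes $\ds{\mt{\lambda x.t'}}\xi=(\epsilon,g)$ where $g(a)=(\inc(\pi_0(\ds{\mt{t'}}\xi\{x\mapsto a\})),\pi_1(\ds{\mt{t'}}\xi\{x\mapsto a\}))$; since $\lambda x.(t'\sigma)$ is a value it reduces to itself at cost $\epsilon$, so it remains to verify $\lambda x.(t'\sigma)\wtl_{\rho\to\tau}g$. Given $v\wtl_\rho a$, the induction hypothesis applied to $t'$ at the environments $\sigma\{x\mapsto v\}$ and $\xi\{x\mapsto a\}$ (which are related because $v\wtl_\rho a$, using that values are closed so that $t'(\sigma\{x\mapsto v\})=(t'\sigma)[v/x]$) yields $(t'\sigma)[v/x]\bnd_\tau\ds{\mt{t'}}\xi\{x\mapsto a\}$; pushing the witnessing reduction through the $\inc$-clause of Definition \ref{def-compmon} then gives precisely $(\lambda x.(t'\sigma))v\bnd_\tau g(a)$.

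The constructor and function symbol cases carry the bulk of the bookkeeping and are where I expect the main (though purely routine) difficulty to lie. Here $t\sigma=t$ is closed, and unfolding the $\lambda^\ast$ abbreviation in $\mt{c}$ and $\mt{f}$ shows that $\ds{\mt{c}}\xi$ is the tower $(\epsilon,\mlambda a_1.(\epsilon,\ldots(\epsilon,\mlambda a_m.(\epsilon,\dss{c}a_1\ldots a_m))\ldots))$ and $\ds{\mt{f}}\xi=(\epsilon,\mlambda a_1.(\epsilon,\ldots(\epsilon,\mlambda a_n.\dss{f}a_1\ldots a_n)\ldots))$. I would extract a small auxiliary lemma: if $h:\rho_1\to\ldots\to\rho_k\to\tau$ is a term such that $hv_1\ldots v_i$ is a value for every $i<k$ and $hv_1\ldots v_k\bnd_\tau b(a_1,\ldots,a_k)$ whenever $v_i\wtl_{\rho_i}a_i$, then $h\bnd_{\rho_1\to\ldots\to\rho_k\to\tau}(\epsilon,\mlambda a_1.(\epsilon,\ldots(\epsilon,\mlambda a_k.b(a_1,\ldots,a_k))\ldots))$; this follows by a sub-induction on $k$ that peels off one argument at a time, using the value rule of Definition \ref{def-compmon} at each intermediate step and the $\bnd_\tau$-hypothesis at the end. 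Applying it with $h:=c$ (noting $cv_1\ldots v_i$ is a value for all $i\le\ar(c)$ and using compatibility of $\dss{c}$ with $\wtl$ for the last step) settles the constructor case, and applying it with $h:=f$ (noting $fv_1\ldots v_i$ is a value for $i<\ar(f)$ and using the theorem's hypothesis $fv_1\ldots v_n\bnd_\rho\dss{f}a_1\ldots a_n$ for the last step) settles the function symbol case; the degenerate case $\ar(f)=0$ reduces directly to that hypothesis. The only genuine point of care throughout is tracking the cost components: every intermediate reduction carries cost $\epsilon$, with $\inc$ entering only in the abstraction case and $\comp$ (via Lemma \ref{lem-compapp}) only in the application case.
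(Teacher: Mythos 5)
Your proposal is correct and follows essentially the same route as the paper's proof: structural induction with the environment-quantified statement, the variable and application cases handled exactly as you describe (the latter via Lemma \ref{lem-compapp}), the abstraction case via the induction hypothesis at extended environments plus the $\inc$-clause of Definition \ref{def-compmon}, and the constructor/function-symbol cases by peeling off one argument at a time through the $\lambda^\ast$ tower. Your packaging of that last step as an explicit auxiliary lemma with a sub-induction on the arity is just a tidier rendering of what the paper does informally ("continuing this way for $cv_1v_2,\ldots$"), not a different argument.
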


\begin{proof}
We use induction on the structure of terms $\Gamma\vdash t:\rho$ to prove that $t\sigma\btl_\rho\ds{\mt{t}}\xi$ whenever $\sigma\wtl_\Gamma\xi$. For variables, $x\sigma\bnd_\rho \ds{\mt{x}}\xi=(\epsilon,\xi(x))$ follows from the fact that $\sigma(x)\Downarrow^\epsilon \sigma(x)$ and $\sigma(x)\wtl_\rho \xi(x)$. For constructor symbols, suppose that $c:\delta_1\to\ldots\to\delta_m\to\delta$. Then we have $c\Downarrow^\epsilon_\rho c$, and since $\ds{\mt{c}}=(\epsilon,\ds{\lambda x_1\lambda^\ast x_2,\ldots,x_m.(\iota,\bt{c}x_1\ldots x_m)})$ it remains to show that for $v_1\wtl_{\delta_1} a_1$ we have
\begin{equation*}
cv_1\bnd\ds{\lambda^\ast x_2,\ldots,x_m.(\iota,\bt{c}x_1\ldots x_m)}\{x_1\mapsto a_1\}.
\end{equation*}
Continuing this way for $cv_1v_2,\ldots$ we must ultimately show that
\begin{equation*}
cv_1\ldots v_m\bnd \ds{(\iota,\bt{c}x_1\ldots x_m)}\{x_1,\ldots,x_m\mapsto a_1,\ldots a_m\}=(\epsilon,\bar{c}a_1\ldots a_m)
\end{equation*}
which follows from the fact that $\dss{c}$ is compatible with $\wtl$. This argument is easily adapted to show that $f\bnd\ds{\mt{f}}$, using the main assumption of the theorem. 

For function application, by the induction hypothesis we have $t\sigma\bnd_{\rho\to \tau}\ds{\mt{t}}\xi$ and $s\sigma\bnd_\rho \ds{\mt{s}}\xi$. Thus by Lemma \ref{lem-compapp} we have
\begin{equation*}
(ts)\sigma=(t\sigma)(s\sigma)\bnd_\tau \ds{\mt{t}}\xi\circ\ds{\mt{s}}\xi=\ds{\mt{ts}}\xi
\end{equation*}
where the last equality follows from unwinding definitions. In remains to deal with abstraction. Suppose that $\sigma\wtl\xi$. We need to show that
\begin{equation*}
(\lambda x.t)\sigma=(\lambda x.t\sigma)\bnd_{\rho\to\tau}\ds{\lambda^\ast x\; .\; (\cplus{\mt{t}_l},\mt{t}_r)}\xi=(\epsilon,\mlambda a\; . \; (\inc(\pi_0(\ds{\mt{t}}\xi_a)),\pi_1(\ds{\mt{t}}\xi_a)))
\end{equation*}
for $\xi_a:=\xi\{x\mapsto a\}$. But since $\lambda x.t\sigma\Downarrow^\epsilon_{\rho\to\tau}\lambda x.t\sigma$ this reduces to showing that for any $u\wtl_\rho a$ we have
\begin{equation*}
(\lambda x.t\sigma)u\Downarrow_\tau^{\inc(\pi_0(\ds{\mt{t}}\xi_a))}v\wtl_\tau \pi_1(\ds{\mt{t}}\xi_a)
\end{equation*}
for some $v\in\val{\tau}$. Because $u\wtl_\rho a$ implies $\sigma\{x\mapsto u\}\wtl \xi_a$, by the induction hypothesis we have
\begin{equation*}
t\sigma\{x\mapsto u\}=t\sigma[u/x]\Downarrow_\tau^{\pi_0(\ds{\mt{t}}\xi_a)}v\wtl_\tau \pi_1(\ds{\mt{t}}\xi_a)
\end{equation*}
for some $v$, and thus the result follows from Definition \ref{def-compmon}.\end{proof}
This concludes the first main part of the article, where we introduce our target language and establish a sound monadic translation which acts on it. Up to this point everything has been fairly standard: The first component of our translation into the monadic metalanguage is a simple call-by-value monadic translation using the writer monad. Theorem \ref{thm-sound} is then a confirmation that the logical relation defined on terms of our target language and denotations of our metalanguage acts as it should. We now move onto the applications, where we show that Theorem \ref{thm-sound}, though simple, is surprisingly versatile.

\section{Some simple applications for $C=\{0\}$}
\label{sec-simple}

Before we move on to our main applications of the translation, we demonstrate that in the simple case where $C=\{0\}$ is a terminal object, thus collapsing the monadic part of the translation, our soundness theorem can still be related to a number of key concepts in the literature. In each of the examples that follow, we define
\begin{equation*}
e\Downarrow_\rho^{0} v:\Leftrightarrow e\downarrow v
\end{equation*}
Note that in this case, $(\epsilon,\inc,\comp)$ are uniquely defined as just constant functions, and are compatible with $\Downarrow$ since the conditions of Definition \ref{def-compmon} follow from the definition of $\downarrow$ in Figure \ref{fig-bigstep}.

\subsection{Reducibility predicates}
\label{sec-simple-red}

Let's first consider the case where in addition to $C=\{0\}$ we have $D_\delta=\{0\}$ for all datatypes, and thus $\ds{\rho}\cong\{0\}$ for all types of the metalanguage - where here we implicitly use the isomorphisms $\{0\}\times\{0\}\cong\{0\}$ and $(X\to \{0\})\cong \{0\}$. We define
\begin{equation*}
v\wtl_\delta 0:\Leftrightarrow \mathsf{T}.
\end{equation*}
Now let us write $\red_\rho(e):\Leftrightarrow e\btl_\rho 0$ and $\vred_\rho(v):\Leftrightarrow v\wtl_\rho 0$. Then it is easy to show that
\begin{equation*}
\red_\delta(e)\Leftrightarrow (\exists v)(e\downarrow v) \ \ \ \ \ \ \red_{\rho\to\tau}(e):\Leftrightarrow (\exists u)(e\downarrow u\wedge (\forall v)(\vred_\rho(v)\Rightarrow \red_\tau(uv)))
\end{equation*}
and therefore $\red(e)$ and $\vred(v)$ act as reducibility predicates. Moreover, whatever our constant symbols are, the $\bar{c}\in\{0\}$ are trivially compatible with $\wtl$. Thus the following result follows directly from our abstract soundness theorem.
\begin{thm}
\label{thm-red}
Suppose that all function symbols $f:\rho_1\to\ldots\rho_n\to\rho$ in our target language satisfy
\begin{equation*}
\vred_{\rho_1}(v_1)\wedge\ldots\wedge \vred_{\rho_n}(v_n)\Rightarrow \red_\rho(fv_1\ldots v_n)
\end{equation*}
for all values $v_1,\ldots v_n$ of the appropriate type. Then for any closed term $e:\rho$ there exists some value $v$ such that $e\downarrow v$.
\end{thm}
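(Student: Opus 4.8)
The plan is to obtain Theorem~\ref{thm-red} as an immediate corollary of the abstract soundness theorem (Theorem~\ref{thm-sound}), specialised to the degenerate instantiation $C=\{0\}$, $D_\delta=\{0\}$ set up just above. Most of the bookkeeping has in fact already been carried out in the preceding discussion; what remains is to check that the hypothesis of Theorem~\ref{thm-red} is precisely the function-symbol hypothesis of Theorem~\ref{thm-sound}, and then to read off the conclusion.

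First I would verify the three premises of Theorem~\ref{thm-sound}. Compatibility of $(\epsilon,\inc,\comp)$ with $\Downarrow$ is exactly the observation recorded above: with $C=\{0\}$ the triple consists of constant functions, $e\Downarrow^0_\rho v$ is by definition $e\downarrow v$, and the clauses of Definition~\ref{def-compmon} then reduce to clauses of Figure~\ref{fig-bigstep}. Compatibility of the $\bar c$ with $\wtl$ is trivial since $\ds{\mt{\delta}}=\{0\}$ and $v\wtl_\delta 0:\Leftrightarrow\mathsf{T}$. For the function-symbol premise I would unwind: using $\{0\}\times\{0\}\cong\{0\}$ and $X\Rightarrow\{0\}\cong\{0\}$ one has $\ds{\mt{\rho}}\cong\{0\}$ for every type $\rho$, so the hypothesis "$v_i\wtl_{\rho_i}a_i$" forces $a_i=0$ and is equivalent to $\vred_{\rho_i}(v_i)$, while the conclusion "$fv_1\ldots v_n\bnd_\rho\bar f a_1\ldots a_n$" unwinds — taking $\pi_0,\pi_1$ of an element of a singleton product — to $(\exists v)(fv_1\ldots v_n\downarrow v\wedge\vred_\rho(v))$, i.e. to $\red_\rho(fv_1\ldots v_n)$. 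Hence the displayed assumption of Theorem~\ref{thm-red} is literally the function-symbol assumption of Theorem~\ref{thm-sound}.

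Having checked the premises, Theorem~\ref{thm-sound} gives that $\ds{\mt{\cdot}}$ is sound with respect to $\bnd$. Applying Definition~\ref{def-sound} to a closed term $\emptyset\vdash e:\rho$, the empty value environment $\sigma$ and empty denotational environment $\xi$ vacuously satisfy $\sigma\wtl_\emptyset\xi$, so $e=e\sigma\bnd_\rho\ds{\mt{e}}\xi$; unwinding $\bnd_\rho$ and $\Downarrow^0$ this asserts exactly that there is a value $v$ with $e\downarrow v$ (and moreover $\vred_\rho(v)$), which is the desired statement.

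I do not anticipate any genuine obstacle: the mathematical content of the result is entirely absorbed into Theorem~\ref{thm-sound}, and the only thing requiring care is the routine verification that the singleton choices of $C$ and $D_\delta$ collapse $\Downarrow$ to $\downarrow$, $\bnd$ to $\red$ and $\wtl$ to $\vred$ as claimed — in particular that the recursive description of $\red_\delta$ and $\red_{\rho\to\tau}$ stated above agrees with the inductive clause defining $\wtl$ on function types — all of which becomes transparent once the isomorphisms $\{0\}\times\{0\}\cong\{0\}$ and $X\Rightarrow\{0\}\cong\{0\}$ are made explicit.
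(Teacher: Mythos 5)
Your proposal is correct and follows exactly the route the paper intends: the paper states that Theorem~\ref{thm-red} ``follows directly from our abstract soundness theorem'' after noting that the collapsed $(\epsilon,\inc,\comp)$ is compatible with $\Downarrow$ and that the $\bar{c}$ are trivially compatible with $\wtl$, and your write-up simply makes explicit the identification of $\vred/\red$ with $\wtl/\bnd$ and the application of Definition~\ref{def-sound} to a closed term with empty environments. No gaps.
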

The above result confirms that termination of our target language as a whole follows from termination of the function symbols. As a simple application, we prove that our call-by-value version of System T (Section \ref{sec-prog-syst}) terminates.
\begin{cor}
System T is terminating.
\end{cor}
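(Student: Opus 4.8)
The plan is to invoke Theorem~\ref{thm-red}. In the simple variant of System~T the only function symbols are the recursors $\rec_\rho:\rho\to(\nat\to\rho\to\rho)\to\nat\to\rho$, so it suffices to check the hypothesis of that theorem for each $\rec_\rho$: for all values $v_1:\rho$, $v_2:\nat\to\rho\to\rho$, $v_3:\nat$ with $\vred_\rho(v_1)$, $\vred_{\nat\to\rho\to\rho}(v_2)$ and $\vred_\nat(v_3)$, we must derive $\red_\rho(\rec_\rho\;v_1\;v_2\;v_3)$. Recall that for the present choice of parameters $\red_\rho(e)$ unfolds to $(\exists v)(e\downarrow v\wedge\vred_\rho(v))$, while $\vred_{\rho\to\tau}(w)$ unfolds to: $w$ is a value, and $\vred_\rho(w')\Rightarrow\red_\tau(ww')$ for every value $w'$.

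Since $v_3:\nat$ is a value it must be a numeral $\num n$, by the fact recorded in Section~\ref{sec-prog-syst} that the only values of type $\nat$ are the $\num n$. I would then argue by induction on $n$, using the derived operational rules for $\rec_\rho$ from Section~\ref{sec-prog-syst}. For $n=0$ the rule $\rec_\rho\;v_1\;v_2\;\zero\downarrow_\rho v_1$ together with $\vred_\rho(v_1)$ immediately gives $\red_\rho(\rec_\rho\;v_1\;v_2\;\zero)$. For the step, the induction hypothesis provides a value $v$ with $\rec_\rho\;v_1\;v_2\;\num n\downarrow_\rho v$ and $\vred_\rho(v)$. Applying $\vred_{\nat\to\rho\to\rho}(v_2)$ to $\num n$ --- which satisfies $\vred_\nat(\num n)$ vacuously --- yields a value $u$ with $v_2\;\num n\downarrow_{\rho\to\rho}u$ and $\vred_{\rho\to\rho}(u)$; applying $\vred_{\rho\to\rho}(u)$ to $v$ yields a value $w$ with $uv\downarrow_\rho w$ and $\vred_\rho(w)$. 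The derived rule
\begin{equation*}
\frac{v_2\;\num n\downarrow_{\rho\to\rho}u\qquad\rec_\rho\;v_1\;v_2\;\num n\downarrow_\rho v\qquad uv\downarrow_\rho w}{\rec_\rho\;v_1\;v_2\;\suc(\num n)\downarrow_\rho w}
\end{equation*}
then gives $\rec_\rho\;v_1\;v_2\;\suc(\num n)\downarrow_\rho w$ with $\vred_\rho(w)$, i.e.\ $\red_\rho(\rec_\rho\;v_1\;v_2\;\suc(\num n))$, closing the induction. Theorem~\ref{thm-red} then yields that every closed term of System~T reduces to a value.

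None of this is genuinely hard: it is the familiar Tait-style reducibility argument for the recursor, now funnelled through Theorem~\ref{thm-red} rather than proved from scratch. The only place that needs a little attention is peeling off the two layers of the higher-type relation $\vred$ at $\nat\to\rho\to\rho$ and at $\rho\to\rho$ in the right order, so as to produce the intermediate values $u$ and $w$, together with the trivial but necessary observation that $\vred_\nat$ is always true, so the numeric argument of the recursor contributes nothing beyond fixing the shape of the induction.
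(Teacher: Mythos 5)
Your proposal is correct and follows essentially the same route as the paper: it reduces the corollary to the hypothesis of Theorem~\ref{thm-red} for the recursors and then verifies that hypothesis by induction on the numeral argument, using the derived operational rules for $\rec_\rho$ exactly as in the paper's proof. The only cosmetic difference is that you make explicit the preliminary observation that a value of type $\nat$ must be a numeral, which the paper also uses (implicitly, via its statement that the claim is made ``for numerals $\num{n}$'').
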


\begin{proof}
We need to show that for any $v_1,v_2$ of the appropriate type and numeral $\num{n}$ we have
\begin{equation*}
\vred_\rho(v_1)\wedge \vred_{\nat\to \rho\to\rho}(v_2)\Rightarrow\red_\rho(\rec_\rho \; v_1 \; v_2\; \num{n})
\end{equation*}
(note that $\vred_\nat(\num{n})$ trivially holds). We do this by induction on $n\in\NN$, using the derived rules for the recursor given in Section \ref{sec-prog-syst}. For $n=0$ we have $\rec_\rho\; v_1\; v_2\; \zero\downarrow v_1$ and since $\vred_\rho(v_1)$ holds by assumption it follows that $\red_\rho(\rec_\rho\; v_1\; v_2\; \zero)$. Now for the induction step, we assume that $\red_\rho(\rec_\rho \; v_1\; v_2\; \num{n})$ for some $n\in\NN$, which means that $\rec_\rho \; v_1\; v_2\; \num{n}\downarrow v$ for some $v$ satisfying $\vred_\rho(v)$. Since $\vred_{\nat\to\rho\to\rho}(v_2)$ implies that $v_2\; \num{n}\downarrow u$ for some $u$ with $\vred_{\rho\to\rho}(u)$, we then have $R_\rho(uv)$, which means that $uv\downarrow_\rho w$ for some $w$ with $\vred_\rho(w)$. Finally, by the derived rule for the recursor, we have $\rec_\rho\; v_1\; v_2\; \suc(\num{n})\downarrow w$ and thus $\red_\rho(\rec_\rho\; v_1\; v_2\; \suc(\num{n}))$, and we're done.
\end{proof}

For the usual formulation of System T as an equational calculus, the above result roughly corresponds to the fact that any inner-most, left-most sequence of reductions terminates in some normal form (cf. Troelstra \cite[Theorem 2.2.6]{Troelstra(1973.0)}). It readily generalises to functional languages with more complex datatypes and other forms of wellfounded recursion. However, for languages involving bar recursion, such as that presented in Section \ref{sec-prog-spec}, termination requires us to make use of our bounding component $\wtl$.

\subsection{Termination via denotational semantics}
\label{sec-simple-ds}

We now instantiate $\wtl_\rho$ so that it mimics the logical relation used in Plotkin's famous adequacy proof for PCF \cite{Plotkin(1977.0)}, and demonstrate that this allows us to prove termination of bar recursion. The key here is to interpret our metalanguage in a \emph{continuous} model, and then appeal to properties of the model in order to establish termination. Our approach closely follows Berger \cite{Berger(2005.0),Berger(2006.0)}, who sets up a more general framework in which \emph{strong} normalization of higher-order rewrite systems can be proven using domain theoretic means.

We work with the instance of our target language given in Example \ref{sec-prog-spec}, which contains two datatypes $\nat$ and $\nat^\ast$. We define $C=\{0\}$ as before, but now let $D_\nat:=\NN$ and $D_{\nat^\ast}:=\NN^\ast$. In addition, we interpret the function space $X\Rightarrow Y$ in our denotational model as the space of all \emph{total continuous functionals} from $X$ to $Y$. Thus our metalanguage is interpreted in the standard model $\modcont$ of total continuous functionals over base types $\NN$ and $\NN^\ast$, which has various presentations in the literature: Via limit spaces (Scarpellini \cite{Scarpellini(1971.0)}), encodings or neighbourhoods (Kleene/Kreisel \cite{Kleene(1959.0),Kreisel(1959.0)}) or as the extensional collapse of the partial continuous functionals (Ershov \cite{Ershov(1977.0)}).

We first observe that for any type $\rho$ in our target language, $\ds{\mt{\rho}}$ is isomorphic to the usual (non-monadic) denotational semantics of that type, since 
\begin{equation*}
\ds{\mt{\rho\to\tau}}=\ds{\mt{\rho}}\Rightarrow \{0\}\times \ds{\mt{\tau}}\cong \ds{\mt{\rho}}\Rightarrow \ds{\mt{\tau}}
\end{equation*}
and thus we can ignore the monadic part of our translation. We define our logical relation as in Section \ref{sec-simple-red}, but now with a semantic component as follows:
\begin{equation*}
\num{n}\wtl_\nat m:\Leftrightarrow n=m \ \ \ \ \ \ \num{a}\wtl_{\nat^\ast} b:\Leftrightarrow a=b
,\end{equation*}
and as such, $\btl_\rho$ takes the following simplified form:
\begin{equation*}
\begin{aligned}
e\btl_\rho a&:\Leftrightarrow (\exists v)(e\downarrow v\wedge v\wtl_\rho a) \\
u\wtl_{\rho\to\tau} f&:\Leftrightarrow (\forall v,a)(v\wtl_\rho a\Rightarrow uv\btl_\tau f(a)). 
\end{aligned}
\end{equation*}
Interpreting the constructor symbols in the obvious way i.e.
\begin{equation*}
\dss{0}:=0 \ \ \ \ \ \ \dss{\suc}(n):=n+1 \ \ \ \ \ \ \dss{\varepsilon}:=[] \ \ \ \ \ \ \dss{::}(a,n):=a\ast n
\end{equation*}
it is clear that these interpretations are compatible with $\wtl$. In addition, we interpret the operator function symbols of our target language in the usual way, where in particular i.e.
\begin{equation*}
\dss{\len}(a):=|a| \ \ \ \ \ \ \dss{\ext}(a):=\hat{a}.
\end{equation*}
Finally, for bar recursion we let
\begin{equation*}
\begin{aligned}
\dss{\sbar}(\omega,g,h,a)&=\begin{cases}g(a) & \mbox{if $\omega(\hat{a})<|a|$}\\ h(a,\mlambda x\;. \; \dss{\sbar}(\omega,g,h,a\ast x)) & \mbox{otherwise}\end{cases}
\end{aligned}
\end{equation*}
Note that $\dss{\sbar}$ is just the traditional defining equation of bar recursion, and is known to be an object of $\modcont$ as originally proven by Scarpellini \cite{Scarpellini(1971.0)}. We now prove that the \emph{program} $\sbar$ is terminating.
\begin{lem}
\label{lem-barterm}
Whenever $v_1\wtl_{(\nat\to\nat)\to\nat}\omega$, $v_2\wtl_{\nat^\ast\to\nat}g$, $v_3\wtl_{\nat^\ast\to (\nat\to\nat)\to\nat} h$ and $a\in\NN^\ast$, we have
\begin{equation*}
\begin{aligned}
\sbar\; v_1\; v_2\; v_3\; \num{a}\btl_\nat\dss{\sbar}(\omega,g,h,a).
\end{aligned}
\end{equation*}
\end{lem}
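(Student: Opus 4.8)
The plan is to prove Lemma~\ref{lem-barterm} by transfinite induction, but we first have to identify the right ordering, since $\sbar$ is a form of bar recursion and so the recursion is \emph{not} structurally decreasing on $\num{a}$. The key observation, going back to Scarpellini, is that $\dss{\sbar}(\omega,g,h,a)$ is well-defined precisely because the ``secured by $\omega$'' relation on finite sequences is wellfounded: for a fixed continuous $\omega:\NN^\NN\to\NN$, there is no infinite sequence $a_0 \sqsubset a_1 \sqsubset \cdots$ of finite sequences (each a one-element extension of its predecessor) with $\omega(\hat{a_i})\geq |a_i|$ for all $i$, because continuity of $\omega$ at the limit point would be violated. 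So the first step is to fix $\omega$ with $v_1 \wtl_{(\nat\to\nat)\to\nat}\omega$, and perform induction along this wellfounded relation $\prec_\omega$ on $\NN^\ast$ (equivalently, induction on the ordinal rank of $a$ in the wellfounded tree of $\omega$-unsecured sequences), with $v_1,v_2,v_3,\omega,g,h$ fixed throughout.

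Next, with $a\in\NN^\ast$ given and the induction hypothesis available for all $a\ast x$ (which satisfy $a\ast x \prec_\omega a$ whenever $\omega(\hat a)\geq|a|$), I would unwind both sides. On the semantic side, $\dss{\sbar}(\omega,g,h,a)$ splits on whether $\omega(\hat a)<|a|$. On the syntactic side, I use the derived operational rules for $\sbar$ stated in Section~\ref{sec-prog-spec}: from $v_1\wtl\omega$ and $\num{a}\wtl_{\nat^\ast}a$ (which holds by definition of $\wtl_{\nat^\ast}$) together with $\ext\;\num{a}\downarrow \num{\hat a}$ — using $\dss{\ext}(a)=\hat a$ and compatibility — we get $v_1(\ext\;\num{a})\downarrow \num{k}$ for $k=\omega(\hat a)$, since $v_1\wtl\omega$ forces the application to reduce to a value bounded by $\omega(\hat a)$, i.e. exactly $\num{\omega(\hat a)}$. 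Also $\len\;\num{a}\downarrow\num{|a|}$, so the guard $k<|a|$ in the derived rule matches the semantic case distinction exactly.

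In the \textbf{secured case} $\omega(\hat a)<|a|$: the first derived rule reduces $\sbar\;v_1\;v_2\;v_3\;\num{a}$ to the result of $v_2\;\num{a}$, and since $v_2\wtl_{\nat^\ast\to\nat}g$ and $\num{a}\wtl_{\nat^\ast}a$ we get $v_2\;\num{a}\btl_\nat g(a)$, hence $\sbar\;v_1\;v_2\;v_3\;\num{a}\btl_\nat g(a)=\dss{\sbar}(\omega,g,h,a)$. In the \textbf{unsecured case} $\omega(\hat a)\geq|a|$: the second derived rule reduces $\sbar\;v_1\;v_2\;v_3\;\num{a}$ to the value of $v_3\;\num{a}\;(\lambda x.\,\sbar\;v_1\;v_2\;v_3\;(\num{a}::x))$. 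Since $v_3\wtl_{\nat^\ast\to(\nat\to\nat)\to\nat}h$, it suffices to show $\num{a}\wtl_{\nat^\ast}a$ (immediate) and that $\lambda x.\,\sbar\;v_1\;v_2\;v_3\;(\num{a}::x)\wtl_{\nat\to\nat}\mlambda x.\dss{\sbar}(\omega,g,h,a\ast x)$; unfolding $\wtl_{\nat\to\nat}$, this means for every $\num{m}\wtl_\nat m$ we need $(\lambda x.\,\sbar\;v_1\;v_2\;v_3\;(\num{a}::x))\,\num{m}\btl_\nat\dss{\sbar}(\omega,g,h,a\ast m)$, which after a $\beta$-step (absorbed by $\inc$ on the collapsed $C=\{0\}$, hence harmless) reduces to $\sbar\;v_1\;v_2\;v_3\;(\num{a\ast m})\btl_\nat\dss{\sbar}(\omega,g,h,a\ast m)$ — and this is exactly the induction hypothesis, applicable since $a\ast m\prec_\omega a$ in the unsecured case. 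Feeding the two pieces into the bounding condition for $v_3$ gives $v_3\;\num{a}\;(\ldots)\btl_\nat h(a,\mlambda x.\dss{\sbar}(\omega,g,h,a\ast x))$, and the derived rule then yields $\sbar\;v_1\;v_2\;v_3\;\num{a}\btl_\nat\dss{\sbar}(\omega,g,h,a)$.

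The main obstacle is the very first step: justifying that induction along $\prec_\omega$ is legitimate, i.e. that the tree of $\omega$-unsecured sequences is wellfounded. This is not a property of the \emph{program} but of the model $\modcont$ — it is precisely the content of Scarpellini's theorem that $\dss{\sbar}$ lives in $\modcont$, and the proof uses continuity of $\omega$ in an essential way (a hypothetical infinite unsecured branch gives a point of $\NN^\NN$ at which $\omega$ cannot be continuous). I would either cite this wellfoundedness directly from Scarpellini~\cite{Scarpellini(1971.0)} / Berger~\cite{Berger(2006.0)}, or include a short self-contained argument: suppose $a_0\sqsubset a_1\sqsubset\cdots$ with each $\omega(\hat{a_i})\geq|a_i|$; let $\alpha\in\NN^\NN$ be the common extension; by continuity $\omega(\alpha)$ depends only on $\alpha\restriction N$ for some $N$, and then for $i$ large enough that $|a_i|>N$ we have $\hat{a_i}\restriction N=\alpha\restriction N$, so $\omega(\hat{a_i})=\omega(\alpha)<|a_i|$, contradicting $\omega(\hat{a_i})\geq|a_i|$. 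Everything after that is the routine unwinding sketched above, with the only mild subtlety being to keep track of how the collapsed monadic component ($C=\{0\}$, so $\Downarrow^0$ is just $\downarrow$, and $\inc,\comp$ are trivial) lets us freely identify $\btl$ with ``reduces to a value related by $\wtl$''.
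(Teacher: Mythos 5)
Your proposal is correct and is essentially the paper's proof: the same one-step claim (case split on $\omega(\hat{a})<|a|$, with the secured branch handled via $v_2\wtl g$ and the unsecured branch via $v_3\wtl h$ together with the hypothesis for all one-element extensions $a\ast x$), combined with the same continuity fact that no branch stays unsecured forever. The only difference is packaging: you make the unsecured-extension relation explicitly wellfounded and invoke transfinite induction, whereas the paper argues by contradiction, using dependent choice on the metalevel to assemble an infinite branch along which the conclusion fails and then contradicting continuity of $\omega$ at its limit --- classically these two organisations are interchangeable and rest on exactly the same ingredients.
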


\begin{proof}
Assuming the hypotheses of the lemma throughout, we first claim that for all $a\in\NN^\ast$, whenever
\begin{equation*}
(\ast) \ \ \ \sbar\; v_1\; v_2\; v_3\; \num{a}::\num{n}\btl_\nat\dss{\sbar}(\omega,g,h,a\ast n)
\end{equation*}
for all $n\in\NN$ then $\sbar\; v_1\; v_2\; v_3\; \num{a}\btl_\nat\dss{\sbar}(\omega,g,h,a)$. There are two cases to consider:
\begin{itemize}

\item Case 1: $\omega(\hat{a})<|a|$. Then since $\ext\; \num{a}\wtl \hat{a}$ and $v_1\wtl\omega$ we have $v_1(\ext\; \num{a})\btl_\nat \omega(\hat{a})$ and thus $v_1(\ext\; \num{a})\downarrow \num{k}$ for $k=\omega(\hat{a})<|a|$. Since in addition $v_2\wtl g$ and thus $v_2\; \num{a}\btl_\nat g(a)$ it follows that $v_2\; \num{a}\downarrow \num{n}$ for $n=g(a)$, and thus by the first derived rule given in Section \ref{sec-prog-spec} we have $\sbar\; v_1\; v_2\; v_3\; \num{a}\downarrow \num{n}$ for $n=\dss{\sbar}(\omega,g,h,a)$.

\item Case 2: $\omega(\hat{a})\geq |a|$. By the same reasoning, we have $v_1(\ext\; \num{a})\downarrow\num{k}$ for $k=\omega(\hat{a})\geq |a|$. Now, by our assumption $(\ast)$ we have
\begin{equation*}
(\sbar\; v_1\; v_2\; v_3\; \num{a}::x)[\num{n}/x]\btl_\nat \dss{\sbar}(\omega,g,h,a::x)\{x\mapsto n\}
\end{equation*}
for any $n$, and since $\num{n}\wtl m$ iff $n=m$ it follows that
\begin{equation*}
\lambda x\; . \; \sbar\; v_1\; v_2\; v_3\; \num{a}::x\wtl_{\nat\to\nat} \mlambda x\; . \; \dss{\sbar}(\omega,g,h,a::x).
\end{equation*}
Using in addition that $v_3\wtl h$ it is not hard to show that 
\begin{equation*}v_3\; \num{a}\; (\lambda x\; . \; \sbar\; v_1\; v_2\; v_3\; \num{a}::x)\btl_\nat h(a,\mlambda x \; . \; \dss{\sbar}(\omega,g,h,a::x))
\end{equation*}
and thus $v_3\; \num{a}\; (\lambda x\; . \; \sbar\; v_1\; v_2\; v_3\; \num{a}::x)\downarrow \num{n}$ for $n=\dss{\sbar}(\omega,g,h,a::x))$. Thus by the second derived rule we have $\sbar\; v_1\; v_2\; v_3\; \num{a}\downarrow \num{n}$ for $n=\dss{\sbar}(\omega,g,h,a)$.

\end{itemize}
Combining cases we obtain $\sbar\; v_1\; v_2\; v_3\; \num{a}\btl \dss{\sbar}(\omega,g,h,a)$, which proves the claim. We now come to the crucial part of the proof in which we utilise properties of the model $\modcont$ to establish termination of bar recursive programs. Suppose that for some $a\in\NN^\ast$ it is not the case that $\sbar\; v_1\; v_2\; v_3\; \num{a}\btl \dss{\sbar}(\omega,g,h,a)$. Then using our claim (in its contrapositive form) together with dependent choice on a metalevel (which is permitted in $\modcont$) there is an infinite sequence of numbers $b:=b_0,b_1,b_2,\ldots$ such that
\begin{equation*}
\neg (\sbar \; v_1\; v_2\; v_3\; (\num{a}::\num{b}_0::\cdots::\num{b}_{l-1})\btl \dss{\sbar}(\omega,g,h,a\ast\initSeg{b}{l}))
\end{equation*}
for all $l\in\NN$, where $\initSeg{b}{l}:=(b_0,\ldots,b_{l-1})$ denotes the initial segment of $b$ of length $l$. But it can be shown using a standard continuity argument (cf. \cite{Scarpellini(1971.0)}) that for any $\omega:\NN^\NN\to\NN$ that there exists some $L\in\NN$ such that
\begin{equation*}
\omega(\widehat{a\ast\initSeg{b}{L}})<|a|+L
\end{equation*}
and thus by an identical argument to the first case above (note that only Case 2 appeals to the assumption $(\ast)$) we have 
\begin{equation*}
\sbar\; v_1\; v_2\; v_3\; (\num{a}::\num{b}_0::\cdots::\num{b}_{L-1})\btl \dss{\sbar}(\omega,g,h,a\ast\initSeg{b}{L})
\end{equation*}
a contradiction. Thus $\sbar\; v_1\; v_2\; v_3\; \num{a}\btl \dss{\sbar}(\omega,g,h,a)$ for all $a\in\NN^\ast$.
\end{proof}

This establishes the requirement of Theorem \ref{thm-sound} for the functional symbol $\sbar$. A simple adaptation of Lemma \ref{lem-barterm} can be used to to show that the same condition holds for the auxiliary term $\sbar_1$ together with a suitably defined $\dss{\sbar_1}$, while a straightforward induction achieves the same for $\fold_\rho$, the details of which we also omit. As a result, Theorem \ref{thm-sound} yields the following:
\begin{thm}
For all closed terms $e:\rho$ in the bar recursive target language of Section \ref{sec-prog-spec} we have $e\downarrow v$ for some $v$. 
\end{thm}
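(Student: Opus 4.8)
The plan is to invoke Theorem~\ref{thm-sound} for the bar recursive target language of Section~\ref{sec-prog-spec}. By construction, that language extends the list-based variant of System~T with the extra function symbols $\ext$, $\sbar$, $\sbar_1$ (alongside the list recursor $\fold_\rho$ and the basic operators). Soundness of $\ds{\mt{\cdot}}$ w.r.t.\ $\bnd$ then immediately yields that every closed $e:\rho$ satisfies $e\bnd_\rho\ds{\mt{e}}\xi$, and unwinding the definition of $\bnd$ gives in particular the existence of a value $v$ with $e\downarrow v$. So the whole proof reduces to checking the hypotheses of Theorem~\ref{thm-sound} for the present instantiation ($C=\{0\}$, $D_\nat=\NN$, $D_{\nat^\ast}=\NN^\ast$, continuous function spaces, with $\wtl$ as defined in Section~\ref{sec-simple-ds}).

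First I would note that compatibility of $(\epsilon,\inc,\comp)$ with $\Downarrow$ is automatic, exactly as observed at the start of Section~\ref{sec-simple}, since $C=\{0\}$ collapses the monadic part and the rules of Definition~\ref{def-compmon} become the rules of Figure~\ref{fig-bigstep}. Compatibility of the constructor interpretations $\dss{\zero},\dss{\suc},\dss{\emp},\dss{\cons}$ with $\wtl$ was already checked in Section~\ref{sec-simple-ds}. So the remaining work is the function-symbol condition: for each $f:\rho_1\to\dots\to\rho_n\to\rho$ one must show $fv_1\dots v_n\bnd_\rho\dss{f}a_1\dots a_n$ whenever $v_i\wtl_{\rho_i}a_i$. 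For the basic operators $+,\times,<,\len,\ext$ this is a routine computation using their defining rules and the fact that the only values of type $\nat$ (resp.\ $\nat^\ast$) are numerals $\num{n}$ (resp.\ $\num{a}$): e.g.\ if $v_1\wtl_\nat m_1$ and $v_2\wtl_\nat m_2$ then $v_1=\num{m_1}$, $v_2=\num{m_2}$, so $v_1+v_2\downarrow\num{m_1+m_2}$ and $\num{m_1+m_2}\wtl_\nat m_1+m_2=\dss{+}(m_1,m_2)$, and similarly for the others. For $\fold_\rho$ the condition follows by a straightforward induction on the length of the list argument $\num{a}$, mimicking the termination argument for $\rec_\rho$ in the corollary of Section~\ref{sec-simple-red} but using the derived rules for $\fold$. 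For $\sbar$ the condition is precisely Lemma~\ref{lem-barterm}, which has already been established; and for the auxiliary symbol $\sbar_1$ one defines $\dss{\sbar_1}(\omega,g,h,a,k):=g(a)$ if $k=0$ and $h(a,\mlambda x\,.\,\dss{\sbar}(\omega,g,h,a\ast x))$ otherwise, then adapts the proof of Lemma~\ref{lem-barterm} (splitting on whether the last numeral argument is $\zero$ or $\suc z$ and invoking Lemma~\ref{lem-barterm} itself in the successor case) to verify $\sbar_1\,v_1\,v_2\,v_3\,\num{a}\,\num{k}\bnd_\nat\dss{\sbar_1}(\omega,g,h,a,k)$; one also checks that $\dss{\sbar}$ and $\dss{\sbar_1}$ so defined cohere with the $\sbar/\sbar_1$ rewrite rules.

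With all hypotheses of Theorem~\ref{thm-sound} in place, soundness gives $e\bnd_\rho\ds{\mt{e}}\xi$ for the empty environment, i.e.\ there is a value $v$ with $e\Downarrow^{0}_\rho v$, which by the definition $e\Downarrow^0_\rho v:\Leftrightarrow e\downarrow v$ of Section~\ref{sec-simple} is exactly $e\downarrow v$. This completes the proof.

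The main obstacle is the verification for $\sbar$, and that has been discharged already in Lemma~\ref{lem-barterm} — the delicate point there being the use of the continuity of $\omega$ in the model $\modcont$, together with dependent choice at the metalevel, to rule out an infinite branch along which the well-foundedness condition $\omega(\hat a)<|a|$ never fires. Everything else ($\ext$, the arithmetic operators, $\fold_\rho$, $\sbar_1$) is routine, and I would present those as a short paragraph of sketched inductions rather than full calculations.
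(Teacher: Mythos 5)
Your proposal follows exactly the paper's route: it reduces the theorem to the hypotheses of Theorem~\ref{thm-sound}, notes that compatibility of $(\epsilon,\inc,\comp)$ and of the constructor interpretations is already in place, discharges $\sbar$ via Lemma~\ref{lem-barterm}, and handles $\sbar_1$ by a suitable adaptation and $\fold_\rho$ by a routine induction — which is precisely the (sketched) argument the paper gives immediately before stating the theorem. Your explicit definition of $\dss{\sbar_1}$ is a correct instance of the ``suitably defined $\dss{\sbar_1}$'' the paper leaves implicit, so the proposal is correct and essentially identical in approach.
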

The above theorem represents a normalization result for bar recursion, where for illustrative purpose we only consider the bar recursor of lowest type. Strong normalization of bar recursion (of arbitrary type) was first established by \cite{Vogel(1976.0)}, and then (without infinite terms) in \cite{Bezem(1985.1)}. Our approach, which also replaces infinite terms by appealing to the construction of choice sequences in the model, has been explored in much more detail by Berger \cite{Berger(2005.0),Berger(2006.0)}, where also normalization results for variants of bar recursion (including the so-called BBC functional \cite{BBC(1998.0)} and open recursion \cite{Berger(2004.0)}) are established. One crucial difference is that we work in a total model instead of a partial model, and it is open whether or not we can extend our framework (which works in total models) to incorporate other forms of recursion such as those mentioned above.

\begin{rem}
\label{rem-bars}
A well known result of Schwichtenberg \cite{Schwichtenberg(1979.0)} asserts that bar recursion of lowest type when applied to primitive recursive parameters is actually definable within System T (a generalisation of which has been more recently presented in \cite{OliStei(2017.0)}). Thus our bar recursive language as presented here would technically also be definable in System T, and as such a detour through the continuous functionals is not strictly necessary in order to prove termination. However, as already pointed out in Remark \ref{rem-bar}, we could readily extend our language to include bar recursion of arbitrary type, and the termination proof given as Lemma \ref{lem-barterm} would generalise accordingly. We also note that our approach is modular, and thus any extension of our bar recursive language with new function symbols which can be given a suitable interpretation is also terminating as a whole.
\end{rem}

\subsection{Majorizability}
\label{sec-simple-maj}

Our final simple application for the case $C=\{0\}$ will be an adaptation of Howard's majorizability relation \cite{Howard(1973.0)} to higher-order rewrite systems. Majorizability is an extension of the usual ordering $\leq$ on $\NN$ to functionals of arbitrary finite type. In addition to forming the basis for interesting models of higher-order calculi, majorizability plays an essential role in the \emph{proof mining} program, where it is used in conjunction with G\"{o}del's functional interpretation to form the \emph{monotone} functional interpretation (introduced in \cite{Kohlenbach(1996.0)}), which is crucial for interpreting forms of compactness related to the binary K\"{o}nig's lemma (for a detailed background on majorizability and its use in proof mining, see \cite{Kohlenbach(2008.0)}). For simplicity we work over a target language with a single base type $\nat$ with $D_\nat=\NN$. We define
\begin{equation*}
\num{n}\wtl_\nat m:\Leftrightarrow n\leq m. 
\end{equation*}
In this case, the logical relation $e\btl_\rho a$ is just a variant of the usual majorizability relation $\maj{\rho}$ at all finite type, which in an equational setting is defined as follows (cf. \cite{Howard(1973.0)} or \cite[Definition 3.34]{Kohlenbach(2008.0)})
\begin{equation*}
\begin{aligned}
x^\ast\maj{\nat} x&:\Leftrightarrow x^\ast\geq x \\
x^\ast\maj{\rho\to\tau} x&:\Leftrightarrow \forall y^\ast,y(y^\ast\maj{\rho} y\Rightarrow x^\ast y^\ast\maj{\tau} xy).
\end{aligned}
\end{equation*}
We compare this to our relation, which is based on the same idea but now phrased in terms of our programming language, and can be expanded as follows:
\begin{equation*}
\begin{aligned}
e\btl_\nat m&\Leftrightarrow (\exists n\in\NN)(e\downarrow \num{n}\wedge n\leq m)\\
e\btl_{\rho\to\tau} f&\Leftrightarrow (\exists u)(e\downarrow u\wedge (\forall v,a)(v\lhd_\rho a\Rightarrow (\exists w)(uv\downarrow w\wtl_\tau f(a)))). 
\end{aligned}
\end{equation*}
To illustrate this correspondence further, note that in an equational calculus we would have
\begin{equation*}
x^\ast \maj{\nat\to\nat} x:\Leftrightarrow \forall y^\ast,y(y^\ast\geq y\Rightarrow x^\ast y^\ast\geq xy)
\end{equation*}
whereas in our setting this would be rendered as
\begin{equation*}
e\btl_{\nat\to\nat} f\Leftrightarrow (\exists u)(e\downarrow u\wedge (\forall n,m)(n\leq m\Rightarrow (\exists k)(u\num{n}\downarrow \num{k}\wedge k\leq fm))).
\end{equation*}
In fact, we can prove the following key lemma, which is analogous to \cite[Lemma 3.35 (iii)]{Kohlenbach(2008.0)}:
\begin{lem}
\label{lem-maj}
Suppose that $e:\rho$ for $\rho:=\rho_1\to\ldots\to\rho_n\to\nat$. Then
\begin{equation*}
e\btl_\rho g\Leftrightarrow (\forall v_1,a_1,\ldots,v_n,a_n)\left(\bigwedge_{i=1}^n v_i\wtl a_i\Rightarrow (\exists k)( ev_1\ldots v_n\downarrow \num{k}\wedge k\leq g(a_1,\ldots,a_n))\right)
\end{equation*}
\end{lem}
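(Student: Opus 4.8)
The plan is to prove the equivalence by induction on $n$ (equivalently, on the structure of $\rho$), simply unfolding the definition of $\btl_\rho$ at a function type and at the base type $\nat$ and feeding in the induction hypothesis. The statement is in essence a currying property of the bounding relation, in the spirit of the cited \cite[Lemma 3.35(iii)]{Kohlenbach(2008.0)}; the only genuine work lies in reconciling the operational behaviour of an iterated application $ev_1\ldots v_n$ with the nested $\downarrow$-steps that appear when $\btl_{\rho_1\to\tau}$ is successively unfolded.

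First I would record two auxiliary facts about the operational semantics of Figure \ref{fig-bigstep}. Fact (i): for a closed $e:\sigma\to\sigma'$ and a value $v:\sigma$, if $ev\downarrow w$ then $e\downarrow u$ for some value $u$, and moreover, once $e\downarrow u$, one has $ev\downarrow w\iff uv\downarrow w$ for every $w$. Both directions follow by a short case analysis on which rule of Figure \ref{fig-bigstep} concludes $ev\downarrow w$ (the $\beta$-rule, the function-symbol rule, the general application rule, or the value rule — in all cases except the general application rule $e$ is forced to be a value), together with determinacy of $\downarrow$. Iterating (i) gives that $ev_1\ldots v_n\downarrow\num k$ holds iff $e\downarrow u$ for some value $u$ and $uv_1\ldots v_n\downarrow\num k$. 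Fact (ii): for every type $\sigma$ there are a value $v$ and an $a\in\ds{\mt\sigma}$ with $v\wtl_\sigma a$; this is a trivial induction on $\sigma$, taking constant terms $\lambda\vec x.\num 0$ on the term side and the corresponding constant functionals on the semantic side (which exist in the full set-theoretic or continuous model). Fact (ii) guarantees that the right-hand side of the lemma is never vacuously true, which is what lets us extract a value for $e$ from it.

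For the base case $n=0$ we have $\rho=\nat$, and both sides collapse to $(\exists k)(e\downarrow\num k\wedge k\le g)$, using that the only values of type $\nat$ are numerals $\num k$ and that $\num k\wtl_\nat g\iff k\le g$. For the induction step, write $\rho=\rho_1\to\tau$ with $\tau=\rho_2\to\ldots\to\rho_n\to\nat$. Unfolding, $e\btl_{\rho_1\to\tau}g$ says that $e\downarrow u$ for some value $u$ with $uv_1\btl_\tau g(a_1)$ for all $v_1\wtl_{\rho_1}a_1$; by Fact (i) (applied with the value $u$ of $e$), $uv_1\btl_\tau g(a_1)\iff ev_1\btl_\tau g(a_1)$, and then the induction hypothesis applied to the closed term $ev_1:\tau$ rewrites $ev_1\btl_\tau g(a_1)$ as $(\forall v_2\wtl a_2,\ldots,v_n\wtl a_n)(\exists k)(ev_1v_2\ldots v_n\downarrow\num k\wedge k\le g(a_1)(a_2,\ldots,a_n))$. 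Hence $e\btl_{\rho_1\to\tau}g$ is equivalent to the conjunction of ``$e$ evaluates to some value'' and $(\forall v_1\wtl a_1,\ldots,v_n\wtl a_n)(\exists k)(ev_1\ldots v_n\downarrow\num k\wedge k\le g(a_1,\ldots,a_n))$. The first conjunct is redundant: if the second holds, choose any admissible tuple $\vec v\wtl\vec a$ (Fact (ii)) to obtain $ev_1\ldots v_n\downarrow\num k$, whence $e\downarrow u$ for some $u$ by Fact (i); conversely $e\btl_\rho g$ trivially entails that $e$ evaluates. This gives precisely the stated right-hand side.

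The main obstacle, such as it is, is purely bookkeeping: getting Fact (i) right — in particular being careful that in a call-by-value calculus the intermediate expressions $ev_1\ldots v_i$ may themselves already be values, so the ``reduction pipeline'' $e\downarrow u$, $uv_1\downarrow w_1,\ldots$ may contain trivial steps — and remembering to invoke Fact (ii) so that the right-hand side genuinely forces termination of $e$. Beyond that, the proof is routine unfolding of $\btl$ and the induction hypothesis.
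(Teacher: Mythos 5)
Your proof is correct and follows essentially the same route as the paper's: induction on $n$, unfolding $\btl$ at each arrow type, with the key operational fact $ev\downarrow w\Leftrightarrow(\exists u)(e\downarrow u\wedge uv\downarrow w)$ (your Fact (i)) used to collapse the nested evaluations. Your Fact (ii), ensuring the right-hand side is never vacuous so that the conjunct $(\exists u)(e\downarrow u)$ can be recovered, is a point the paper's proof passes over silently, and it is a worthwhile addition.
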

\begin{proof}
Induction on $n$. For $n=0$ this is just the definition of $\btl_\nat$, and for the induction step we have
\begin{equation*}
\begin{aligned}
&e\btl g\\
\Leftrightarrow &(\exists u)(e\downarrow u\wedge (\forall v_0,a_0)(v_0\wtl a_0\Rightarrow uv_0\btl g(a_0))\\
\Leftrightarrow &(\exists u)(e\downarrow u\wedge (\forall v_0,a_0,\ldots v_n,a_n)(\bigwedge_{i=0}^n v_i\wtl a_i\Rightarrow (\exists k)(uv_0\ldots v_n\downarrow \num{k}\wedge k\leq g(a_0,\ldots a_n)))\\
\Leftrightarrow & (\forall v_0,a_0,\ldots v_n,a_n)(\bigwedge_{i=0}^n v_i\wtl a_i\Rightarrow (\exists k)(ev_0\ldots v_n\downarrow \num{k}\wedge k\leq g(a_0,\ldots a_n))
\end{aligned}
\end{equation*}
where for the last step we use $ev\downarrow w\Leftrightarrow (\exists u)(e\downarrow u\wedge uv\downarrow w)$, which follows directly from the operational semantics of the language.
\end{proof}
The following lemma is analogous to \cite[Lemma 3.66]{Kohlenbach(2008.0)}.
\begin{lem}
\label{lem-maj0}
Suppose that $e:\nat\to\rho$ for $\rho:=\rho_1\to\ldots\to\rho_k\to\nat$ and that $e\; \num{n}\btl_\rho g(n)$ for all $n\in\NN$. Then $e\btl_{\nat\to\rho} g^M$ for 
\begin{equation*}
g^M(m,a_1,\ldots,a_k):=\max\{g(i,a_1,\ldots,a_k)\; | \; i\leq m\}
\end{equation*}
\end{lem}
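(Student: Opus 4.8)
The plan is to reduce the claim to Lemma~\ref{lem-maj} and then verify a monotonicity property of $g^M$. First I would invoke Lemma~\ref{lem-maj} with the type $\nat\to\rho = \nat\to\rho_1\to\ldots\to\rho_k\to\nat$, so that establishing $e\btl_{\nat\to\rho} g^M$ amounts to showing
\begin{equation*}
(\forall \num{n}, m, v_1, a_1, \ldots, v_k, a_k)\left(n\leq m \wedge \bigwedge_{i=1}^k v_i\wtl a_i \Rightarrow (\exists j)(e\,\num{n}\,v_1\ldots v_k\downarrow \num{j}\wedge j\leq g^M(m,a_1,\ldots,a_k))\right),
\end{equation*}
where I have used that $\num{n}\wtl_\nat m$ iff $n\leq m$, and that the only values of type $\nat$ are numerals $\num{n}$.

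Next I would use the hypothesis $e\,\num{n}\btl_\rho g(n)$ together with Lemma~\ref{lem-maj} applied to $e\,\num{n}:\rho$: for the given $v_1\wtl a_1,\ldots,v_k\wtl a_k$ this yields some $j$ with $e\,\num{n}\,v_1\ldots v_k\downarrow \num{j}$ and $j\leq g(n,a_1,\ldots,a_k)$. It then remains to observe that $g(n,a_1,\ldots,a_k)\leq g^M(m,a_1,\ldots,a_k)$: this is immediate from the definition of $g^M$ as a maximum over $i\leq m$, since $n\leq m$. Chaining the two inequalities gives $j\leq g^M(m,a_1,\ldots,a_k)$, which is exactly what is required.

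I do not expect any real obstacle here; the lemma is essentially a bookkeeping step that repackages the pointwise bounds $g(n)$ into a single majorant $g^M$ that is monotone in its first argument. The only mild subtlety is being careful about the direction of the quantifiers when combining the two applications of Lemma~\ref{lem-maj} — one peels off the leading $\nat$ argument, the other handles the remaining $\rho_1,\ldots,\rho_k$ — but since $e\,\num{n}$ evaluates to itself (it is a value, being a partially applied function symbol or a numeral-instantiated term, and in any case $e\,\num{n}\downarrow u$ for some $u$ by the operational semantics), no additional reduction analysis is needed beyond what Lemma~\ref{lem-maj} already provides. Thus the whole argument is a short two-step composition plus the trivial monotonicity of $\max$.
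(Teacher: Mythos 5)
Your proof is correct and takes essentially the same route as the paper: both reduce the claim to the characterisation given by Lemma \ref{lem-maj} and then exploit that $g^M(m,a_1,\ldots,a_k)$ dominates $g(n,a_1,\ldots,a_k)$ whenever $n\leq m$. The only cosmetic difference is that the paper closes with a ``straightforward induction on $m$'' whereas you apply Lemma \ref{lem-maj} a second time to unpack the hypothesis $e\,\num{n}\btl_\rho g(n)$ and finish directly via the monotonicity of the maximum.
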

\begin{proof}
By Lemma \ref{lem-maj} we have that $e\btl g^M$ is equivalent to
\begin{equation*}
(\forall m,n,v_1,a_1,\ldots,v_k,a_k)(m\leq n\wedge \bigwedge_{i=1}^k v_i\wtl a_i\Rightarrow (\exists l)(e\num{n}v_1\ldots v_k\downarrow \num{l}\wedge l\leq g^M(m,a_1,\ldots,a_k)))
\end{equation*}
and the result then follows by a straightforward induction on $m$.
\end{proof}
Now, supposing for example that we work in an instance of our language, as in Section \ref{sec-prog-syst}, where there are just two constructor symbols $\zero$ and $\suc$, which we interpret in the usual way as $\dss{\zero}:=0$ and $\dss{\suc}(n):=n+1$ so that they are compatible with $\wtl$. In this case our main soundness theorem becomes
\begin{thm}
\label{thm-maj}
If for all function symbols $f:\rho_1\to\ldots\to \rho_n\to\rho$ of our language there is a suitably defined $\dss{f}$ satisfying $fv_1\ldots v_n\btl_\rho \dss{f}a_1\ldots a_n$ whenever $v_i\wtl_{\rho_i}a_i$ for all $i=1,\ldots,n$ and values $v_i$, then any term $t:\rho$ has an interpretation $\ds{\mt{t}}\in \ds{\rho}$ which majorizes it, in the sense that $\sigma\wtl \xi$ implies $t\sigma\btl_\rho \ds{\mt{t}}\xi$. In particular, whenever $t:\nat$ and $\sigma\wtl\xi$, we have $e\sigma\downarrow \num{n}$ with $n\leq \ds{\mt{t}}\xi$.
\end{thm}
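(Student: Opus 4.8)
The plan is to derive Theorem~\ref{thm-maj} as a direct instance of the abstract soundness theorem, Theorem~\ref{thm-sound}, so that the only work is to check that the three hypotheses of that theorem hold for the present instantiation (namely $C=\{0\}$, $D_\nat=\NN$, $\num{n}\wtl_\nat m:\Leftrightarrow n\leq m$, $e\Downarrow^0_\rho v:\Leftrightarrow e\downarrow v$, and $\dss{\zero}=0$, $\dss{\suc}(n)=n+1$). First, $(\epsilon,\inc,\comp)$ is compatible with $\Downarrow$: since $C=\{0\}$ these are the unique constant maps and $e\Downarrow^0_\rho v$ abbreviates $e\downarrow v$, so the three rules of Definition~\ref{def-compmon} are literally the first three rules of Figure~\ref{fig-bigstep}, as already observed at the start of Section~\ref{sec-simple}. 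Second, the $\dss{c}$ are compatible with $\wtl$: for the nullary constructor $\zero$ this asks $\zero\wtl_\nat 0$, i.e.\ $0\leq 0$; for $\suc$, if $\num{n}\wtl_\nat m$, i.e.\ $n\leq m$, then $\suc\,\num{n}=\num{n+1}$ and $\dss{\suc}(m)=m+1$ with $n+1\leq m+1$, so indeed $\suc\,\num{n}\wtl_\nat\dss{\suc}(m)$. Third, the condition that Theorem~\ref{thm-sound} imposes on function symbols is exactly the hypothesis of Theorem~\ref{thm-maj}.

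With the hypotheses in place, Theorem~\ref{thm-sound} tells us that $\ds{\mt{\cdot}}$ is sound with respect to $\btl$ in the sense of Definition~\ref{def-sound}: for every $\Gamma\vdash t:\rho$ and every $\sigma\wtl_\Gamma\xi$ we have $t\sigma\btl_\rho\ds{\mt{t}}\xi$. By the discussion in Section~\ref{sec-simple-maj}, under the choice $\num{n}\wtl_\nat m:\Leftrightarrow n\leq m$ the relation $\btl_\rho$ is precisely the programming-language rendering of Howard's majorizability at type $\rho$, so this statement says exactly that $\ds{\mt{t}}$ majorizes $t$; this is the first assertion of the theorem.

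For the ``in particular'' clause, specialise to $t:\nat$ and $\sigma\wtl\xi$. Unwinding the definition of $\btl_\nat$, the conclusion $t\sigma\btl_\nat\ds{\mt{t}}\xi$ states that there is a value $v\in\val{\nat}$ with $t\sigma\Downarrow^{\pi_0(\ds{\mt{t}}\xi)}_\nat v$ and $v\wtl_\nat\pi_1(\ds{\mt{t}}\xi)$. Since $C=\{0\}$ we have $\pi_0(\ds{\mt{t}}\xi)=0$ and $\Downarrow^0$ is $\downarrow$; since the only values of type $\nat$ are the numerals, $v=\num{n}$ for some $n\in\NN$; and $\num{n}\wtl_\nat\pi_1(\ds{\mt{t}}\xi)$ means $n\leq\pi_1(\ds{\mt{t}}\xi)$, which under the standard identification $\ds{\gamma\times\mt{\nat}}=\{0\}\times\NN\cong\NN$ reads $n\leq\ds{\mt{t}}\xi$. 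Hence $t\sigma\downarrow\num{n}$ with $n\leq\ds{\mt{t}}\xi$, as required.

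There is no genuinely difficult step here: the content of the argument is entirely packaged inside Theorem~\ref{thm-sound}, and what remains is bookkeeping. The only place that calls for a little care is tracking the isomorphisms $\{0\}\times X\cong X$ used silently throughout Section~\ref{sec-simple}, both to regard $\ds{\mt{t}}\xi$ as an element of $\ds{\rho}$ rather than of $C\times\ds{\mt{\rho}}$ and, in the base case $\rho=\nat$, as a natural number.
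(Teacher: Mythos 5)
Your proposal is correct and takes exactly the route the paper intends: the paper states Theorem~\ref{thm-maj} as a direct instance of Theorem~\ref{thm-sound} (prefaced only by the remark that $\dss{\zero}:=0$ and $\dss{\suc}(n):=n+1$ are compatible with $\wtl$ and that $(\epsilon,\inc,\comp)$ are trivially compatible with $\Downarrow$ when $C=\{0\}$), and offers no further proof. Your explicit verification of the three hypotheses and the unwinding of $\btl_\nat$ for the ``in particular'' clause is precisely the bookkeeping the paper leaves implicit.
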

\begin{cor}
\label{cor-Tmaj}
Any term $t:\rho$ in System T has an interpretation $\ds{\mt{t}}$ which majorizes it.
\end{cor}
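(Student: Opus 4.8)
By Theorem \ref{thm-maj} it suffices to verify its hypothesis for System T in the variant of Section \ref{sec-prog-syst}, i.e.\ to exhibit for each recursor $\rec_\rho\colon\rho\to(\nat\to\rho\to\rho)\to\nat\to\rho$ a denotation $\dss{\rec_\rho}$ with $\rec_\rho\;v_1\;v_2\;v_3\btl_\rho\dss{\rec_\rho}\;b_1\;b_2\;b_3$ whenever $v_1\wtl_\rho b_1$, $v_2\wtl_{\nat\to\rho\to\rho}b_2$ and $v_3\wtl_\nat b_3$. Since $\nat$ is the only base type, $\rho$ has the form $\rho_1\to\ldots\to\rho_k\to\nat$, so Lemmas \ref{lem-maj} and \ref{lem-maj0} are applicable; moreover every value $v_3\colon\nat$ is a numeral $\num{n}$, and $\num{n}\wtl_\nat b_3$ just says $n\leq b_3$.

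Fixing $v_1\wtl_\rho b_1$ and $v_2\wtl_{\nat\to\rho\to\rho}b_2$, I would define $g=g_{b_1,b_2}\colon\NN\to\ds{\rho}$ by $g(0):=b_1$ and $g(l+1):=b_2\;l\;(g\;l)$, and then prove by induction on $n$ that $\rec_\rho\;v_1\;v_2\;\num{n}\btl_\rho g(n)$, using the derived rules for $\rec_\rho$ from Section \ref{sec-prog-syst}. The base case is immediate from $\rec_\rho\;v_1\;v_2\;\zero\downarrow v_1$ and $v_1\wtl_\rho b_1=g(0)$. For the step: from $v_2\wtl_{\nat\to\rho\to\rho}b_2$ and $\num{n}\wtl_\nat n$ we obtain $v_2\;\num{n}\downarrow u$ with $u\wtl_{\rho\to\rho}b_2(n)$; from the induction hypothesis, $\rec_\rho\;v_1\;v_2\;\num{n}\downarrow v$ with $v\wtl_\rho g(n)$; hence $uv\downarrow w$ with $w\wtl_\rho b_2\;n\;(g(n))=g(n+1)$; and the derived rule for $\rec_\rho$ at a successor then gives $\rec_\rho\;v_1\;v_2\;\suc(\num{n})\downarrow w$, closing the induction.

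Next I would apply Lemma \ref{lem-maj0} to the closed term $\rec_\rho\;v_1\;v_2\colon\nat\to\rho$, obtaining $\rec_\rho\;v_1\;v_2\btl_{\nat\to\rho}(g_{b_1,b_2})^M$. Unwinding this: $\rec_\rho\;v_1\;v_2$ is a value (it is $\rec_\rho$ applied to only $2<\ar(\rec_\rho)$ arguments), hence reduces to itself, and for every $n\leq m$ we get $(\rec_\rho\;v_1\;v_2)\;\num{n}\btl_\rho(g_{b_1,b_2})^M(m)$, which is literally the term $\rec_\rho\;v_1\;v_2\;\num{n}$. Setting $\dss{\rec_\rho}\;b_1\;b_2\;b_3:=(g_{b_1,b_2})^M(b_3)$ and taking $m:=b_3\geq n$ then gives exactly the bounding condition required by Theorem \ref{thm-maj}; that theorem now yields the corollary.

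I do not expect any genuinely hard step here: the argument is the termination proof for $\rec_\rho$ (the corollary following Theorem \ref{thm-red}) with the bounding data carried along throughout the induction. The only point needing mild care is the passage from the pointwise bounds $\rec_\rho\;v_1\;v_2\;\num{n}\btl_\rho g(n)$ to a single majorant valid uniformly for all $n\leq b_3$; this is precisely the monotonisation effected by the ``$\max$ over initial segments'' construction $g^M$ of Lemma \ref{lem-maj0}, so invoking that lemma avoids having to establish a separate monotonicity property of $\btl$. One should also observe that $g_{b_1,b_2}$, and hence $\dss{\rec_\rho}$, lies in the model under consideration, which is routine as it is obtained by recursion over $\NN$ together with finite maxima.
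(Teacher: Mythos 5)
Your proof is correct and takes essentially the same route as the paper: your $g_{b_1,b_2}$ is exactly the paper's auxiliary functional $R(a,f,\cdot)$, the induction on $n$ via the derived rules for $\rec_\rho$ is the same, and the final step of monotonising through Lemma \ref{lem-maj0} and setting $\dss{\rec}(a,f,n):=R(a,f)^M(n)$ is identical. The extra care you take in spelling out why $\rec_\rho\;v_1\;v_2$ is a value and why $n\leq b_3$ suffices is implicit in the paper's (terser) argument.
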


\begin{proof}
We just need to interpret the recursor, to which end we define $R:\ds{\rho}\Rightarrow (\NN\Rightarrow\ds{\rho}\Rightarrow\ds{\rho})\Rightarrow\NN\Rightarrow\ds{\rho}$ by
\begin{equation*}
R(a,f,n)=\begin{cases}a & \mbox{if $n=0$}\\ f(n',R(a,f,n')) & \mbox{if $n=n'+1$}.\end{cases}
\end{equation*}
then it's easy to show that whenever $v_1\wtl_\rho a$, $v_2\wtl_\rho f$ and $n\in\NN$ then 
\begin{equation*}
\rec\; v_1\; v_2\; \num{n}\btl_\rho R(a,f,n)
\end{equation*}
using induction on $n$. Then by Lemma \ref{lem-maj0} it follows that 
\begin{equation*}
\rec\; v_1\; v_2\btl_{\nat\to\rho} R(a,f)^M,
\end{equation*}
and thus defining $\dss{\rec}(a,f,n)=R(a,f)^M(n)$ it follows that for $n\leq m$ we have $\rec\; v_1\; v_2\; \num{n}\btl_\rho\dss{\rec}(a,f,m)$.
\end{proof}
Corollary \ref{cor-Tmaj} is closely related to classic results of Howard \cite{Howard(1973.0)} and Bezem \cite{Bezem(1985.0)} which show that various type structures of majorizable functionals are a model of System T. This also holds more generally for bar recursion, and we conjecture that a variant of Theorem \ref{thm-maj} dealing with $\sbar$ as defined in Section \ref{sec-prog-spec} can be proven within our framework, though we do not attempt to work out the details here. The most important property of our Theorem \ref{thm-maj} is that it can be extended in a uniform way to a number of different programming languages.

\section{Extracting moduli of continuity for functional languages}
\label{sec-cont}

We now utilise the monadic part of our soundness theorem for the first time, to present a simple framework in which we can extract moduli of continuity from those functionals of type level $2$ which are definable by a term in our target language.

That all type level $2$ functionals definable in System T have a modulus of continuity also definable in T was apparently first presented by Kreisel in lectures from 1971/72, and early proofs can be found in e.g.~\cite{Schwichtenberg(1977.0)} and \cite{Troelstra(1973.0)} (Theorem 2.3.9). The approach presented here is more closely connected to recent work, where continuity and the extraction of corresponding moduli are established by appealing to monads and effects (e.g.~\cite{CoqJab(2012.0),Escardo(2013.0),Hernest(2007.1),Kawai(2019.0),BicRah(2018.0),Xu(2020.0)}, though this list is by no means exhaustive). A related monadic translation in all finite types due to the present author, but using the state monad and working in an equational setting, is also the basis for \cite{Powell(2018.1)}. 

Our main motivation in studying continuity lies not in the fact that this result is new or surprising in itself, but in demonstrating that it fits elegantly into our uniform framework, which can, moreover, be readily extended to other languages and forms of recursion. 

\subsection{The languages $\bP$, $\bP^T$ and $\bP_g$}
\label{sec-cont-bP}

We illustrate our approach by working in a target language with a single datatype $\nat$ and the usual constructors $\zero$ and $\suc$, and assume that we have fixed some collection of function symbols $\func$. We call this base language $\bP$, and observe that our simple variant of System T outlined in Section \ref{sec-prog-syst} is an instance of $\bP$, which we later label as $\bP^T$. 

Now, to an arbitrary function $g:\NN^\NN$ we associate a variant $\bP_g$ (resp. $\bP_g^T$) of our base language, whose function symbols are the same as those of $\bP$ but now extended to include an oracle $\alpha:\nat\to\nat$ whose defining equations are given by
\begin{equation*}
\alpha \; \num{n}\leadsto \num{g(n)}
\end{equation*}
for each $n\in\NN$, where here $\num{g(n)}$ denotes the numeral representation of $g(n)$. Note that for any two functions $g,h:\NN^\NN$ the languages $\bP_g$ and $\bP_h$ have the same terms, and so differ only in the operational semantics of $\alpha$. We clearly distinguish the operational semantics of $\bP$ from that of $\bP_g$ by using $e\downarrow_g v$ to denote the big-step relation of the latter, which also includes the derived rule $\alpha\; \num{n}\downarrow_g \num{g(n)}$.

\subsection{The monadic translation}
\label{sec-cont-mon}

We now define $D_\nat:=\NN$ as usual, but this time set $C:=\NN^\ast$. Fixing some $g:\NN^\NN$, we define the logical relation $\btl_g$ on terms of $\bP_g$ via $\num{n}\wtl m:\Leftrightarrow n=m$ and
\begin{equation*}
e\Downarrow^c_g v:\Leftrightarrow (e\downarrow_g v\wedge \forall h(g=_c h\Rightarrow e\downarrow_h v))
\end{equation*}
where 
\begin{equation*}
g=_c h:\Leftrightarrow(\forall i\in c)(g(i)=h(i))
\end{equation*}
Finally, our denotational semantics $\ds{\cdot}_g$ of the monadic metalanguage arising from $\bP_g$, which is also parametrised by $g$, is defined by firstly instantiating our monadic components as
\begin{equation*}
\epsilon:=[] \ \ \ \ \ \ \inc(c):=c \ \ \ \ \ \ \comp(c_0,c_1,c_2):=c_0\ast c_1\ast c_2.
\end{equation*}
and interpreting our constructors in the obvious way. We assume that for each function symbol we have an interpretation $\dss{f}$ which is independent of $g$, and as such, this parameter only plays a role in the interpretation of the oracle function symbol $\alpha:\nat\to\nat$, whereby we set $\dss{\alpha}_g:\NN\Rightarrow\NN^\ast\times\NN$ to be
\begin{equation*}
\dss{\alpha}_g(n):=([n],g(n)).
\end{equation*}
If we are working in either the full set theoretic model $\modset$ or a continuous model $\modcont$, everything up until now is well defined. As before, it is clear that the interpretation of the constructors is compatible with $\wtl_\nat$, and we also have the following:
\begin{lem}
The tuple $(\epsilon,\inc,\comp)$ is compatible with $\Downarrow_g$.
\end{lem}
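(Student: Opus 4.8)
The plan is to verify the three rules of Definition \ref{def-compmon} directly, with $\epsilon := []$, $\inc(c) := c$ and $\comp(c_0,c_1,c_2) := c_0 \ast c_1 \ast c_2$, unfolding in each case the definition $e\Downarrow^c_g v :\Leftrightarrow (e\downarrow_g v \wedge \forall h(g=_c h \Rightarrow e\downarrow_h v))$. Each rule has two conjuncts to establish: the ``plain'' reduction $\downarrow_g$, which follows immediately from the operational semantics of Figure \ref{fig-bigstep} (these are exactly the corresponding rules of that figure), and the ``stability under $=_c$'' conjunct, which is the part that actually uses the structure of the cost algebra on $\NN^\ast$.

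For the first rule, $v\Downarrow^{[]}_g v$: we have $v\downarrow_g v$ since $v$ is a value, and the second conjunct $\forall h(g=_{[]}h \Rightarrow v\downarrow_h v)$ holds trivially since values reduce to themselves in every $\bP_h$. For the second rule, suppose $r[u/x]\Downarrow^c_\tau v$, i.e.\ $r[u/x]\downarrow_g v$ and $r[u/x]\downarrow_h v$ whenever $g=_c h$. Then $(\lambda x.r)u\downarrow_g v$ by the $\beta$-rule, and if $g=_{\inc(c)}h$, i.e.\ $g=_c h$ since $\inc(c)=c$, then $r[u/x]\downarrow_h v$ and hence $(\lambda x.r)u\downarrow_h v$; so $(\lambda x.r)u\Downarrow^{\inc(c)}_\tau v$. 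For the third rule, suppose $e\Downarrow^{c_0}_{\rho\to\tau}u$, $e'\Downarrow^{c_1}_\rho v$ and $uv\Downarrow^{c_2}_\tau w$. The plain reductions combine via the application rule of Figure \ref{fig-bigstep} to give $ee'\downarrow_g w$. For the stability conjunct, assume $g=_{\comp(c_0,c_1,c_2)}h$, i.e.\ $g=_{c_0\ast c_1\ast c_2}h$; since $c_0,c_1,c_2$ are all sublists (as sets of indices) of $c_0\ast c_1\ast c_2$, this implies $g=_{c_0}h$, $g=_{c_1}h$ and $g=_{c_2}h$ simultaneously, so $e\downarrow_h u$, $e'\downarrow_h v$ and $uv\downarrow_h w$, whence $ee'\downarrow_h w$. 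This gives $ee'\Downarrow^{\comp(c_0,c_1,c_2)}_\tau w$, completing the three cases.

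The only mild subtlety, and the step I would be most careful about, is the monotonicity observation used in the third case: that $g=_{c_0\ast c_1\ast c_2}h$ entails $g=_{c_i}h$ for each $i$. This is because $\ast$ is list concatenation and $i\in c$ in the definition of $=_c$ refers to membership, so every index occurring in $c_0$, $c_1$ or $c_2$ still occurs in $c_0\ast c_1\ast c_2$; hence agreement on the larger list forces agreement on each component. One should note this explicitly rather than treat it as obvious, since it is precisely the property of the chosen $(\epsilon,\inc,\comp)$ that makes the instantiation work — $\epsilon$ is the empty list (the smallest constraint), $\inc$ does not enlarge the constraint, and $\comp$ merely unions the constraints of the three premises. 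Everything else is a routine transcription of the rules in Figure \ref{fig-bigstep}.
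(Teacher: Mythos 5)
Your proof is correct and follows essentially the same route as the paper's: you verify the three rules of Definition \ref{def-compmon} by splitting $\Downarrow_g$ into its plain $\downarrow_g$ conjunct and its stability conjunct, using that $g=_{[]}h$ is a vacuous constraint, that $\inc$ is the identity, and that agreement on $c_0\ast c_1\ast c_2$ implies agreement on each $c_i$. The paper asserts that last monotonicity fact without comment, so your making it explicit is a harmless (and reasonable) addition rather than a divergence.
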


\begin{proof}
We deal with each rule in turn. We have $v\downarrow_h v$ for any values and parameter $h$, and so $v\Downarrow^c_g v$ is true for any $c$, and in particular $c=[]$.

If $r[u/x]\Downarrow^c_g v$ then $r[u/x]\downarrow_g v$ and whenever $g=_c h$ then $r[u/x]\downarrow_h v$. But then $(\lambda x.r)u\downarrow_g v$ in $\bP_g$, and since $g=_{\inc(c)}h$ is just $g=_ch$ we have $r[u/x]\downarrow_h v$, and so also $(\lambda x.r)u\downarrow_h v$ in $\bP_h$.

Finally, if $e\Downarrow^{c_0}_{\rho\to\tau,g} u$, $e'\Downarrow^{c_1}_{\rho,g} v$ and $uv\Downarrow^{c_2}_{\tau,g} w$, then we have $e\downarrow_g u$, $e'\downarrow_g v$ and $uv\downarrow_g w$ and thus $ee'\downarrow_g w$. Now suppose that $g=_{c_0\ast c_1\ast c_2} h$. Then we have $g=_{c_i} h$ for $i=0,1,2$, and thus $e\downarrow_h u$, $e'\downarrow_h v$ and $uv\downarrow_h w$ and therefore $ee'\downarrow_h w$. This establishes $ee'\Downarrow^{c_0\ast c_1\ast c_2}_\tau w$.
\end{proof}
Our main soundness theorem gives rise in this case to the following continuity theorem.
\begin{thm}
\label{thm-cont-main}
Suppose that for all non-oracle function symbols of $f:\rho_1\to\ldots\to\rho_n\to\rho$ in the base language $\bP$ and any $g:\NN^\NN$ we have $fv_1\ldots v_k\btl_{\rho,g} \dss{f} a_1\ldots a_k$ whenever $v_i\wtl_{\rho_i,g} a_i$. Let $e:(\nat\to\nat)\to\nat$ be a closed term of $\bP$. Then we have
\begin{equation*}
(\forall g,h\in\NN^\NN)(e\alpha\downarrow_g \num{n}\wedge (g=_ch\Rightarrow e\alpha\downarrow_h \num{n})) 
\end{equation*}
for $(c,n):=\ds{\mt{e\alpha}}_g$.
\end{thm}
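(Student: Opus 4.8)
The plan is to derive this as a direct corollary of the general soundness theorem, Theorem~\ref{thm-sound}, applied to the translation associated with $\bP_g$. First I would observe that all the hypotheses of Theorem~\ref{thm-sound} are already in place: the tuple $(\epsilon,\inc,\comp)=([],\mathrm{id},\ast\text{-concatenation})$ has just been shown compatible with $\Downarrow_g$, the constructor interpretations $\dss{\zero},\dss{\suc}$ are compatible with $\wtl_\nat$ by inspection, the non-oracle function symbols satisfy the required bounding condition by the hypothesis of the present theorem, and the oracle $\alpha$ must be checked separately (see below). Granting the oracle case, Theorem~\ref{thm-sound} gives that $\ds{\mt{\cdot}}_g$ is sound with respect to $\bnd_g$, so in particular for the closed term $e\alpha:\nat$ we get $e\alpha\bnd_{\nat,g}\ds{\mt{e\alpha}}_g$ (the empty value environment trivially satisfies $\sigma\wtl_\emptyset\xi$).

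Next I would unfold what $e\alpha\bnd_{\nat,g}(c,n)$ means for $(c,n):=\ds{\mt{e\alpha}}_g$. By the definition of $\bnd_\rho$ there is a value $v$ with $e\alpha\Downarrow^{c}_{\nat,g}v$ and $v\wtl_\nat n$; since the only values of type $\nat$ are numerals and $\wtl_\nat$ is equality, $v=\num{n}$. Then by the definition of $\Downarrow^c_g$ we have $e\alpha\downarrow_g\num{n}$ together with the implication $g=_c h\Rightarrow e\alpha\downarrow_h\num{n}$ for all $h$. This is exactly the displayed conclusion, so the only remaining obligation is to establish the bounding condition for the oracle symbol $\alpha:\nat\to\nat$, i.e.\ that $\alpha\num{n}\bnd_{\nat,g}\dss{\alpha}_g(n)=([n],g(n))$ for every numeral $\num{n}$ (there is nothing to check on the argument side since $\num{n}\wtl_\nat n$ always holds).

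For the oracle case I would argue directly: in $\bP_g$ we have the derived rule $\alpha\num{n}\downarrow_g\num{g(n)}$, and likewise $\alpha\num{n}\downarrow_h\num{h(n)}$ in $\bP_h$; if $g=_{[n]}h$, meaning $g(n)=h(n)$, then $\alpha\num{n}\downarrow_h\num{g(n)}$, so $\alpha\num{n}\Downarrow^{[n]}_{\nat,g}\num{g(n)}$. Since $\num{g(n)}\wtl_\nat g(n)$, this gives precisely $\alpha\num{n}\bnd_{\nat,g}([n],g(n))$. I expect this last verification to be the only real content of the proof; the rest is bookkeeping. The main conceptual point — and the one worth stating clearly — is that the cost component $C=\NN^\ast$ accumulates exactly the set of oracle arguments queried during evaluation, so the extracted $c$ is a modulus of continuity, and the soundness theorem packages the induction over term structure that propagates this invariant.
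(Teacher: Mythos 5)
Your proposal is correct and follows essentially the same route as the paper's own proof: both reduce the statement to Theorem~\ref{thm-sound} applied to $\bP_g$, with the only substantive verification being the oracle case $\alpha\,\num{n}\btl_{\nat,g}([n],g(n))$, and both then read off the conclusion by unfolding $e\alpha\bnd_{\nat,g}(c,n)$ and the definition of $\Downarrow^c_g$. The extra remarks about the empty environment and about $C=\NN^\ast$ recording the queried arguments are accurate but not needed.
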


\begin{proof}
By assumption the translation $\ds{\mt{\cdot}}_g$ is sound with respect to $\btl_g$ for the function symbols of $\bP$ (i.e. the non-oracle function symbols of $\bP_g$) so to verify soundness of the extended language $\bP_g$ it remains to check the oracle symbol $\alpha$. But for any $n\in\NN$ we have
\begin{equation*}
\alpha\; \num{n}\downarrow_g \num{g(n)}\wedge (g=_{[n]} h\Rightarrow \alpha\; \num{n}\downarrow_h \num{h(n)}=\num{g(n)})
\end{equation*}
and thus $\alpha\; \num{n}\btl_{\nat,g} ([n],g(n))=\dss{\alpha}_g(n)$. Therefore it follows that for any closed $e':\rho$ in $\bP_g$ we have $e'\btl_{\rho,g}\ds{\mt{e'}}_g$. In particular, for $e:(\nat\to\nat)\to\nat$, setting $e':=e\alpha:\nat$ we obtain $e\alpha\btl_{\nat,g}(c,n)$ for $(c,n):=\ds{\mt{e\alpha}}_g$. But this just means that $e\alpha\Downarrow^c_g\num{n}$, which is exactly what we want to show. 
\end{proof}

\begin{cor}
\label{cor-cont-T}
Theorem \ref{thm-cont-main} holds for $\bP^T$ (i.e. System T as defined in Section \ref{sec-prog-syst}).
\end{cor}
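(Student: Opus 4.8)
The plan is to invoke Theorem~\ref{thm-cont-main}. Since $\bP^T$ is the instance of $\bP$ whose only function symbols are the recursors $\rec_\rho$, the sole thing left to verify is that each $\rec_\rho$ admits an interpretation $\dss{\rec_\rho}$, independent of $g$, meeting the bounding hypothesis of that theorem. As $\rec_\rho$ has arity $3$ and the only values of type $\nat$ are numerals, what has to be shown is that for every type $\rho$, every $g\in\NN^\NN$, every $n\in\NN$, and all $v_1,a,v_2,b$ with $v_1\wtl_{\rho,g}a$ and $v_2\wtl_{\nat\to\rho\to\rho,g}b$,
\begin{equation*}
\rec_\rho\;v_1\;v_2\;\num{n}\btl_{\rho,g}\dss{\rec_\rho}(a,b,n).
\end{equation*}

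First I would define $\dss{\rec_\rho}\in\ds{\mt{\rho}}\Rightarrow\ds{\mt{\nat\to\rho\to\rho}}\Rightarrow\NN\Rightarrow C\times\ds{\mt{\rho}}$ by primitive recursion on the numerical argument, reusing the composition operation $\circ$ introduced in Lemma~\ref{lem-compapp}:
\begin{equation*}
\dss{\rec_\rho}(a,b,0):=(\epsilon,a),\qquad\dss{\rec_\rho}(a,b,n+1):=b(n)\circ\dss{\rec_\rho}(a,b,n).
\end{equation*}
This is the same recursion as the functional $R$ appearing in the proof of Corollary~\ref{cor-Tmaj}, but now additionally threading through the writer-monad cost component; since $\epsilon$, the projections, application and $\comp$ are all available — and continuous — in $\modset$ as well as in $\modcont$, the right-hand side genuinely is an element of the model.

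The verification then proceeds by induction on $n$, using the concrete definition of $\Downarrow^c_g$. For $n=0$, the rewrite rule $\rec\;x\;y\;\zero\leadsto x$ gives $\rec_\rho\;v_1\;v_2\;\zero\downarrow_h v_1$ for every oracle $h$, since $v_1$ is already a value; hence $\rec_\rho\;v_1\;v_2\;\zero\Downarrow^{\epsilon}_g v_1$, and as $v_1\wtl_{\rho,g}a$ we conclude $\rec_\rho\;v_1\;v_2\;\zero\btl_{\rho,g}(\epsilon,a)$. For the inductive step, $v_2\wtl_{\nat\to\rho\to\rho,g}b$ together with $\num{n}\wtl_\nat n$ yields $v_2\;\num{n}\btl_{\rho\to\rho,g}b(n)$, while the induction hypothesis gives $\rec_\rho\;v_1\;v_2\;\num{n}\btl_{\rho,g}\dss{\rec_\rho}(a,b,n)$; Lemma~\ref{lem-compapp}, which applies because $(\epsilon,\inc,\comp)$ is compatible with $\Downarrow_g$ as already established in the excerpt, then delivers
\begin{equation*}
(v_2\;\num{n})(\rec_\rho\;v_1\;v_2\;\num{n})\btl_{\rho,g}b(n)\circ\dss{\rec_\rho}(a,b,n)=\dss{\rec_\rho}(a,b,n+1).
\end{equation*}
Finally, the rewrite rule $\rec\;x\;y\;(\suc z)\leadsto y\;z\;(\rec\;x\;y\;z)$ makes $\rec_\rho\;v_1\;v_2\;\suc(\num{n})$ and $(v_2\;\num{n})(\rec_\rho\;v_1\;v_2\;\num{n})$ evaluate to the same value under every oracle $h$, so the monadic relation $\Downarrow^c_g$, and hence $\btl_{\rho,g}$, transfers unchanged from the latter term to the former, closing the induction.

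I do not expect a serious obstacle: the whole argument is a cost-carrying reprise of the termination and majorizability proofs for the recursor. The two points that need a little care are the final step — noting that unfolding the function-symbol rewrite rule preserves the monadic component precisely because the rewrite is uniform in the oracle $h$, which is exactly what lets us read off the cost of $\rec_\rho\;v_1\;v_2\;\suc(\num{n})$ from that of its unfolding — and checking that the primitive-recursive definition of $\dss{\rec_\rho}$ really lives in the chosen model (for $\modcont$ this amounts to the fact that primitive recursion over $\NN$ built from continuous data stays continuous).
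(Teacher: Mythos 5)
Your proposal is correct and follows essentially the same route as the paper: the same primitive-recursive definition of $\dss{\rec}$ via the composition $\circ$ of Lemma \ref{lem-compapp} (the paper writes $([],a)$ where you write $(\epsilon,a)$, which is the same thing since $\epsilon:=[]$ here), the same induction on $n$, and the same closing observation that the rewrite rule for $\rec$ is uniform in the oracle $h$, so $\Downarrow^c_g$ transfers from the unfolding to $\rec\;v_1\;v_2\;\suc(\num{n})$. Your extra remark about the interpretation living in the model is a reasonable point the paper leaves implicit.
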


\begin{proof}
We need to find a suitable interpretation of the recursors $\rec_\rho$ independent of $g$, to which end we define $\dss{\rec}_\rho$ by
\begin{equation*}
\dss{\rec}(a,f,n)=\begin{cases}([],a) & \mbox{if $n=0$}\\ fn'\circ \dss{\rec}(a,f,n') & \mbox{if $n=n'+1$} \end{cases}
\end{equation*}
where $\circ$ is as defined in Lemma \ref{lem-compapp}.
Then using Lemma \ref{lem-compapp} together with induction it is straightforward to show that for any parameter $g$, whenever $v_1\wtl_{\rho,g} a$ and $v_2\wtl_{\nat\to\rho\to\rho,g} f$ then $\rec\; v_1\; v_2\; \num{n}\btl_{\rho,g} \dss\rec(a,f,n)$. 

To see this, note that for the base case we have $\rec\; v_1\; v_2\; \zero\downarrow_g v_1\wtl_{\rho,g} a$, and since for any $h$ it's also the case that $\rec\; v_1\; v_2\; \zero\downarrow_h v_1$ this implies that $\rec\; v_1\; v_2\; \zero\Downarrow_g^{[]} v_1$ and thus $\rec\; v_1\; v_2\; \zero\btl_{\rho,g}([],a)$.

For the induction step, by $v_2\wtl_{\nat\to\rho\to\rho,g}f$ we have $v_2\; \num{n}\btl_{\rho\to\rho,g} fn$ and by the induction hypothesis $\rec\; v_1\; v_2\; \num{n}\btl_{\rho,g}\dss{\rec}(a,f,n)$. Therefore by Lemma \ref{lem-compapp} we have
\begin{equation*}
v_2\; \num{n}\; (\rec\; v_1\; v_2\; \num{n})\btl_{\rho,g} fn\circ \dss{\rec}(a,f,n)=\dss{\rec}(a,f,n+1)
\end{equation*}
which implies that $v_2\; \num{n}\; (\rec\; v_1\; v_2\; \num{n})\Downarrow_g^c w\wtl_{\rho,g} r$ for $(c,r):=\dss{\rec}(a,f,n+1)$. But since by our operational semantics $v_2\; \num{n}\; (\rec\; v_1\; v_2\; \num{n})\downarrow_h w$ implies $\rec\; v_1\; v_2\; \suc(\num{n})\downarrow_h w$ for any $h$, it follows that $\rec\; v_1\; v_2\; \suc(\num{n})\Downarrow^c_g w$ and hence $\rec\; v_1\; v_2\; \suc(\num{n})\btl_{\rho,g} (c,r)$.
\end{proof}

\subsection{Continuity of functionals $\NN^\NN\to\NN$ definable in System T}
\label{sec-cont-T}

We now use the main theorem above to prove a more traditional formulation of the continuity of System T functionals, namely that any set-theoretic object $F:\NN^\NN\to\NN$ which is definable in System T is continuous in the following sense:
\begin{equation*}
\forall g\exists N\forall h(g=_N h\Rightarrow F(f)=F(g)),
\end{equation*}
where we write $g=_N h$ if $g(i)=h(i)$ for all $i<N$. We first set up a connection between $\bP^T$ and the usual formulation of System T as subset of the set theoretic type structure $\modset$. 
\begin{defi}
\label{def-semantic}
For each $\Gamma\vdash t:\rho$ in our target language $\bP^T$ we define the usual semantic interpretation $[\Gamma]\vdash [t]:[\rho]$ in $\modset$ as follows: $[\nat]:=\NN$ and $[\rho\to\tau]:=[\rho]\Rightarrow[\tau]$, and on terms:
\begin{gather*}
[x]\eta:=\eta(x) \ \ \ \ \ \ [\lambda x.t]\eta:=\mlambda a.[t]\eta\{x\mapsto a\} \ \ \ \ \ \ [ts]\eta:=[t]\eta([s]\eta)\\
[\zero]:=0 \ \ \ \ \ \ [\suc](n):=n+1 \\
[\rec_\rho](a,h,0):=a \ \ \ \ \ \ [\rec_\rho](a,h,n+1)=hn([\rec](a,h,n))
\end{gather*}
\end{defi}
This interpretation then eliminates any difference between our formulation of System T as a call-by-value language and those based on equational calculi: An object of $\modset$ is definable in System T precisely when it is of the form $[e]$ for some closed term $e$ in $\bP^T$. We now extend the above definition to incorporate our oracle symbol.
\begin{defi}
For each $g\in\NN^\NN$ and $\Gamma\vdash t:\rho$ in $\bP^T_g$, we define $[\Gamma]_g\vdash [t]_g:[\rho]$ by extending the clauses of Definition \ref{def-semantic} with
\begin{equation*}
[\alpha]_g(n):=g(n).
\end{equation*}
It is easy to prove that if $t$ is a term of $\bP^T_g$ which does not contain $\alpha$, then it is also a term of $\bP^T$ and moreover $[t]_g\eta=[t]\eta$ for any $\eta$.
\end{defi}
We now need a result which confirms that the pure interpretation $[\cdot]_g$ can be related to the semantic part of $\ds{\mt{\cdot}}_g$.
\begin{lem}
\label{lem-logical}
Define the auxiliary logical relation $\sim_\rho$ on $[\rho]\times \ds{\mt{\rho}}$ as follows:
\begin{equation*}
m\sim_\nat n:\Leftrightarrow m=_\NN n \ \ \ \ \ \ f\sim_{\rho\to\tau} g:\Leftrightarrow \forall a,b(a\sim_\rho b\Rightarrow fa\sim_\tau \pi_1(gb))
\end{equation*}
We write $\eta\sim_\Gamma \xi$ if $\eta(x)\sim_{\rho}\xi(x)$ for all $x:\rho\in \Gamma$. Then for any $g\in\NN^\NN$ and $\Gamma\vdash t$ in $\bP^T_g$, if $\eta\sim_\Gamma\xi$ then we have $[t]_g\eta\sim_\rho \pi_1(\ds{\mt{t}}_g\xi)$.
\end{lem}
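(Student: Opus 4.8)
The plan is to prove Lemma~\ref{lem-logical} by a routine induction on the structure of terms $\Gamma\vdash t$ in $\bP^T_g$, exactly mirroring the inductive definition of both $[\cdot]_g$ and $\ds{\mt{\cdot}}_g$. Throughout we fix $g\in\NN^\NN$ and assume $\eta\sim_\Gamma\xi$; we show $[t]_g\eta\sim_\rho\pi_1(\ds{\mt{t}}_g\xi)$. The point of the relation $\sim$ is that it simply forgets the monadic (cost) component $\pi_0$ of the metalanguage semantics and records only that the second component agrees extensionally with the pure set-theoretic interpretation; so the induction is essentially the proof that the writer-monad translation commutes with the identity translation once the cost is projected away.

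First I would handle the leaves. For a variable $x:\rho$, $[x]_g\eta=\eta(x)$ and $\ds{\mt{x}}_g\xi=(\epsilon,\xi(x))$, so $\pi_1(\ds{\mt{x}}_g\xi)=\xi(x)$, and the hypothesis $\eta\sim_\Gamma\xi$ gives exactly $\eta(x)\sim_\rho\xi(x)$. For the constructors, $\ds{\mt{\zero}}_g\xi=(\epsilon,\dss{\zero})$ with $\dss{\zero}=0=[\zero]_g$, and for $\suc$ one unwinds the $\lambda^\ast$ abbreviation: $\pi_1(\ds{\mt{\suc}}_g\xi)$ is the function $a\mapsto(\ldots,a+1)$ whose second projection applied to $a$ is $a+1$, matching $[\suc]_g(n)=n+1$ whenever $n\sim_\nat a$. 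For the oracle $\alpha$, $\pi_1(\ds{\mt{\alpha}}_g\xi)$ sends $n$ to $\pi_1(\dss{\alpha}_g(n))=\pi_1([n],g(n))=g(n)=[\alpha]_g(n)$, so $\alpha\sim_{\nat\to\nat}\pi_1(\ds{\mt{\alpha}}_g)$. The recursor $\rec_\rho$ is the only nontrivial symbol: its metalanguage interpretation $\dss{\rec}$ from Corollary~\ref{cor-cont-T} is defined by $\dss{\rec}(a,f,n)=([],a)$ for $n=0$ and $fn'\circ\dss{\rec}(a,f,n')$ for $n=n'+1$, and by the definition of $\circ$ in Lemma~\ref{lem-compapp} we have $\pi_1(\dss{\rec}(a,f,n+1))=\pi_1((\pi_1(fn'))(\pi_1(\dss{\rec}(a,f,n'))))$; so $\pi_1\circ\dss{\rec}$ satisfies exactly the recursion defining $[\rec_\rho]_g$ once we replace $f$ by $\pi_1\circ f$ and argue by an inner induction on $n\in\NN$ that $[\rec_\rho]_g(a',h',n)\sim_\rho\pi_1(\dss{\rec}(a,f,n))$ whenever $a'\sim a$, $h'\sim f$.

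For the two term-forming rules the induction is immediate. For application $ts$, the induction hypothesis gives $[t]_g\eta\sim_{\rho\to\tau}\pi_1(\ds{\mt{t}}_g\xi)$ and $[s]_g\eta\sim_\rho\pi_1(\ds{\mt{s}}_g\xi)$; unfolding $\ds{\mt{ts}}_g\xi$ one reads off that $\pi_1(\ds{\mt{ts}}_g\xi)=\pi_1\bigl((\pi_1(\ds{\mt{t}}_g\xi))(\pi_1(\ds{\mt{s}}_g\xi))\bigr)$, which is precisely what the definition of $\sim_{\rho\to\tau}$ applied to these two facts produces, matching $[ts]_g\eta=([t]_g\eta)([s]_g\eta)$. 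For abstraction $\lambda x.t$, we must show that for all $a\sim_\rho b$, $([\lambda x.t]_g\eta)(a)\sim_\tau\pi_1\bigl((\pi_1(\ds{\mt{\lambda x.t}}_g\xi))(b)\bigr)$; the left side is $[t]_g\eta\{x\mapsto a\}$ and the right side unwinds (through $\lambda^\ast x.(\cplus{\mt{t}_l},\mt{t}_r)$) to $\pi_1(\ds{\mt{t}}_g\xi\{x\mapsto b\})$, so the claim follows from the induction hypothesis applied to $t$ with the extended environments, using that $\eta\{x\mapsto a\}\sim_{\Gamma,x:\rho}\xi\{x\mapsto b\}$.

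The only step requiring genuine care is the recursor: one has to be precise about how the $\circ$ operation interacts with the $\pi_1$ projection, and run a separate numerical induction on $n$ inside the structural induction. Everything else is bookkeeping — unfolding the $\lambda^\ast$ macro and the definitions in Figures~\ref{fig-target-meta} and~\ref{fig-denotational} and observing that the cost components $\pi_0$ never obstruct the argument because $\sim$ ignores them. I expect no real obstacle beyond keeping the notation straight.
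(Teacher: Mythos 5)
Your proposal is correct and follows essentially the same route as the paper: a structural induction on terms in which $\sim$ discards the cost component, with the variable, constructor, oracle, application and abstraction cases handled by unfolding the translation, and the recursor handled by an inner induction on $n$ tracking how $\circ$ interacts with $\pi_1$. The only quibble is a harmless index slip in the recursor case (writing $\dss{\rec}(a,f,n+1)$ on the left but $n'$ on the right); otherwise the argument matches the paper's.
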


\begin{proof}
A simple induction on terms. For variables this follows directly. For abstraction, if $a\sim_\rho b$ then 
\begin{equation*}
[\lambda x.t]_g\eta(a)=[t]_g\eta\{x\mapsto a\}\sim_\tau \pi_1(\ds{\mt{t}}_g\xi\{x\mapsto b\})=\pi_1(\pi_1(\ds{\mt{\lambda x.t}}_g\xi(b))) 
\end{equation*}
and thus $[t]_g\eta\sim_{\rho\to\tau} \pi_1(\ds{\mt{\lambda x.t}}_g\xi)$. For application, since $[t]_g\eta\sim_{\rho\to\tau} \pi_1(\ds{\mt{t}}_g\xi)$ and $[s]_g\eta\sim_{\rho} \pi_1(\ds{\mt{s}}_g\xi)$ we have $[ts]_g\eta\sim_\rho \pi_1(\pi_1(\ds{\mt{t}}_g\xi)(\pi_1(\ds{\mt{s}}_g\xi)))=\pi_1(\ds{\mt{ts}}_g\xi)$. For the constants the result is trivial, and for the recursor we must show that
\begin{equation*}
a\sim_\rho a'\wedge h\sim_{\nat\to\rho\to\rho} h'\to [\rec_\rho](a,h,n)\sim_\rho \pi_1(\dss{\rec}_\rho(a',h',n))
\end{equation*}
for all $n\in\NN$, which is a simple induction appealing to a variant of the argument used for application. Finally, for the oracle symbol, we must show that
\begin{equation*}
[\alpha]_g(n)=\pi_1(\dss{\alpha}_g(n))=g(n)
\end{equation*}
for any $n\in\NN$, which is the case by definition.
\end{proof}

\begin{cor}
\label{cor-ad}
Suppose $e:(\nat\to\nat)\to\nat$ is a closed term of $\bP^T$. Then for any $g\in\NN^\NN$ we have $\pi_1(\ds{\mt{e\alpha}}_g)=[e](g)$.
\end{cor}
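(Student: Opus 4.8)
The plan is to read the statement off directly from Lemma~\ref{lem-logical}. Since $e:(\nat\to\nat)\to\nat$ is closed and $\alpha:\nat\to\nat$ is a function symbol of $\bP^T_g$, the application $e\alpha$ is a closed term of type $\nat$ in $\bP^T_g$. I would therefore instantiate Lemma~\ref{lem-logical} at $t:\equiv e\alpha$ with the empty context $\Gamma$, so that the hypothesis $\eta\sim_\Gamma\xi$ is vacuously satisfied. This gives $[e\alpha]_g\sim_\nat \pi_1(\ds{\mt{e\alpha}}_g)$, and by definition of $\sim_\nat$ this is exactly the equation $[e\alpha]_g=\pi_1(\ds{\mt{e\alpha}}_g)$.

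It then remains to identify $[e\alpha]_g$ with $[e](g)$. Unwinding the semantic interpretation of Definition~\ref{def-semantic} as extended to $\bP^T_g$, we have $[e\alpha]_g=[e]_g([\alpha]_g)$; and since $[\alpha]_g(n)=g(n)$ for all $n\in\NN$, the element $[\alpha]_g$ of $[\nat\to\nat]=\NN\Rightarrow\NN$ is precisely $g$. Finally, as $e$ is a term of $\bP^T$ it contains no occurrence of the oracle symbol $\alpha$, so by the remark following the definition of $[\cdot]_g$ we have $[e]_g=[e]$. Chaining these identities yields $[e\alpha]_g=[e](g)$, and hence $\pi_1(\ds{\mt{e\alpha}}_g)=[e](g)$.

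I do not expect any genuine obstacle here: the corollary is an immediate consequence of Lemma~\ref{lem-logical}, the only thing to be careful about being the bookkeeping that the semantic value of the closed application $e\alpha$ in the model $\modset$ is exactly $[e]$ applied to the oracle $g$, together with the harmless passage from $[\cdot]_g$ to $[\cdot]$ for the $\alpha$-free term $e$.
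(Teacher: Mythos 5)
Your proposal is correct and follows essentially the same route as the paper's own proof: both apply Lemma~\ref{lem-logical} to the closed term $e\alpha:\nat$ (where $\sim_\nat$ is equality), and then identify $[e\alpha]_g$ with $[e](g)$ via $[e\alpha]_g=[e]_g([\alpha]_g)$, $[\alpha]_g=g$, and $[e]_g=[e]$ for the $\alpha$-free term $e$. No gaps.
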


\begin{proof}
If $e:(\nat\to\nat)\to\nat$ is a closed term of $\bP^T$ then in particular it can be viewed as a term of $\bP^T_g$ which does not contain $\alpha$, and as such we have $[e]_g=[e]$. Since $[\alpha]_g=g$ by definition, by Lemma \ref{lem-logical} we have $[e\alpha]_g= \pi_1(\ds{\mt{e\alpha}}_g)$ and thus
\begin{equation*}
[e](g)=[e]_g([\alpha]_g)=[e\alpha]_g=\pi_1(\ds{\mt{e\alpha}}_g)
\end{equation*}
and we are done.
\end{proof}

\begin{thm}
\label{thm-cont}
Suppose that $F:\NN^\NN\to\NN$ is definable in System T. Then
\begin{equation*}
\forall g,h(g=_{\Phi(g)} h\Rightarrow F(g)=F(h))
\end{equation*}
where $\Phi$ is defined by
\begin{equation*}
\Phi(g):=\max\{ j\; : \; j\in\pi_0(\ds{\mt{e}}\circ([],\mlambda n.([n],gn)))\}+1
\end{equation*}
for some closed term $e$ representing $F$.
\end{thm}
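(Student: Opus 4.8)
\noindent The plan is to derive the statement as a quick consequence of Corollaries~\ref{cor-cont-T} and~\ref{cor-ad}, once we have unwound the denotation $\ds{\mt{e\alpha}}_g$ in terms of the operation $\circ$ of Lemma~\ref{lem-compapp}. Since $F$ is definable in System~T, the discussion following Definition~\ref{def-semantic} provides a closed term $e:(\nat\to\nat)\to\nat$ of $\bP^T$ with $F=[e]$; this $e$ contains no occurrence of $\alpha$, and it is the term we take to represent $F$ in the definition of $\Phi$. Viewing $e$ as a term of $\bP^T_g$, the first step is to check that for every $g\in\NN^\NN$
\[
\ds{\mt{e\alpha}}_g=\ds{\mt{e}}_g\circ\ds{\mt{\alpha}}_g=\ds{\mt{e}}\circ([],\mlambda n.([n],gn)),
\]
where the first equality is the identity $\ds{\mt{ts}}\xi=\ds{\mt{t}}\xi\circ\ds{\mt{s}}\xi$ already noted in the proof of Theorem~\ref{thm-sound}, and the second uses that $e$ is $\alpha$-free, so that $\ds{\mt{e}}_g$ does not depend on $g$ (we write $\ds{\mt{e}}$), together with $\ds{\mt{\alpha}}_g=(\epsilon,\mlambda n.\dss{\alpha}_g(n))=([],\mlambda n.([n],gn))$. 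Writing $(c,n_0):=\ds{\mt{e\alpha}}_g$, this shows $c=\pi_0(\ds{\mt{e}}\circ([],\mlambda n.([n],gn)))$, hence $\Phi(g)=\max\{j:j\in c\}+1$, and in particular every index $j$ occurring in the list $c$ satisfies $j<\Phi(g)$. (If $c$ is empty we set $\Phi(g):=0$; then $g=_c h$ holds for all $h$ and the conclusion below is trivial, so we may assume $c$ nonempty.)

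\noindent Next I would read off the two facts we need about $(c,n_0)$. By Corollary~\ref{cor-ad} we have $n_0=\pi_1(\ds{\mt{e\alpha}}_g)=[e](g)=F(g)$. By Corollary~\ref{cor-cont-T} (which establishes Theorem~\ref{thm-cont-main} for $\bP^T$ by interpreting the recursors), applied to the closed term $e$, we obtain for all $g,h\in\NN^\NN$ that $e\alpha\downarrow_g\num{F(g)}$ and
\[
g=_c h\ \Rightarrow\ e\alpha\downarrow_h\num{F(g)}.
\]

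\noindent Now suppose $g=_{\Phi(g)}h$. Since every index of $c$ is $<\Phi(g)$, agreement of $g$ and $h$ on all $i<\Phi(g)$ forces $g(i)=h(i)$ for every $i\in c$, i.e. $g=_c h$, whence $e\alpha\downarrow_h\num{F(g)}$ by the displayed implication. On the other hand, applying the same two corollaries with $h$ in place of $g$ gives $e\alpha\downarrow_h\num{F(h)}$, where $F(h)=[e](h)=\pi_1(\ds{\mt{e\alpha}}_h)$. Since evaluation in $\bP^T_h$ is deterministic (if $t\downarrow_h v$ and $t\downarrow_h v'$ then $v=v'$, by the argument following Figure~\ref{fig-bigstep}), we conclude $\num{F(g)}=\num{F(h)}$, hence $F(g)=F(h)$, which is the claim. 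The only step carrying genuine content is the identity $\ds{\mt{e\alpha}}_g=\ds{\mt{e}}\circ([],\mlambda n.([n],gn))$, obtained by unwinding the translation of an application together with the denotation of $\bt{\alpha}$; everything else is a direct assembly of Theorem~\ref{thm-cont-main}, Corollary~\ref{cor-ad} and determinism of the operational semantics, the only corner case being the empty cost list, where $F$ is constant.
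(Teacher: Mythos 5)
Your proposal is correct and follows essentially the same route as the paper's own proof: unwind $\ds{\mt{e\alpha}}_g=\ds{\mt{e}}\circ([],\mlambda n.([n],gn))$, apply Theorem \ref{thm-cont-main} (via Corollary \ref{cor-cont-T}) to get $e\alpha\downarrow_g\num{n}$ and $e\alpha\downarrow_h\num{n}$, identify $n$ with $F(g)$ via Corollary \ref{cor-ad}, and conclude by determinism of the operational semantics applied to $e\alpha\downarrow_h\num{m}$ with $m=F(h)$. Your explicit handling of the empty cost list is a small point the paper leaves implicit, but otherwise the arguments coincide.
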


\begin{proof}
If $F$ is definable in System T, there is some closed term $e:(\nat\to\nat)\to\nat$ of $\bP^T$ such that for all $f\in\NN^\NN$ we have $[e](f)=F(f)$. Observing that 
\begin{equation*}
\ds{\mt{e\alpha}}_g=\ds{\mt{e}}\circ\ds{\mt{\alpha}}_g=\ds{\mt{e}}\circ([],\mlambda n.([n],gn))
\end{equation*}
we have $g=_{\Phi(g)} h$ implies that $g=_{\pi_0(\ds{\mt{e\alpha}}_g)} h$. Thus by Theorem \ref{thm-cont-main}, which by Corollary \ref{cor-cont-T} applies to $\bP^T$, we have $e\alpha\downarrow_g\num{n}$ and $e\alpha\downarrow_h \num{n}$ for $n=\pi_1(\ds{\mt{e\alpha}}_g)$, and by Corollary \ref{cor-ad} we have $\pi_1(\ds{\mt{e\alpha}}_g)=[e](g)=F(g)$. Now, by Theorem \ref{thm-cont-main} again, we have $e\alpha\downarrow_h\num{m}$ where $m=\pi_1(\ds{\mt{e\alpha}}_h)$. But by uniqueness of derivations in $\bP^T_h$, we therefore have $\num{n}=\num{m}$ and therefore $n=m=\pi_1(\ds{\mt{e\alpha}}_h)=[e](h)=F(h)$ and thus $F(g)=F(h)$. 
\end{proof}

We also note that the term $\ds{\mt{e}}$ is, in turn, constructed using only basic operations of the lambda calculus, cartesian product and list operations, together with primitive recursion, and can thus also be defined in some suitable variant of System T.

This section demonstrates how our main result on continuity (Theorem \ref{thm-cont-main}) can be reformulated in terms of continuity of functionals $\NN^\NN\to\NN$. It goes without saying that our approach throughout this section could be extended to other systems, such as extensions of System T with bar recursion, though we don't go through any of the details here.

\section{A denotational cost semantics}
\label{sec-cost}

In our final application, we show how various denotational complexity semantics of functional languages can be re-obtained in a uniform way in our setting. As mentioned in the introduction, the static cost analysis of functional programs dates back to at least the 1980s \cite{Sands(1990.0),Shultis(1985.0)}, and has been explored more recently in a generalised categorical setting \cite{VanStone(2003.0)} and from the perspective of automating the analysis of synthesised programs \cite{Benzinger(2004.0)}, though naturally these papers constitute just a few representative examples of the very diverse literature on complexity in higher types.

A distinguishing feature shared by each of the works mentioned above is that they view the complexity of a higher-order functional as a higher-order object in its own right, namely a cost functional which not only tracks the number of steps it takes for a higher-order term $t$ to reduce to a normal form $\lambda x.u$, but encodes information about the cost of evaluating $(\lambda x.u)v$ for any input $v$, and so on. The synthesis of cost functionals from higher-order programs in this sense is readily accomplished in our setting by instantiating the monadic part of our translation as a simple step counting operation. In particular, the result of our translation is an object $\ds{\mt{e}}$ which is always given as recursive equation in our model which reflects the syntactic structure of the original term $e$. This recursive equation can then be solved to give a closed form expression for the cost of evaluating $e$ (see Example \ref{ex-search} for an illustration of this in the case of Spector's search functional).

However, we also show that different choices of our \emph{semantic} component lead to interesting characterisations of complexity. In particular, by combining the aforementioned cost operation with a form of majorizability similar to that discussed in Section \ref{sec-simple-maj}, we obtain in a uniform way the bounded cost semantics explored recently by Danner et al.~in \cite{DanLicRam(2015.0),DanPayRoy(2013.0)}.

On top of all this, we apply our translation not just to variants of System T, as is often the case in the literature, but to bar recursive extensions, for which soundness of our cost analysis requires us to appeal to semantic continuity principles as in Section \ref{sec-simple-ds}.

\subsection{Exact cost expressions for call-by-value languages}
\label{sec-cost-exact}

We begin by defining what we mean by the \emph{cost} of a closed term $e$ in our parametrised target language. We do this using annotated big step relations $e\downarrow^c v$ for $c\in\NN$, where intuitively, $e\downarrow^c v$ iff $e\downarrow v$ in $c$ rewrite steps. We define this formally in Figure \ref{fig-cost}.
\begin{figure}[t]
\begin{gather*}
v\downarrow^0_\rho v\mbox{ \ \ for values $v$} \\[2mm] 
\frac{r[u/x]\downarrow^c_\tau v}{(\lambda x.r)u\downarrow^{c+1}_\tau v}\mbox{ \ \ for $x:\rho\vdash r$ and $u:\rho$ a value} \\[2mm]
\frac{r\sigma\downarrow^c_\rho v}{fv_1\ldots v_n\downarrow^{c+1}_\rho v}\mbox{ \ \ for $v_1,\ldots,v_n=p_1\sigma,\ldots,p_n\sigma$ and $fp_1\ldots p_n\leadsto r$}\\[2mm]
 \frac{e\downarrow^{c_0}_{\rho\to\tau} u \ \ \ \ \ e'\downarrow^{c_1}_\rho v \ \ \ \ \ \ uv\downarrow^{c_2}_\tau w}{ee'\downarrow^{c_0+c_1+c_2}_\tau w}\mbox{ \ \ \ if one of $e$, $e'$ is not a value}
\end{gather*}
\caption{Cost semantics of target language}
\label{fig-cost}
\hrulefill
\end{figure}

Throughout this section our main illustrative example will be our list based variant of System T from Section \ref{sec-prog-lyst}, together with its bar recursive extension from Section \ref{sec-prog-spec}. As such, we formulate our soundness theorem (Theorem \ref{thm-exact}) in terms of an arbitrary target language over datatypes $\nat$ and $\nat^\ast$ with constructors $\zero$, $\suc$ $\emp$ and $\cons$, noting that both of the aforementioned languages follow as simple instances of this. Our monadic translation for exact costs is based on setting $C:=\NN$ with $D_\nat:=\NN$ and $D_{\nat^\ast}:=\NN^\ast$, and we define our logical relation via
\begin{equation*}
e\Downarrow^c_\rho v:\Leftrightarrow e\downarrow^c_\rho v \ \ \ \ \ \ \num{n}\wtl_\nat m:\Leftrightarrow n=m \ \ \ \ \ \ \num{a}\wtl_{\nat^\ast} b:\Leftrightarrow a=b
\end{equation*}
and thus $\btl_\rho$ becomes
\begin{equation*}
\begin{aligned}
e\btl_\rho (c,a)&:\Leftrightarrow (\exists v)(e\downarrow^c_\rho v\wedge v\wtl_\rho a)\\
u\wtl_{\rho\to\tau} f&:\Leftrightarrow (\forall v,a)(v\wtl_\rho a\Rightarrow uv\btl_\tau f(a)).
\end{aligned}
\end{equation*}
For our denotational semantics of terms, we define the monadic components by
\begin{equation*}
\epsilon:=0 \ \ \ \ \ \ \inc(c):=c+1 \ \ \ \ \ \ \comp(c_0,c_1,c_2):=c_0+c_1+c_2
\end{equation*}
and interpret our constructors in the obvious way, so that the interpretations are in particular compatible with $\wtl$. Note that it is clear from the definition of cost that $(\epsilon,\inc,\comp)$ is compatible with $\Downarrow$. Thus our main soundness proof (Theorem \ref{thm-sound}) results in the following characterisation of costs in higher types.
\begin{thm}
\label{thm-exact}
Suppose that for each function symbol $f$ in our target language there is a suitable interpretation $\dss{f}$ such that $fv_1\ldots v_k\btl \dss{f}a_1\ldots a_k$ whenever $v_i\wtl_{\rho_i}a_i$. Then for any closed $e:\rho$ we have
\begin{equation*}
e\downarrow^{\pi_0\ds{\mt{e}}}v
\end{equation*}
for some value $v:\rho$.
\end{thm}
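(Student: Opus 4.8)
The plan is to obtain the statement directly from the general soundness theorem, Theorem~\ref{thm-sound}, so that the only real work is to confirm its hypotheses under the present choice of parameters: $C := \NN$, $D_\nat := \NN$, $D_{\nat^\ast} := \NN^\ast$, the monadic triple $\epsilon := 0$, $\inc(c) := c + 1$, $\comp(c_0,c_1,c_2) := c_0 + c_1 + c_2$, and the relations $e \Downarrow^c_\rho v :\Leftrightarrow e \downarrow^c_\rho v$, $\num{n} \wtl_\nat m :\Leftrightarrow n = m$, $\num{a} \wtl_{\nat^\ast} b :\Leftrightarrow a = b$.

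First I would check that $(\epsilon,\inc,\comp)$ is compatible with $\Downarrow$ in the sense of Definition~\ref{def-compmon}. Since $\Downarrow^c$ is by definition the annotated reduction $\downarrow^c$ of Figure~\ref{fig-cost}, this amounts to matching the three clauses of Definition~\ref{def-compmon} against the rules of that figure: the value clause holds because $v \downarrow^0 v$ and $\epsilon = 0$; the $\beta$-clause holds because $\inc(c) = c + 1$; and the application clause holds because $\comp(c_0,c_1,c_2) = c_0 + c_1 + c_2$. The single point needing care is that Definition~\ref{def-compmon} states the application clause with no side condition, whereas Figure~\ref{fig-cost} restricts its application rule to the case where one of $e$, $e'$ is not a value; in the excluded case both $e$ and $e'$ are values, and since a value reduces only to itself and in zero steps one then has $c_0 = c_1 = 0$, so the required instance collapses to $uv \downarrow^{c_2} w \Rightarrow uv \downarrow^{0 + 0 + c_2} w$, which is trivial.

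Next I would note that the constructor interpretations, taken in the obvious way, are compatible with $\wtl$ — immediately, since $\wtl_\nat$ and $\wtl_{\nat^\ast}$ are just equality after the numeral coding — and that the hypothesis of the present theorem on the function symbols is verbatim the remaining hypothesis of Theorem~\ref{thm-sound}. All hypotheses of Theorem~\ref{thm-sound} are thus in place, and it follows that $\ds{\mt{\cdot}}$ is sound with respect to $\bnd$.

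Finally I would specialise soundness to a closed term $e:\rho$. Here the context $\Gamma$ is empty, so the premise $\sigma \wtl_\Gamma \xi$ of Definition~\ref{def-sound} holds vacuously, giving $e \bnd_\rho \ds{\mt{e}}$. Unfolding $\bnd_\rho$ yields a value $v$ with $e \Downarrow^{\pi_0 \ds{\mt{e}}}_\rho v$ and $v \wtl_\rho \pi_1 \ds{\mt{e}}$, and since $\Downarrow^c$ is by definition $\downarrow^c$ this is precisely $e \downarrow^{\pi_0 \ds{\mt{e}}} v$, as required. I expect no serious obstacle: the proof is essentially bookkeeping, and the only step warranting attention is the compatibility check above, where the side-condition mismatch between Definition~\ref{def-compmon} and Figure~\ref{fig-cost} is resolved by the determinism remark.
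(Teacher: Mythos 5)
Your proposal is correct and follows essentially the same route as the paper, which states Theorem~\ref{thm-exact} as an immediate consequence of Theorem~\ref{thm-sound} after asserting (without detail) that $(\epsilon,\inc,\comp)$ is compatible with $\Downarrow$ and that the constructor interpretations are compatible with $\wtl$. Your explicit treatment of the side condition in the application rule of Figure~\ref{fig-cost} --- resolved by noting that values reduce only to themselves in zero steps, so $c_0=c_1=0$ in the excluded case --- correctly fills in the one point the paper leaves implicit.
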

It is relatively straightforward to apply the metatheorem above to the full list based variant of System T, and even its extension with bar recursion. For our basic operators, we have e.g.~$\num{n}+\num{m}\btl_{\nat}(1,n+m)$ and $\ext\; \num{a}\; \num{n}\btl_{\nat}(1,\hat{a}_n)$, and so on.
\begin{defi}
For $c\in \NN$ and $a\in \NN\times\rho$ define $c\dplus a:=(c+\pi_0a,\pi_1a)\in\NN\times\rho$.
\end{defi}
\begin{lem}
\label{lem-listT-cost}
\begin{enumerate}[(a)]

\item We have $\fold\; v_1\; v_2\;\num{a}\btl\dss{\fold}(b,h,a)$ for any $v_1\wtl b$, $v_2\wtl h$ and $a\in\NN^\ast$, where $\dss{\fold}$ is defined by 
\begin{equation*}
\dss{\fold}(b,h,a):=1\dplus \begin{cases}(0,b) & \mbox{if $a=[]$}\\ hn\circ\dss{\fold}(b,h,a') & \mbox{if $a=a'\ast n$}.\end{cases}
\end{equation*}

\item We have $\sbar\; v_1\; v_2\; v_3\; \num{a}\btl \dss{\sbar}(\omega,g,h,a)$ for any $v_1\wtl \omega$, $v_2\wtl g$, $v_3\wtl h$ and $a\in\NN^\ast$. where $\dss{\sbar}$ is defined by
\begin{equation*}
\begin{aligned}
\dss{\sbar}(\omega,g,h,a):=&4+\omega_0(\mlambda i.(1,\hat{a}_i))\\
&\dplus\begin{cases}g(a) & \mbox{if $\omega_1(\mlambda i.(1,\hat{a}_i))<|a|$}\\ ha\circ (0,\mlambda n\; . \; 1\dplus \dss{\sbar}(\omega,g,h,a\ast n)) & \mbox{otherwise}\end{cases}
\end{aligned}
\end{equation*} 
for $\omega_i(f):=\pi_i(\omega(f))$.
\end{enumerate}
\end{lem}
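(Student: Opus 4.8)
The plan is to verify the two claimed bounds by unfolding the operational semantics of $\fold$ and $\sbar$ (as captured by the derived rules in Sections \ref{sec-prog-lyst} and \ref{sec-prog-spec}), tracking the rewrite steps precisely, and matching them against the denotational definitions of $\dss{\fold}$ and $\dss{\sbar}$. In both cases the key technical tool is Lemma \ref{lem-compapp} (exact cost version), which tells us how $\btl$ behaves under application: if $e \btl_{\rho\to\tau} f$ and $e' \btl_\rho a$ then $ee' \btl_\tau f\circ a$, where $\circ$ combines the two cost components via $\comp$, i.e.\ by addition in this instance. We will also repeatedly use that the constructor and operator interpretations are compatible with $\wtl$, so that e.g.\ $v_2\; \num{n} \btl_{\rho\to\rho}(0,hn)$ whenever $v_2\wtl h$ (since a value applied to a numeral reduces in zero further steps before the body of $v_2$ runs, with the application cost accounted by $\circ$).

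For part (a) I would argue by induction on $a\in\NN^\ast$. When $a=[]$, the derived rule gives $\fold\; v_1\; v_2\; \emp\downarrow_\rho v_1$ in exactly one rewrite step (the firing of the $\fold$ defining equation), so $\fold\; v_1\; v_2\; \num{[]}\btl (1,b) = 1\dplus(0,b)$, matching the first branch. For the step case $a = a'\ast n$, the derived rule for $\fold$ on $\num{a}::\num{n}$ decomposes the computation into: evaluating $v_2\;\num{n}\downarrow u$, evaluating $\fold\; v_1\; v_2\;\num{a'}\downarrow v$, and then $uv\downarrow w$ — all wrapped in the single firing of the $\fold$ rule, hence the extra $+1$. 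By the induction hypothesis $\fold\; v_1\; v_2\;\num{a'}\btl \dss{\fold}(b,h,a')$, by compatibility $v_2\;\num{n}\btl(0,hn)$ (viewed with the right cost), and Lemma \ref{lem-compapp} then yields $v_2\;\num{n}\;(\fold\; v_1\; v_2\;\num{a'})\btl hn\circ\dss{\fold}(b,h,a')$; prepending the cost of the $\fold$ reduction gives $1\dplus(hn\circ\dss{\fold}(b,h,a'))$, exactly the second branch. One must be slightly careful that the $\btl$-decomposition implicit in the derived rule exactly mirrors the three-way split $c_0+c_1+c_2$ plus the constructor-firing cost; this bookkeeping is the only thing to check.

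Part (b) is the same idea but with more moving parts, and it is where the main obstacle lies. Here I would \emph{not} induct on $a$ directly, since bar recursion is not structurally recursive; instead, since we are in the termination setting of Section \ref{sec-simple-ds} (the continuous model $\modcont$), we already know every $\sbar\; v_1\; v_2\; v_3\;\num{a}$ reduces to a value, so we can argue by induction on the \emph{length of that reduction}, or equivalently invoke the well-foundedness established in Lemma \ref{lem-barterm}. The delicate accounting is the constant $4 + \omega_0(\mlambda i.(1,\hat a_i))$: unfolding the $\sbar$ rule into $\sbar_1$ and then into one of its two branches costs a fixed number of rewrite steps (the firing of $\sbar$, the evaluation of $\ext\;\num a$, the comparison $<$ against $\len\;\num a$, and the firing of $\sbar_1$ — hence the $4$), while evaluating $v_1(\ext\;\num a)$ contributes the cost component $\omega_0(\mlambda i.(1,\hat a_i))$ since $v_1\wtl\omega$ and $\ext\;\num a\btl(0,\mlambda i.(1,\hat a_i))$ — note the inner $(1,\cdot)$ records the one-step cost of each $\ext\;\num a\;\num i$ call. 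The branch condition must be read off the \emph{semantic} value $\omega_1(\mlambda i.(1,\hat a_i)) = \pi_1(\omega(\mlambda i.(1,\hat a_i)))$, and one checks this agrees with the operational guard $k \lessgtr |a|$ via $\wtl$-soundness. In the first branch one then applies $v_2\;\num a\btl(0,g(a))$ and Lemma \ref{lem-compapp}; in the second branch the subtle point is producing the function bound $\lambda x\,.\,\sbar\; v_1\; v_2\; v_3\;(\num a::x)\btl_{\nat\to\nat}(0,\mlambda n\,.\,1\dplus\dss{\sbar}(\omega,g,h,a\ast n))$ — the $1\dplus$ accounts for the $\cons$-reduction building $\num a::\num n$, and establishing this function-type bound requires the inductive hypothesis at each argument $n$, exactly as in the proof of Lemma \ref{lem-barterm}. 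Feeding this into $v_3\;\num a\;(\cdots)$ via Lemma \ref{lem-compapp} and prepending the fixed overhead yields the claimed $\dss{\sbar}$. The main risk is simply miscounting rewrite steps, so the proof should carefully cite the derived rules and the exact-cost rules of Figure \ref{fig-cost} at each stage rather than hand-waving the constants.
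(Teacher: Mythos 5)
Your part (a) is essentially the paper's proof: induction on the length of $a$, the derived cost rules for $\fold$, and Lemma \ref{lem-compapp} for the application step. One small bookkeeping slip in part (b): the $1\dplus$ in front of $\dss{\sbar}(\omega,g,h,a\ast n)$ is the cost of the $\beta$-step $(\lambda x.\,\sbar\;v_1\;v_2\;v_3\;(\num{a}::x))\,\num{n}\downarrow^{c+1}\num{m}$, not of a $\cons$-reduction --- $\num{a}::\num{n}$ is already a value. That is harmless.

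The genuine gap is your induction measure for part (b). Induction on the length of the reduction of $\sbar\;v_1\;v_2\;v_3\;\num{a}$ does not go through: in Case 2 you must establish $\lambda x.\,\sbar\;v_1\;v_2\;v_3\;(\num{a}::x)\wtl_{\nat\to\nat}\mlambda n.\,1\dplus\dss{\sbar}(\omega,g,h,a\ast n)$, and the function-type clause of $\wtl$ quantifies over \emph{all} $n\in\NN$, not only the arguments that $v_3$ actually supplies during the run. For an $n$ that is never queried, the reduction of $\sbar\;v_1\;v_2\;v_3\;(\num{a}::\num{n})$ is not a subderivation of that of $\sbar\;v_1\;v_2\;v_3\;\num{a}$ and need not be shorter (if $v_3$ ignores its function argument, parent and child have comparable reduction lengths), so the inductive hypothesis is unavailable precisely where it is needed. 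Nor can you simply ``invoke the well-foundedness established in Lemma \ref{lem-barterm}'': the tree governing the recursion here is cut off by the condition $\omega_1(\mlambda i.(1,\hat{a}_i))<|a|$ for the cost-model functional $\omega_1:(\NN\to\NN\times\NN)\to\NN$, a different object from the $\omega$ of Section \ref{sec-simple-ds}, so its well-foundedness must be established afresh. What the paper does --- and what your proposal is missing --- is to prove only the local step (if the claim holds at $a\ast n$ for every $n$, then it holds at $a$) and then close the argument by contradiction: assuming failure at $a$, dependent choice on the metalevel yields an infinite branch $b$ along which the claim fails at every initial segment, and continuity of $\omega_1$ in $\modcont$ applied to $\mlambda i.(1,(a\ast b)_i)$ produces an $L$ with $\omega_1(\mlambda i.(1,(\widehat{a\ast\initSeg{b}{L}})_i))<|a|+L$, so Case~1 applies there and the claim holds after all. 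This continuity-plus-dependent-choice step is the essential content of part (b) and cannot be replaced by induction on reduction length.
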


\begin{proof}
Part (a) is a simple induction on the length of $a$, observing that the derived rules for $\fold$ give rise to the following cost rules:
\begin{equation*}
\fold_\rho\; v_1\; v_2\; \varepsilon\downarrow^1_\rho v_1 \ \ \ \ \ \ \frac{v_2\; \num{n}\downarrow^{c_0}_{\rho\to\rho} u \ \ \ \ \ \ \fold_\rho\; v_1\; v_2\; \num{a}\downarrow^{c_1}_\rho v \ \ \ \ \ \ uv\downarrow^{c_2}_\rho w}{\fold_\rho\; v_1\; v_2\; \num{a}::\num{n}\downarrow^{1+c_0+c_1+c_2}_\rho w}
\end{equation*}
For part (b), we work in the continuous model $\modcont$ as in Section \ref{sec-simple-ds} and follows the same strategy as the proof of Lemma \ref{lem-barterm}. By keeping track of costs we see that that the derived operational semantics of bar recursion give rise to the following derived cost rules:
\begin{gather*}
\frac{v_1(\ext\; \num{a})\downarrow^{c_0}_\nat \num{k} \ \ \ \ \ \ k<|a| \ \ \ \ \ \ v_2\; \num{a}\downarrow^{c_1}_\nat \num{n}}{\sbar\; v_1\; v_2\; v_3\; \num{a}\downarrow^{4+c_0+c_1} \num{n}}\\[2mm]
\frac{v_1(\ext\; \num{a})\downarrow^{c_0}_\nat \num{k} \ \ \ \ \ \ k\geq |a| \ \ \ \ \ \ v_3\; \num{a}\; (\lambda x\; . \; \sbar\; v_1\; v_2\; v_3\; (\num{a}::x))\downarrow^{c_1}_\nat \num{n}}{\sbar\; v_1\; v_2\; v_3\; \num{a}\downarrow^{4+c_0+c_1}_\nat \num{n}}
\end{gather*}
To formally derive each of these rules involves a careful analysis of the defining rules of the bar recursor constants, along with the cost semantics as set out in Figure \ref{fig-cost}. For the first, note that from the defining equation
\begin{equation*}
\sbar_1\; f\; g\; h\; xs\; \zero\leadsto g\; xs
\end{equation*}
we can infer $\sbar_1\; v_1\; v_2\; v_3\; \num{a}\; \zero\downarrow^{1+c_1}\num{n}$ from $v_2\; \num{a}\downarrow^{c_1}\num{n}$. Now if $k<|a|$ it follows that $(\num{k}<\len\; \num{a})\downarrow^2\zero$, since this involves the reduction of two elementary operations, and thus $(v_1(\ext\; \num{a})<\len\; \num{a})\downarrow^{2+c_0}\zero$ follows from $v_1(\ext\; \num{a})\downarrow^{c_0}_\nat \num{k}$. Putting this all together we obtain
\begin{equation*}
\sbar_1\; v_1\; v_2\; v_2\; \num{a}\; (v_1(\ext\; \num{a})<\len\; \num{a})\downarrow^{3+c_0+c_1}\num{n}
\end{equation*}
and from the defining equation 
\begin{equation*}
\sbar\; f\; g\; h\; xs\leadsto \sbar_1\; f\; g\; h\; xs\; (f(\ext\; xs)<\len\; xs)
\end{equation*}
we derive $\sbar\; v_1\; v_2\; v_3\; \num{a}\downarrow^{4+c_0+c_1}\num{n}$. The second derived rule is established in a similar manner. Now, continuing with the main proof, we first note that $\ext\; \num{a}\wtl_{\nat\to\nat}\mlambda i.(1,\hat{a}_i)$, and thus since $v_1\wtl \omega$ it follows that $v_1(\ext\; \num{a})\btl_\nat \omega(\mlambda i.(1,\hat{a}_i))$ and thus $v_1(\ext\; \num{a})\downarrow^{c_0}_\nat \num{k}$ with 
\begin{equation*}
(c_0,k):=\omega(\mlambda i.(1,\hat{a}_i)).
\end{equation*}
As in Lemma \ref{lem-barterm}, we now assume inductively that 
\begin{equation*}
\sbar\; v_1\; v_2\; v_3\; \num{a}::\num{n}\btl \dss{\sbar}(\omega,g,h,a\ast n)
\tag{$\ast$}
\end{equation*}
for some fixed $a\in\NN^\ast$ and all $n\in\NN$, and seek to establish $\sbar\; v_1\; v_1\; v_2\; \num{a}\btl \dss{\sbar}(\omega,g,h,a)$. There are two cases to deal with.
\begin{itemize}

\item Case 1: $k<|a|$. Observing that from $v_2\wtl g$ we have $v_2\; \num{a}\downarrow^{c_1} \num{n}$ for $(c_1,n)=g(a)$, it follows that $\sbar\; v_1\; v_2\; v_3\; \num{a}\downarrow^{4+c_0+c_1} \num{n}$ where $(4+c_0+c_1,n)=(4+c_0)\dplus g(a)$.

\item Case 2: $k\geq |a|$. By our assumption $(\ast)$ we have
\begin{equation*}
(\sbar\; v_1\; v_2\; v_3\; \num{a}::x)[\num{n}/x]\downarrow^c\num{m}
\end{equation*}
for $(c,m):=\dss{\sbar}(\omega,g,h,a::n)$ and thus
\begin{equation*}
(\lambda x\; . \; \sbar\; v_1\; v_2\; v_3\; \num{a}::x)\num{n}\downarrow^{c+1}\num{m}
\end{equation*}
from which we can infer
\begin{equation*}
\lambda x\; . \; \sbar\; v_1\; v_2\; v_3\; \num{a}::x\btl_{\nat\to\nat} (0,\mlambda n\; . \; 1\dplus\dss{\sbar}(\omega,g,h,a::n)).
\end{equation*}
Since $v_3\wtl h$ we then have $v_3\; \num{a}\btl ha$ and thus by Lemma \ref{lem-compapp} it follows that 
\begin{equation*}
v_3\; \num{a}\; (\lambda x\; . \; \sbar\; v_1\; v_2\; v_3\; \num{a}::x)\downarrow^{c_1}\num{n}
\end{equation*}
for $(c_1,n)= ha\circ  (0,\mlambda n\; . \; 1\dplus\dss{\sbar}(\omega,g,h,a::n) )$. Therefore by the second derived rule we have $\sbar\; v_1\; v_2\; v_3\; \num{a}\downarrow^{4+c_0+c_1} \num{n}$ where $(4+c_0+c_1,n)=(4+c_0)\dplus ha\circ  (0,\mlambda n\; . \; 1\dplus\dss{\sbar}(\omega,g,h,a::n) )$. 
\end{itemize}
Putting both cases together we see that $\sbar\; v_1\; v_1\; v_2\; \num{a}\btl \dss{\sbar}(\omega,g,h,a)$ holds whenever $(\ast)$ holds. We now suppose as in Lemma \ref{lem-barterm} that it is not the case that $\sbar\; v_1\; v_2\; v_3\; \num{a}\btl\dss{\sbar}(\omega,g,h,a)$, and construct an infinite sequence $b$ on the metalevel such that 
\begin{equation*}
\neg (\sbar\; v_1\; v_2\; v_3\; (\num{a}::\num{b}_0::\ldots ::\num{b}_{l-1})\btl \dss{\sbar}(\omega,g,h,a\ast\initSeg{b}{l}))
\end{equation*}
for all $l\in\NN$. This time we apply a continuity argument to $\omega_1:(\NN\to\NN\times\NN)\to\NN$ and the sequence $\beta:=\mlambda i.(1,(a\ast b)_i)$, by which there exists some $N$ such that 
\begin{equation*}
\omega_1(\initSeg{\beta}{|a|+N}\ast\gamma)=\omega_1(\beta)
\end{equation*}
for all $\gamma:\NN\to\NN\times\NN$. In particular setting $L:=\max\{N,\omega_1(\beta)+1-|a|\}$ we have
\begin{equation*}
\omega_1(\mlambda i.(1,(\widehat{a\ast\initSeg{b}{L}})_i)=\omega_1(\beta)<|a|+L.
\end{equation*}
Thus for input $a\ast\initSeg{b}{L}$ Case 1 applies and we have 
\begin{equation*}
\sbar\; v_1\; v_2\; v_3\; (\num{a}::\num{b_0}::\ldots ::\num{b}_{L-1})\btl \dss{\sbar}(\omega,g,h,a\ast\initSeg{b}{L}),
\end{equation*}
a contradiction. Therefore it must be the case that $\sbar \; v_1\; v_2\; v_3\; \num{a}\btl\dss{\sbar}(\omega,g,h,a)$ for all $a\in\NN^\ast$, which completes the proof.
\end{proof}

\begin{cor}
Let $e:\rho$ be a closed term definable in System T plus bar recursion (in the sense of Section \ref{sec-prog-spec}). Then $e\downarrow^{\pi_0\ds{\mt{e}}} v$ for some value $v$.
\end{cor}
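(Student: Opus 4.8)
The plan is to obtain the corollary directly from Theorem~\ref{thm-exact}, which reduces the claim to producing, for every function symbol $f:\rho_1\to\ldots\to\rho_k\to\rho$ of the bar recursive list language of Section~\ref{sec-prog-spec}, an interpretation $\dss f$ satisfying $fv_1\ldots v_k\btl\dss f a_1\ldots a_k$ whenever $v_i\wtl_{\rho_i}a_i$. The symbols to be checked are the elementary operators $+,\times,<,\len$, the extension operator $\ext$, the fold recursors $\fold_\rho$, and the bar recursion constants $\sbar$ and $\sbar_1$.

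First I would dispose of the elementary symbols. Each of $+,\times,<,\len,\ext$ is defined by a countable family of one-step rewrite rules on numerals and numeral lists, giving rise to cost-annotated reductions such as $\num m+\num n\downarrow^1\num k$ for $k=m+n$ and $\ext\;\num a\;\num n\downarrow^1\num{\hat{a}_n}$. Taking $\dss{+}(m,n):=(1,m+n)$, $\dss{\ext}(a,n):=(1,\hat{a}_n)$, $\dss{\len}(a):=(1,|a|)$, and $\dss{\times},\dss{<}$ defined analogously, the required relation $fv_1\ldots v_k\btl\dss f a_1\ldots a_k$ is immediate from $\num m\wtl m$ and $\num a\wtl a$. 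For the fold recursors $\fold_\rho$ the condition is exactly Lemma~\ref{lem-listT-cost}(a) (with $\dss{\fold}$ as displayed there), and for $\sbar$ it is exactly Lemma~\ref{lem-listT-cost}(b) (with $\dss{\sbar}$ as displayed). For the auxiliary constant $\sbar_1$ one runs the argument of Lemma~\ref{lem-listT-cost}(b) one layer down: unwinding the defining equations of $\sbar_1$ together with the cost semantics of Figure~\ref{fig-cost} produces derived cost rules feeding into the same two-case analysis on whether $k<|a|$, and one chooses $\dss{\sbar_1}$ so that the cost accounting of $\sbar$ factors through it and reproduces $\dss{\sbar}$. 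This is the ``simple adaptation'' already flagged in Section~\ref{sec-simple-ds}, now carried out with cost annotations.

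With every function symbol covered, Theorem~\ref{thm-exact} yields $e\downarrow^{\pi_0\ds{\mt{e}}}v$ for some value $v$, as required. The only non-routine ingredient is the bar recursion case, and there the obstacle is exactly the one already resolved inside Lemma~\ref{lem-listT-cost}(b): the relation for $\sbar$ cannot be proved by a structural induction but instead needs wellfoundedness of the recursion tree, obtained by combining the inductive step-case with metalevel dependent choice and a continuity argument applied to $\omega_1:(\NN\to\NN\times\NN)\to\NN$ in $\modcont$. Adding the cost component causes no new difficulty, since the cost is accumulated additively along the evaluation while the branching of the recursion remains governed solely by the semantic value $\omega_1(\beta)$; the same remark applies to the adaptation needed for $\sbar_1$.
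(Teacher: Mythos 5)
Your proposal is correct and follows essentially the same route as the paper: the corollary is a direct instance of Theorem~\ref{thm-exact}, with the elementary operators handled by the cost-$1$ interpretations the paper sketches (e.g.\ $\num{n}+\num{m}\btl_{\nat}(1,n+m)$, $\ext\;\num{a}\;\num{n}\btl_{\nat}(1,\hat{a}_n)$) and the recursors covered by Lemma~\ref{lem-listT-cost}(a) and (b). Your explicit note that $\sbar_1$ also needs an interpretation, obtained by running the argument of Lemma~\ref{lem-listT-cost}(b) one layer down, is a point the paper leaves implicit but is handled correctly.
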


\begin{exa}
\label{ex-search}
Consider the so-called Spector search functional $\Phi:((\NN\to\NN)\to\NN)\to (\NN\to\NN)\to\NN$ given by
\begin{equation*}
\Phi(\omega,\beta,a)=\begin{cases}0 & \mbox{if $\omega(\hat{a})<|a|$}\\ 1+\Phi(\omega,\beta,a::\beta(|a|)). & \mbox{otherwise}\end{cases}
\end{equation*}
This functional was introduced by Howard in \cite{Howard(1968.0)}, though he attributes it to Kreisel. The idea is that $\Phi(\omega,\beta,[])$ forms a bound on how far we need to look to find some $N$ such that $\omega(\widehat{\initSeg{\beta}{N}})<N$, confirming that such an $N$ can be computed using bar recursion.

The search functional can be defined in our target language via the closed term
\begin{equation*}
\spec:=\lambda x,y,z\; . \; \sbar\; x\; (\lambda x'.\zero)(\lambda z',p\; . \; \suc(p(y(\len\; z'))))\; z.
\end{equation*}
Then given $v_1\btl \omega$ and $v_2\btl g$, it is not too difficult to work out that 
\begin{equation*}
\spec\; v_1\; v_2\; \num{a}\btl_{\nat} \phi(\omega,g,a)
\end{equation*}
where $\phi$ is defined as
\begin{equation*}
\phi(\omega,g,a)=\dss{\sbar}(\omega,\mlambda b'.(1,0),\mlambda b\; . \; (1,\mlambda f\; . \; (4+g_0(|b|)+f_0(g_1(|b|)),1+f_1(g_1(|b|)))),a).
\end{equation*}
Unwinding the defining equations of $\dss{\sbar}$ we obtain
\begin{equation*}
\begin{aligned}
\phi(\omega,g,a)&=4+\omega_0(\mlambda i.(1,\hat{a}_i))\dplus \begin{cases}(1,0) & \mbox{if $\omega_1(\mlambda i.(1,\hat{a}_i))<|a|$}\\ r & \mbox{otherwise}\end{cases}\\
\mbox{where }r&:=(6+g_0(|a|)+\phi_0(\omega,g,a\ast g_1(|a|)),1+\phi_1(\omega,g,a\ast g_1(|a|)))
\end{aligned}
\end{equation*}
We now solve the complexity component of these equations to find a closed form expression for the complexity of the search functional. For any function $f:\NN\to\NN$ and $n\in\NN$ define $f^n:\NN\to\NN\times\NN$ by
\begin{equation*}
f^n:=\mlambda i.(1,\widehat{\initSeg{f}{n}_i}).
\end{equation*}
We now consider $\phi_0(\omega,g,\initSeg{g_1}{n})$. In the case that $\omega_1(g_1^n)<n$ we have
\begin{equation*}
\phi_0(\omega,g,\initSeg{g_1}{n})=4+\omega_0(g_1^n)+1
\end{equation*}
and otherwise we have
\begin{equation*}
\phi_0(\omega,g,\initSeg{g_1}{n})=4+\omega_0(g_1^n)+6+g_0(n)+\phi_0(\omega,g,\initSeg{g_1}{n+1})
\end{equation*}
Putting these together we obtain
\begin{equation*}
\phi_0(\omega,g,\initSeg{g_1}{n})=5+\omega_0(g_1^n)+\begin{cases}0 & \mbox{if $\omega_1(g_1^n)<n$}\\ 5+g_0(n)+\phi_0(\omega,g,\initSeg{g_1}{n+1}) & \mbox{otherwise}\end{cases}
\end{equation*}
We can then use this to expand $\phi_0(\omega,g,[])$ until we reach a point $N\in\NN$ such that $\omega_1(g_1^N)<N$. More precisely, if $N$ is the first such point, we have
\begin{equation*}
\phi_0(\omega,g,[])=10 N + 5 + \sum_{i=0}^N \omega_0(g^i_1)+\sum_{i=0}^{N-1}g_0(i).
\end{equation*}
This then forms a closed expression for the cost $c$ of evaluating $\spec\; v_1\; v_2\; \varepsilon$ whenever $v_1\btl \omega$ and $v_2\btl g$. Note that this cost expression would be formally defined using the Spector search function itself, just as the cost expression for primitive recursion is also a primitive recursive functional.

\end{exa}

\subsection{Bounded costs}
\label{sec-cost-bound}

In the final part of the paper, we show how we can modify our denotational cost semantics to provide \emph{upper bounds} on the cost of derivations, along the lines of \cite{DanPayRoy(2013.0)}, which is generalised to a richer language with recursion over arbitrary datatypes in \cite{DanLicRam(2015.0)}. The main motivation for looking for upper bounds (rather than a precise measure of complexity) is that it allows us to abstract away certain parts of the program (for instance, treating all number inputs as the same) and thereby obtain simplified expressions for the cost of programs. We now no longer denote a numeral $\num{n}$ by the corresponding natural number $n$, but assign all numerals a uniform size $1$. Lists are then interpreted by a single natural number that represents an upper bound on their length. In this way, we sacrifice precision for a simplified upper bound on the cost of running a program.

To be more precise, while we still have $C=\NN$ we alter the semantic side of our denotational semantics by setting $D_\nat=D_{\nat^\ast}=\NN$, and adapt our logical relation so that
\begin{equation*}
\begin{aligned}
e\Downarrow^c v&:\Leftrightarrow (\exists c')(e\downarrow^{c'} v\wedge c'\leq c)\\
\num{n}\wtl_\nat m&:\Leftrightarrow 1\leq m\\
\num{a}\wtl_{\nat^\ast}m &:\Leftrightarrow |a|\leq m
\end{aligned}
\end{equation*}
and thus $\btl_\rho$ becomes
\begin{equation*}
\begin{aligned}
e\btl_\rho (c,a)&:\Leftrightarrow (\exists v,c')(e\downarrow^{c'}_\rho v\wedge c'\leq c\wedge v\wtl_\rho a)\\
u\wtl_{\rho\to\tau} f&:\Leftrightarrow (\forall v,a)(v\wtl_\rho a\Rightarrow uv\btl_\tau f(a)).
\end{aligned}
\end{equation*}
It is easy to see that $\epsilon:=0$, $\inc(c):=c+1$ and $\comp(c_0,c_1,c_2):=c_0+c_1+c_2$ as defined in Section \ref{sec-cost-exact} is also compatible with our new definition of $\Downarrow$. However, this time we must interpret our constructors differently: We set $\dss{\zero}:=1$ and $\dss{\suc}(n):=1$, so that $\ds{\mt{\num{n}}}=(0,1)$ for all $n\in\NN$, and for lists we define $\dss{\varepsilon}:=0$ and $\dss{::}(m,n):=m+1$, so that $\ds{\mt{\num{a}}}=(0,|a|)$ for all $a\in\NN^\ast$. In this way, we ensure that our constructors are compatible with $\wtl$, which then gives rise to the following metatheorem:
\begin{thm}
\label{thm-bound}
Suppose that each function symbol $f$ of our target language is interpreted by some suitable $\dss{f}$ which satisfies $fv_1\ldots v_k\btl \dss{f}a_1\ldots a_k$ whenever $v_i\wtl_{\rho_i}a_i$. Then for any closed $e:\rho$ we have
\begin{equation*}
e\downarrow^{c}v\mbox{ \; for some \; }c\leq \pi_0\ds{\mt{e}}
\end{equation*}
and some value $v:\rho$.
\end{thm}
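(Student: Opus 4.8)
The plan is to derive Theorem \ref{thm-bound} as a direct instance of the general soundness theorem (Theorem \ref{thm-sound}), in exactly the same manner as Theorem \ref{thm-exact} was obtained. Concretely, I would check the three hypotheses of Theorem \ref{thm-sound} for the present choice of parameters ($C = \NN$, $D_\nat = D_{\nat^\ast} = \NN$, the new $\Downarrow$, and the new constructor interpretations), and then unfold the resulting conclusion $e \btl_\rho \ds{\mt{e}}$ for closed $e$.

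First I would verify that $(\epsilon,\inc,\comp)$ with $\epsilon := 0$, $\inc(c) := c+1$, $\comp(c_0,c_1,c_2) := c_0+c_1+c_2$ is compatible with the new relation $\Downarrow$ in the sense of Definition \ref{def-compmon}. This is immediate from the cost rules in Figure \ref{fig-cost}: a value reduces to itself with exact cost $0 = \epsilon$; if the premise reduction holds with some exact cost $c' \leq c$, then the concluding reduction holds with exact cost $c'+1 \leq \inc(c)$ in the abstraction case, and with $c_0'+c_1'+c_2' \leq \comp(c_0,c_1,c_2)$ in the application case, where $c_i' \leq c_i$. Second, I would check that the constructor interpretations $\dss{\zero} := 1$, $\dss{\suc}(n) := 1$, $\dss{\varepsilon} := 0$, $\dss{::}(m,n) := m+1$ are compatible with $\wtl$: since $\num{n} \wtl_\nat m$ means $1 \leq m$ and $\num{a} \wtl_{\nat^\ast} m$ means $|a| \leq m$, each constructor clause reduces to a one-line arithmetic check (for instance $|a \ast n| = |a|+1 \leq m+1$ for the cons case, given $|a| \leq m$). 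Third, the remaining hypothesis of Theorem \ref{thm-sound} — that every function symbol $f : \rho_1 \to \ldots \to \rho_n \to \rho$ satisfies $fv_1 \ldots v_n \btl_\rho \dss{f} a_1 \ldots a_n$ whenever $v_i \wtl_{\rho_i} a_i$ — is precisely the standing assumption of Theorem \ref{thm-bound}.

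Granting these three facts, Theorem \ref{thm-sound} tells us that $\ds{\mt{\cdot}}$ is sound with respect to $\btl$, i.e. $\sigma \wtl_\Gamma \xi$ implies $t\sigma \btl_\rho \ds{\mt{t}}\xi$. For a closed term $e : \rho$ the context $\Gamma$ is empty and the (unique) environments are empty, so we obtain $e \btl_\rho \ds{\mt{e}}$. Unfolding the definition of $\btl_\rho$ together with the new clause $e \Downarrow^c v :\Leftrightarrow (\exists c')(e \downarrow^{c'} v \wedge c' \leq c)$, this says exactly that there are a value $v$ and a cost $c'$ with $e \downarrow^{c'} v$, $c' \leq \pi_0 \ds{\mt{e}}$, and $v \wtl_\rho \pi_1 \ds{\mt{e}}$, which is the claim.

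There is essentially no obstacle internal to Theorem \ref{thm-bound}: it is a routine instantiation, and the only content is the (easy) verification that the modified $\Downarrow$ and the modified constructor interpretations meet the abstract compatibility conditions, the former now using ``$\leq$'' in place of equality on costs. The genuine work lies in the subsequent development — exhibiting, for each concrete language, the interpretations $\dss{f}$ required by the hypothesis, as is done for $\fold$ and $\sbar$ in the bounded-cost analogue of Lemma \ref{lem-listT-cost} — but that is downstream of Theorem \ref{thm-bound} rather than part of its proof.
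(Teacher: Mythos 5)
Your proposal is correct and follows exactly the route the paper takes: the paper gives no explicit proof of Theorem \ref{thm-bound}, instead noting in the preceding paragraph that $(\epsilon,\inc,\comp)$ remains compatible with the modified $\Downarrow$ and that the new constructor interpretations are compatible with $\wtl$, so that the theorem is a direct instance of Theorem \ref{thm-sound}. Your spelled-out verifications of these compatibility conditions (using $\leq$ in place of equality on costs, and the arithmetic checks for the constructors) are precisely the details the paper leaves implicit.
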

We now demonstrate how this metatheorem can be applied to our list based variant of System T. In order to do this we need to generalise the usual maximum operator between two natural numbers to arbitrary types.
\begin{defi}
\label{def-max}
For types $\rho$ of our target language, define $\vee_\rho:\ds{\mt{\rho}}\times\ds{\mt{\rho}}\to \ds{\mt{\rho}}$ inductively by
\begin{equation*}
\begin{aligned}
m\vee_\delta n&:=\max\{m,n\} \mbox{ for $\delta=\nat,\nat^\ast$}\\
f\vee_{\rho\to\tau} g&:=\mlambda a\; . \; (\max\{f_0a,g_0a\},f_1a\vee_\tau g_1a).
\end{aligned}
\end{equation*}
\end{defi}
\begin{lem}
\label{lem-max}
For all types $\rho$, if $u\wtl_\rho a$ then $u\wtl_\rho a\vee_\rho b$ and $u\wtl_\rho b\vee_\rho a$ for any $b\in\ds{\mt{\rho}}$.
\end{lem}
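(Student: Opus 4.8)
The plan is to proceed by induction on the structure of the type $\rho$, following the inductive definition of $\vee_\rho$ given in Definition \ref{def-max}. Since the statement is symmetric in the sense that $b \vee_\rho a$ and $a \vee_\rho b$ differ only by the obvious commutativity of $\max$ at base types (which propagates through the inductive clause), it suffices to prove the first claim, namely that $u \wtl_\rho a$ implies $u \wtl_\rho a \vee_\rho b$; the second then follows either by the same argument with the roles of $a,b$ swapped, or by first observing $a \vee_\rho b = b \vee_\rho a$ by a trivial side induction.

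\textbf{Base cases.} For $\rho = \nat$, suppose $\num{n} \wtl_\nat a$, i.e. $1 \leq a$. Then $a \vee_\nat b = \max\{a,b\} \geq a \geq 1$, so $\num{n} \wtl_\nat a \vee_\nat b$. The case $\rho = \nat^\ast$ is identical, using that $\num{a} \wtl_{\nat^\ast} m$ means $|a| \leq m$ and $\max\{m,n\} \geq m \geq |a|$. (Here one uses that the only values of type $\nat$, resp. $\nat^\ast$, are of the form $\num{n}$, resp. $\num{a}$, as noted in Section \ref{sec-prog-lyst}.)

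\textbf{Inductive step.} For $\rho \to \tau$, suppose $u \wtl_{\rho\to\tau} f$ and let $b \in \ds{\mt{\rho\to\tau}}$. We must show $u \wtl_{\rho\to\tau} f \vee_{\rho\to\tau} b$, that is, for every value $v \in \val\rho$ and $a \in \ds{\mt\rho}$ with $v \wtl_\rho a$, we have $uv \bnd_\tau (f \vee_{\rho\to\tau} b)(a)$. Unwinding the definition of $\vee_{\rho\to\tau}$, we have $(f \vee_{\rho\to\tau} b)(a) = (\max\{f_0 a, b_0 a\},\, f_1 a \vee_\tau b_1 a)$. From $u \wtl_{\rho\to\tau} f$ and $v \wtl_\rho a$ we get $uv \bnd_\tau f(a)$, i.e. there is a value $w$ with $uv \downarrow^{c'} w$, $c' \leq \pi_0(f(a)) = f_0 a$, and $w \wtl_\tau \pi_1(f(a)) = f_1 a$. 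Now $c' \leq f_0 a \leq \max\{f_0 a, b_0 a\} = \pi_0((f \vee_{\rho\to\tau} b)(a))$, and by the induction hypothesis applied at type $\tau$ (with the element $b_1 a$), $w \wtl_\tau f_1 a$ implies $w \wtl_\tau f_1 a \vee_\tau b_1 a = \pi_1((f \vee_{\rho\to\tau} b)(a))$. Hence $uv \bnd_\tau (f \vee_{\rho\to\tau} b)(a)$, as required.

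\textbf{Main obstacle.} There is no genuine obstacle here; the proof is a routine structural induction. The only point requiring a little care is keeping the bookkeeping straight: at function type one must correctly unfold $\bnd_\tau$ in the bounded-cost setting of Section \ref{sec-cost-bound} (where $e \bnd_\rho (c,a)$ asserts the existence of a value reducing in cost \emph{at most} $c$), and then observe that \emph{both} components are monotone under $\vee$ — the cost component by monotonicity of $\max$ applied to the reduction cost, and the semantic component by the induction hypothesis at the smaller type. The fact that $\bnd_\tau$ only demands an upper bound on cost, rather than an exact match, is precisely what makes the argument go through: replacing $f$ by $f \vee b$ only ever loosens the bound.
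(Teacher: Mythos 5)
Your proposal is correct and follows essentially the same route as the paper's own proof: a structural induction on types, with the base cases handled by monotonicity of $\max$ and the arrow case by combining the cost bound $c'\leq f_0a\leq\max\{f_0a,b_0a\}$ with the induction hypothesis applied to $w\wtl_\tau f_1a$ at the smaller type. The paper likewise proves only one of the two claims and dismisses the other as symmetric, so there is nothing to add.
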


\begin{proof}
Induction on types. We only prove $u\wtl a\vee b$ because the other way round is identical. For base types $\num{n }\wtl m$ implies $1\leq m$, but since then $1\leq \max\{m,k\}$ we have $\num{n}\wtl m\vee_\nat k$. Similarly for lists: $\num{a}\wtl_{\nat^\ast} m$ implies $|a|\leq m\leq\max\{m,k\}$ and thus $\num{a}\wtl_{\nat^\ast} m\vee_{\nat^\ast} k$.

For function types, suppose that $u\wtl_{\rho\to\tau} f$, which means that for any $v\wtl_\rho a$ we have $uv\downarrow^c w$ for some $c,w$ with $c\leq f_0a$ and $w\wtl_\tau f_1a$. But $c\leq \max\{f_0a,g_0a\}$ and by the induction hypothesis $w\wtl_\tau f_1a\vee_\tau g_1a$, and since $v,a$ were arbitrary we have $u\wtl_{\rho\to\tau} f\vee_{\rho\to\tau} g$. 
\end{proof}

\begin{lem}
\label{lem-boundfold}
We have $\fold\; v_1\; v_2\; \num{a}\btl \dss{\fold}(b,h,n)$ for any $v_1\wtl b$, $v_2\wtl h$ and $a\in\NN^\ast$ with $|a|\leq n$, where $\dss{\fold}$ is defined by
\begin{equation*}
\dss{\fold}(a,h,n)=1\dplus\begin{cases}(0,a) & \mbox{if $n=0$}\\ (h1\circ\dss{\fold}_0(a,h,n'),a\vee_\rho h1\circ\dss{\fold}_1(a,h,n')) & \mbox{if $n=n'+1$}\end{cases}
\end{equation*}
\end{lem}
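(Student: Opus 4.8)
The plan is to prove the statement by induction on $n$, keeping $v_1,b,v_2,h$ fixed throughout (with $v_1\wtl b$ and $v_2\wtl h$), and showing that for every $n$ and every $a\in\NN^\ast$ with $|a|\leq n$ we have $\fold\; v_1\; v_2\; \num{a}\btl\dss{\fold}(b,h,n)$. The proof will reuse verbatim the derived cost rules for $\fold$ established in the proof of Lemma \ref{lem-listT-cost}(a), namely $\fold\; v_1\; v_2\; \emp\downarrow^1 v_1$, and, from premises $v_2\;\num{k}\downarrow^{c_0}u$, $\fold\; v_1\; v_2\; \num{a'}\downarrow^{c_1}v$, $uv\downarrow^{c_2}w$, the conclusion $\fold\; v_1\; v_2\; (\num{a'}::\num{k})\downarrow^{1+c_0+c_1+c_2}w$. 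I will also use that in the bounded semantics $\num{k}\wtl 1$ for every $k\in\NN$.

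For the base case $n=0$, $|a|\leq 0$ forces $a=[]$, and $\fold\; v_1\; v_2\; \emp\downarrow^1 v_1$ together with $v_1\wtl b$ gives $\fold\; v_1\; v_2\; \emp\btl(1,b)=\dss{\fold}(b,h,0)$. For the step $n=n'+1$, assume the claim at $n'$ and take $a$ with $|a|\leq n'+1$. Write $(c,w):=h1\circ\dss{\fold}(b,h,n')$, so that $\dss{\fold}(b,h,n'+1)=(1+c,\,b\vee_\rho w)$. If $a=[]$, then $\fold\; v_1\; v_2\; \emp\downarrow^1 v_1$ with $1\leq 1+c$, and since $v_1\wtl b$ Lemma \ref{lem-max} gives $v_1\wtl b\vee_\rho w$, so $\fold\; v_1\; v_2\; \emp\btl\dss{\fold}(b,h,n'+1)$. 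If $a=a'\ast k$, then $|a'|\leq n'$, so the induction hypothesis gives $\fold\; v_1\; v_2\; \num{a'}\btl\dss{\fold}(b,h,n')$; moreover $v_2\wtl h$ together with $\num{k}\wtl 1$ gives $v_2\;\num{k}\btl_{\rho\to\rho}h1$. By Lemma \ref{lem-compapp} (whose hypothesis that $(\epsilon,\inc,\comp)$ is compatible with $\Downarrow$ is recorded in Section \ref{sec-cost-bound}) we obtain $(v_2\;\num{k})(\fold\; v_1\; v_2\; \num{a'})\btl_\rho h1\circ\dss{\fold}(b,h,n')=(c,w)$. Unfolding $\btl$, there are $w'$ and $c'\leq c$ with $(v_2\;\num{k})(\fold\; v_1\; v_2\; \num{a'})\downarrow^{c'}w'$ and $w'\wtl w$; by the cost semantics of Figure \ref{fig-cost} this decomposes into the premises of the derived cost rule for $\fold$ with costs summing to $c'$, whence $\fold\; v_1\; v_2\; (\num{a'}::\num{k})\downarrow^{1+c'}w'$. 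Since $1+c'\leq 1+c$, and by Lemma \ref{lem-max} also $w'\wtl b\vee_\rho w$, this is exactly $\fold\; v_1\; v_2\; \num{a}\btl\dss{\fold}(b,h,n'+1)$.

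The only point requiring genuine care is the interplay between the two forms of slack in the bounded semantics: the cost component of $\btl$ is merely an upper bound, and $n$ is merely an upper bound on $|a|$, so at each stage the derivation may in fact have stopped at $\emp$ and returned a value bounded by $b$ rather than recursing. This is precisely why $\dss{\fold}$ takes the $\vee_\rho$ of the base-case value with the recursive value, and why Lemma \ref{lem-max} must be invoked in both sub-cases to subsume the actual output under $b\vee_\rho w$. Once that is set up correctly, the remainder is a mechanical combination of the derived cost rules for $\fold$ with the composition lemma \ref{lem-compapp}, exactly as in the exact-cost case treated in Lemma \ref{lem-listT-cost}(a).
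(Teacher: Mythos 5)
Your proof is correct and follows essentially the same route as the paper's: induction on $n$, splitting the step case on whether $\num{a}$ is empty or $\num{a'}::\num{k}$, using $\num{k}\wtl_\nat 1$ with Lemma \ref{lem-compapp} for the recursive call and Lemma \ref{lem-max} to subsume the actual value under $b\vee_\rho w$ in both sub-cases. Your explicit remarks on why the $\vee_\rho$ and the double slack (on cost and on $|a|$) are needed are accurate but only make explicit what the paper leaves implicit.
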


\begin{proof}
Induction on $n$. If $n=0$ then we must have $\num{a}=[]$, and since $v_1\downarrow^0 v_1\wtl_\rho a$ we have $\fold\; v_1\; v_2\; []\downarrow^1 v_1\wtl a$ and thus $\fold\; v_1\; v_2\; []\btl_\rho (1,a)$.

For $n=n'+1$ there are two possibilities. Either $\num{a}=[]$ and $\fold\; v_1\; v_2\; []\downarrow^1_\rho v_1\wtl a$ as before, and since $1\leq 1\dplus h1\circ\dss{\fold}_0(a,h,n')$ and (by Lemma \ref{lem-max}) $v_1\wtl a\vee_\rho h1\circ\dss{\fold}_1(a,h,n')$ we're done. Otherwise $\num{a}=\num{a'}::\num{m}$. By the induction hypothesis we have $\fold\; v_1\; v_2\; \num{a}'\btl_\rho \dss{\fold}(a,h,n')$, and since $\num{m}\wtl_\nat 1$ and thus $v_2\; \num{m}\btl_{\rho\to\rho} h1$ we have (by Lemma \ref{lem-compapp}) $v_2\; \num{m}\; (\fold\; v_1\; v_2\; \num{a'})\btl h1\circ\dss{\fold}(a,h,n')$ which is just
\begin{equation*}
v_2\; \num{m}\; (\fold\; v_1\; v_2\; \num{a}')\downarrow^{c'} w\wedge c'\leq c\wedge w\wtl_\rho b
\end{equation*}
for $(c,b)=h1\circ\dss{\fold}(a,h,n')$, and therefore
\begin{equation*}
\fold\; v_1\; v_2\; \num{a}'::\num{m}\downarrow^{c'+1} w\wedge c'+1\leq c+1\wedge w\wtl_\rho b.
\end{equation*}
Thus $\fold\; v_1\; v_2\; \num{a}'::\num{m}\wtl\dss{\fold}(a,h,n'+1)$ follows from another application of Lemma \ref{lem-max}, and we're done.
\end{proof}
As a result of the above lemma, we obtain soundness of the translation for our list based variant of System T, which is analogous to Corollary 3 of \cite{DanPayRoy(2013.0)}:
\begin{cor}
Let $e:\rho$ be a closed term of our list-based variant of System T. Then $e\downarrow^c v$ for some value $v$ and $c\leq \pi_0\ds{\mt{e}}$.
\end{cor}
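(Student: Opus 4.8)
The plan is to obtain this corollary as a direct instance of Theorem~\ref{thm-bound}. That theorem has three hypotheses: compatibility of $(\epsilon,\inc,\comp)$ with $\Downarrow$, compatibility of the constructor interpretations with $\wtl$, and the bounding condition $fv_1\ldots v_k\btl\dss{f}a_1\ldots a_k$ for each function symbol $f$. The first two have already been checked in the discussion preceding Theorem~\ref{thm-bound} (with numerals assigned uniform size $1$ and each list assigned an upper bound on its length), so the only work left is to exhibit suitable denotations $\dss{f}$ and verify the bounding condition for the function symbols of the list-based variant of System~T, namely the folds $\fold_\rho$ and the basic operators $+,\times,<:\nat\to\nat\to\nat$ and $\len:\nat^\ast\to\nat$.

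For the basic operators I would take $\dss{+}$, $\dss{\times}$, $\dss{<}$ to be the curried constant functions of type $\NN\Rightarrow\NN\Rightarrow\NN\times\NN$ returning $(1,1)$, and $\dss{\len}:\NN\Rightarrow\NN\times\NN$ the constant function returning $(1,1)$. To verify, say, $\num{m}+\num{n}\btl_\nat\dss{+}\,a\,b=(1,1)$ whenever $\num{m}\wtl_\nat a$ and $\num{n}\wtl_\nat b$, it suffices to note that the defining rule $\num{m}+\num{n}\leadsto\num{k}$ (with $k=m+n$) gives $\num{m}+\num{n}\downarrow^1\num{k}$ in the cost semantics of Figure~\ref{fig-cost}, that $1\leq 1$, and that $\num{k}\wtl_\nat 1$ holds because in the bounded semantics $\num{k}\wtl_\nat m\Leftrightarrow 1\leq m$. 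The cases of $\times$, of $<$ (using $\num{m}<\num{n}\downarrow^1 v$ with $v\in\{\zero,\num{1}\}$) and of $\len$ (using $\len\;\num{a}\downarrow^1\num{k}$ with $k=|a|$) are identical.

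For the folds, essentially all of the work is already contained in Lemma~\ref{lem-boundfold}: it supplies the denotation $\dss{\fold}$ --- which, being defined by an ordinary recursion on its third (natural-number) argument, is a legitimate element of both $\modset$ and $\modcont$ --- and proves $\fold\;v_1\;v_2\;\num{a}\btl\dss{\fold}(b,h,n)$ whenever $v_1\wtl b$, $v_2\wtl h$ and $|a|\leq n$. The one observation needed is that this is exactly the instance of the bounding condition that Theorem~\ref{thm-bound} asks for: every value of type $\nat^\ast$ is of the form $\num{a}$, and $\num{a}\wtl_{\nat^\ast}n$ unfolds precisely to $|a|\leq n$, so the clause ``$v_3\wtl_{\nat^\ast}n$ for values $v_3$'' is literally the hypothesis of Lemma~\ref{lem-boundfold}. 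With the bounding condition verified for every function symbol, Theorem~\ref{thm-bound} immediately yields $e\downarrow^c v$ for some value $v$ and some $c\leq\pi_0\ds{\mt{e}}$, which is the claim.

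I do not expect a genuine obstacle here, as the substance is carried by Lemma~\ref{lem-boundfold} and the compatibility checks already performed. The only points requiring care are bookkeeping ones: making sure each $\dss{f}$ is presented at exactly the type $\ds{\mt{\rho_1}}\Rightarrow\cdots\Rightarrow\ds{\mt{\rho_n}}\Rightarrow\NN\times\ds{\mt{\rho}}$ demanded by the translation of function symbols in Figure~\ref{fig-target-meta}, and that the ``one rewrite step'' annotations on the operator rules in Figure~\ref{fig-cost} are precisely what justify the constant bound $1$ on both components of $\dss{+}$, $\dss{\times}$, $\dss{<}$ and $\dss{\len}$.
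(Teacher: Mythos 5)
Your proposal is correct and matches the paper's (implicit) argument: the corollary is obtained by instantiating Theorem~\ref{thm-bound}, with the compatibility checks already done in the surrounding text, Lemma~\ref{lem-boundfold} discharging the bounding condition for $\fold_\rho$ (via exactly the observation that $\num{a}\wtl_{\nat^\ast}n$ unfolds to $|a|\leq n$), and the constant interpretations $(1,1)$ handling the basic operators. The paper simply states the corollary as ``a result of the above lemma,'' so your extra detail on $+,\times,<,\len$ is a harmless and accurate filling-in rather than a departure.
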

Naturally, our approach can also be applied to arbitrary rewrite systems whose function symbols have a suitable interpretation, such as forms of recursion over more general data structures, as in \cite{DanLicRam(2015.0)}. Though the usual formulation of bar recursion does not have a bounded semantics of this kind, we could consider finite variants along the lines of \cite{EOP(2011.0),OliPow(2012.2)}. 

\section{Conclusion}
\label{sec-conc}

We have introduced a general monadic translation acting on higher-order functional languages, which combines both a monadic component and a semantic component, where the latter could play the role of a normal denotational semantics, or alternatively something more interesting, such a variant of the majorizability relation. Applications of our translation included a proof that functionals of type level two expressible by closed terms in our language are continuous, together with various denotational cost semantics for functional languages. The emphasis throughout was less on obtaining new results, and more on demonstrating that ideas from a range of different areas - from proof theory to static program analysis - can be brought together under the same framework. Nevertheless, as a side product we presented for the first time a cost analysis of Spector's variant of bar recursion.

So far, our work only applies to functional languages whose function symbols give rise to terminating computations. An obvious next step would be to incorporate \emph{partiality} into our setting, allowing us to reason about potentially non-terminating computations. In the context of normalization via denotational semantics (discussed here in Section \ref{sec-simple-ds}), this has been explored in more generality by Berger \cite{Berger(2006.0)}, where a term $e$ is shown to be strongly normalizing if $\ds{e}\neq \bot$, where now $\ds{\cdot}$ represents a so-called \emph{strict} denotational semantics. This approach also appeals to the notion of a \emph{stratified} rewrite system, where function symbols $f$ are labelled with natural numbers in order to track the number of times they can be rewritten. A similar extension of Danner et al's complexity framework to arbitrary PCF programs is given in \cite{Kim(2016.0)}. It would be interesting to see if an approach along these lines, using stratified rewrite systems together with a partial denotational semantics, could be used to extend our framework to arbitrary PCF programs, without requiring the user to first prove that function symbols have a suitable interpretation.

There is also, naturally, the prospect of working with other monads and modelling other evaluation strategies, such as parallel computation. However, our simple call-by-value framework based on the writer monad is already rich enough to reason about extensional properties such as majorizability and continuity, and in addition allows us to characterise a variety of cost measures for higher order functional programs.

\section*{Acknowledgment}
  \noindent I am extremely grateful to the anonymous referee for their insightful comments and careful reading of the paper, which led to a much improved final version.


\end{document}